\tikzset{node distance=4cm, auto}
\newtheorem{proposition}{Proposition}[section]
\newtheorem{lemma}[proposition]{Lemma}
\newtheorem{theorem}[proposition]{Theorem}
\newtheorem{corollary}[proposition]{Corollary}
\theoremstyle{definition}
\newtheorem{assumption}[proposition]{Assumption}
\theoremstyle{remark}
\newtheorem{remark}{Remark}[section]
\newcommand{\dif}{\mathrm{d}}
\newcommand{\sgn}{\operatorname{sgn}}
\newcommand{\E}{\mathbb E}
\renewcommand{\P}{\mathbb P}
\newcommand{\Kclipp}{K_{\text{clipp}}}
\newcommand{\Ksnapp}{K_{\text{snapp}}}
\title{Transient regime of piecewise deterministic Monte Carlo algorithms}
\author{Sanket Agrawal\thanks{Department of Statistics, University of Warwick, UK.
Email: \href{mailto:sanket.Agrawal@warwick.ac.uk}{sanket.Agrawal@warwick.ac.uk}}, 
Joris Bierkens\thanks{Delft Institute of Applied Mathematics, TU Delft, NL.
Email: \href{mailto:joris.bierkens@tudelft.nl}{joris.bierkens@tudelft.nl}},
Kengo Kamatani\thanks{Institute of Statistical Mathematics, Tokyo, JP.
Email: \href{mailto:kamatani@ism.ac.jp}{kamatani@ism.ac.jp}}, 
and Gareth O.\ Roberts\thanks{Department of Statistics, University of Warwick, UK.
Email: \href{mailto:gareth.O.Roberts@warwick.ac.uk}{gareth.O.Roberts@warwick.ac.uk}}}
\date{}
\begin{document}
\maketitle

\begin{abstract}
Piecewise Deterministic Markov Processes (PDMPs) such as the Bouncy Particle Sampler and the Zig-Zag Sampler, have gained attention as continuous-time counterparts of classical Markov chain Monte Carlo. We study their transient regime under convex potentials, namely how trajectories that start in low-probability regions move toward higher-probability sets. Using fluid-limit arguments with a decomposition of the generator into fast and slow parts, we obtain deterministic ordinary differential equation descriptions of early-stage behaviour. The fast dynamics alone are non-ergodic because once the event rate reaches zero it does not restart. The slow component reactivates the dynamics, so averaging remains valid when taken over short micro-cycles rather than with respect to an invariant law.

Using the expected number of jump events as a cost proxy for gradient evaluations, we find that for Gaussian targets the transient cost of PDMP methods is comparable to that of random-walk Metropolis. For convex heavy-tailed families with subquadratic growth, PDMP methods can be more efficient when event simulation is implemented well. Forward Event-Chain and Coordinate Samplers can, under the same assumptions, reach the typical set with an order-one expected number of jumps. For the Zig-Zag Sampler we show that, under a diagonal-dominance condition, the transient choice of direction coincides with the solution of a box-constrained quadratic program; outside that regime we give a formal derivation and a piecewise-smooth update rule that clarifies the roles of the gradient and the Hessian. These results provide theoretical insight and practical guidance for the use of PDMP samplers in large-scale inference.
\end{abstract}

\medskip
\noindent\textbf{Keywords:} Piecewise Deterministic Markov Processes; Nonreversible MCMC; Fluid limits; Bouncy Particle Sampler; Zig-Zag Sampler.

\noindent\textbf{MSC 2020:} Primary 65C40; Secondary 60J25.

\section{Introduction}

Piecewise Deterministic Markov Processes (PDMPs), such as the Bouncy Particle Sampler and  the Zig-Zag Sampler, have recently attracted significant attention as alternatives to traditional Markov chain Monte Carlo (MCMC) methods \cite{PhysRevE.85.026703, Bouchard2018, Bierkens2019}. These samplers are appealing for large-scale Bayesian computation due to their non-reversible dynamics and their potential to explore high-dimensional targets more efficiently \cite{andrieu2021hypocoercivity, bierkens2018high, 10.3150/24-BEJ1807}.

In this study, we investigate the transient regime of PDMP-based samplers under convex target potentials. We focus on the transient regime, namely the behaviour of processes initialised in low-probability regions and how they move toward higher-probability sets. Understanding this transition enables a direct comparison with conventional MCMC methods in terms of their convergence rates. Our approach is rooted in the framework of fluid limits, where Markov processes are approximated by ordinary differential equations in an appropriate scaling regime.

The study of fluid limits in Markov chains dates back to \cite{Feller1939} which studies the convergence of a birth-death process to the Malthus model. Later, a unified theoretical framework was developed by \cite{Kurtz1970}. Since then, the approach has been applied to various stochastic systems, such as queueing networks, population dynamics, chemical kinetics, and epidemiological models. This approach has also been applied to investigating the performance of MCMC. In particular, \cite{Christensen2005} showed that in high-dimensional Metropolis-Hastings, the sequence of potential-function values initially evolves along a nearly deterministic trajectory, with stochastic fluctuations becoming significant only after this transient regime. Further work by \cite{Fort2008} applied the fluid limit analysis not only to the transient regime but also to long-term ergodic behaviour. 

Recent advances in high-dimensional asymptotics have further connected Monte Carlo dynamics with fluid limits.  For instance, \cite{Christensen2005, jourdain2014optimal, jourdain2015meanfield} rigorously characterised the transient regime of both the random-walk Metropolis and the Langevin algorithms as the dimension tends to infinity.  More recently, \cite{kuntz2018nonstationary} extended these ideas to the non-stationary phase of the Langevin algorithm under Gaussian perturbations in an infinite-dimensional Hilbert-space setting.
Other related works include
\cite{Beskos2013}, which analysed the scaling limit of Hamiltonian Monte Carlo toward randomised Hamiltonian dynamics, and \cite{Deligiannidis2021}, which studied the convergence properties of the Bouncy Particle Sampler to the same class of processes. Although these papers did not focus on the transient regime, they highlight the relevance of fluid limit analysis in understanding the scaling properties and efficiency of modern sampling algorithms.

Our work belongs to this line of research and extends fluid-limit methods to study the transient behaviour of PDMP-based samplers. 
In this work, we focus on the fixed-dimensional transient regime, following \cite{Fort2008}, in which both the coordinate process  and the potential‐function values evolve deterministically according to an ODE possibly with jumps. This regime sits between the stationary  regime, where both levels exhibit stochastic fluctuations, and the high-dimensional fluid-limit regime, where the potential‐function trajectories become deterministic but the underlying coordinate dynamics remain stochastic. 

From a theoretical standpoint, we analyse the fluid limit by the averaging principle for Markov processes. Concretely, 
for a family of Markov processes with generators $\{\mathcal{L}^\varepsilon  \}$ indexed by a continuous parameter $\varepsilon >0$,
we decompose the generator as $\mathcal{L}^\varepsilon = \varepsilon^{-1} \mathcal{L}_0 + \mathcal{L}_1$ with $\varepsilon \to 0$. The operator $\mathcal{L}_0$ of an ergodic Markov process governs fast dynamics that are averaged out, while the remaining slow dynamics are captured by $\mathcal{L}_1$ assuming that some components are averaged out by the invariant distribution associated with $\mathcal{L}_0$. This approach, which can be traced back at least to \cite{khasminskii1968averaging}, has been widely studied in the literature, including in numerous Monte Carlo applications such as \cite{Sherlock2015, beskos2018, Vandecasteele2023}. 

As an application of the averaging principle,  the problems considered in this paper are `fully-coupled'  in the sense that the fast dynamics also depends on the slow dynamics. 
Also, our problem deviates from the standard averaging principle because the leading operator $\mathcal{L}_0$, which governs the fast dynamics, does not generate  an ergodic Markov process. In particular, $\mathcal{L}_0$ can drive the intensity function to zero, but cannot restore it once it is zero. By contrast, the slow-dynamics operator $\mathcal{L}_1$ can `turn on' the intensity again, thereby compensating for $\mathcal{L}_0$’s lack of ergodicity. As a result, despite $\mathcal{L}_0$ not being ergodic on its own, the full system still allows for a modified version of the usual averaging argument, owing to the properties of the slow dynamics.

This compensatory effect arises from the convexity of the potential function. Consequently, for a non-convex target distribution, the averaging principle does not adequately describe the limiting dynamics, and the process can reach higher-probability regions with relatively few jumps. See Remarks \ref{remark:non-convex-bps} and \ref{remark:non-convex-zz} for further details.

Our analysis leads to the following observations on computational efficiency. Throughout, we measure cost by the number of jump events required to reach the high-probability region, since, under an ideal implementation, the expected jump count is, up to a constant factor, proportional to the number of gradient evaluations in simulation (see Subsection~\ref{subsec:comp-cost}).

\begin{itemize}
    \item[(i)] 
For Gaussian target distributions, the computational efficiency of the Bouncy Particle Sampler and the Zig-Zag Sampler is comparable to that of the random-walk Metropolis algorithm. Notably, when the tails of the target distribution are heavier than Gaussian, the PDMP-based samplers outperform random-walk Metropolis if implementing these methods efficiently, such as in \cite{Sutton02102023, andral2024automated}. 
\item[(ii)] 
Moreover, the Bouncy Particle Sampler becomes more efficient when the refreshment rate is properly tuned. In this algorithm, there are two types of jump events: bounces, where the velocity is reflected at level sets of the target density, and refreshments, where the velocity is randomly resampled to maintain ergodicity. The performance improves when bounce and refreshment events occur at similar frequencies. This balance‐of‐events criterion is also suggested in \cite{bierkens2018high}. 
\item[(iii)] Additionally, we find that employing the Forward Event-Chain Sampler \cite{Michel2017} and the Coordinate Sampler \cite{Wu2020CoordinateSampler}, significantly improves computational efficiency, achieving an $O(1)$ number of jump events to enter the central (high-density) region of the target. Our proof proceeds by establishing an exponential drift condition in which the usual `small set' is replaced by this central region, thereby guaranteeing rapid approach to the typical set.
\item[(iv)] 
Lastly, we observe that during the transient phase, the direction of motion in the Zig-Zag Sampler is determined by solving a constrained optimisation problem. More precisely, the process follows a smooth path until it reaches a boundary where the gradient of the potential is zero, at which point it selects its direction by solving an optimisation problem governed by the Hessian and the constraint $|v| \leq 1$. 
\end{itemize}

Our paper is structured as follows. In Section~\ref{sec:pdmp}, we present the averaging principle that underpins all subsequent results. Section~\ref{sec:fluid_bps} investigates Bouncy Particle-family samplers in three regimes. 
In its low-refreshment regime, these samplers' dynamics are dominated by  frequent  rapid reversals of direction causing the trajectory to hug a curve which consequently becomes the fluid limit. We call this phenomenon {\em snapping}.
Furthermore,  in a high-refreshment regime, we observe a reduction  but not an entire elimination of  snapping.
However, moving to a randomised bounce regime (i.e., the Forward Event-Chain Sampler), we observe behaviour which  entirely avoids snapping and therefore becomes markedly more efficient than the first two cases. 
Section~\ref{sec:zzs_averaging} begins with the optimisation problem that characterises the limiting direction choice of the Zig-Zag Sampler and then derives its fluid limit in the simplest setting. Although the Zig-Zag Sampler can still exhibit snapping, we show that its close relative, the Coordinate Sampler, avoids this issue. Section~\ref{sec:discussion} concludes with a discussion of practical implications and open questions.
Technical details are deferred to Supplementary Material. The full proof of the averaging principle appears in Appendix~\ref{app:sec:proof_PDMP}; a complementary analysis of the number of jumps required to reach the high-probability region is given in Appendix~\ref{app:sec:fec}. 

\subsection{Setting for the fluid limit}
\label{subsec:setting}

Let $U:\mathbb{R}^N \to \mathbb{R}$ be a given potential function such that $\int \exp(-U(x))\ \dif x < \infty$. We consider a family of target distributions $\{\Pi^{\varepsilon}, \varepsilon > 0\}$ where 
\[
    \Pi^{\varepsilon}(\dif x) \propto \exp\left(-\varepsilon^{-1}U(x)\right), \quad x \in \mathbb{R}^N. 
\]
Let $\Phi_t(x)$ be a PDMP-based sampler targeting $\Pi(\dif x) \propto \exp(-U(x))\dif x$ with initial condition $\Phi_0(x)=x$. We are then interested in the weak limit of $\Phi^{\varepsilon}_t$ where $\Phi^{\varepsilon}_t$ is the same PDMP-based sampler with target distribution $\Pi^{\varepsilon}$ as $\varepsilon \to 0$. 

Note that this is different from the classical fluid limit framework of \cite{Fort2008} where, given $\Phi_t(x)$ with $\Phi_0(x) = x$, they establish, for some $\alpha \ge 0$ and all $x \in \mathbb{R}^N$, a weak limit for the rescaled process,
\[
    \Phi^{\varepsilon}_{t, \alpha} := \varepsilon \Phi_{t\varepsilon^{-1-\alpha}}(x/\varepsilon),
\]
as $\varepsilon \to 0$. 
Any resulting limit is referred to as the \emph{fluid limit} of $\Phi_t$ and, in general, is a weak solution to an ODE. The same procedure carries over to discrete-time Markov chains by embedding them in continuous time via $t \mapsto \Phi_{\lfloor t \rfloor}(x)$, where $\lfloor t \rfloor$ denotes the greatest integer less than or equal to $t$. 


Our motivation for rescaling the potential function instead of space and time (as in \cite{Fort2008}) stems from the fact that it induces a difference in time-scales of the components of the underlying stochastic process. This allows us to establish fluid limits via the averaging principle (described in Section \ref{subsec:averaging-pdmp}) for Markov processes. Although mathematically motivated, this situation also arises naturally in practice, e.g.,  when targeting a Bayesian posterior for large datasets, $\varepsilon = n^{-1}$ where $n$ is the dataset size (\cite{agrawal2024large}). Moreover, we emphasise that this potential-rescaling view remains compatible with the classical fluid-limit program of \cite{Fort2008}; for example, consider the power-exponential family
\[
    \Pi(\dif x) \;\propto\; \exp\bigl(-\bigl(x^\top \Sigma^{-1} x/2\bigr)^{\beta/2}\bigr)\,\dif x
\]
for $\beta > 1$. Then, rescaling the potential by 
$\varepsilon^{-1}$ is equivalent to the space-time scaling, with $x\mapsto r^{-1}x$ and $\varepsilon=r^{-\beta}$ and the two limits coincide.
In particular, the regime $1< \beta<2$ corresponds to subquadratic growth, whereas $\beta>2$ gives superquadratic growth.

\subsubsection*{Fluid limit of the random walk Metropolis chains}

The fluid limit of a random walk Metropolis algorithm was studied in \cite{Fort2008}. Here we illustrate their result in a simple case. 
Specifically, for $\beta>1$ in the above setting, we examine the potential function $\varepsilon^{-1}U(x)$ and the increment distribution $\mathcal{N}(0, \varepsilon^{2/\beta}\sigma^2 I_N)$ with state space scaling $x\mapsto \varepsilon^{1/\beta}x$ associated with the  algorithm. 


A function is called coercive if its all sublevel sets are relatively compact. If the function $U$ is coercive, and twice continuously differentiable, then the fluid limit of the Metropolis algorithm with $\alpha=0$ is the following normalised gradient flow:  
\[
   \dot{x}(t) = -\,\frac{\sigma}{\sqrt{2\pi}}\frac{\nabla U\bigl(x(t)\bigr)}{|\nabla U\bigl(x(t)\bigr)|}
\]
until the process hits $\{x:U(x)\le U(x^*)+\gamma\}$ for $\gamma>0$ where $x^*$ is the global minimiser of $U$. 
See Proposition 2.6 of \cite{Fort2008} for further details. 


Due to the time scaling $\varepsilon^{-1/\beta}$, the Metropolis algorithm’s computational cost to reach $\{x : U(x)\le U(x^*)+\gamma\}$ is of the order of $\varepsilon^{-1/\beta}$ in this setting. 

\begin{remark}
In this paper, we do not consider gradient-based Metropolis algorithms, as their performance is highly sensitive to both the structure of the potential function and the choice of tuning parameters (see Lemma 3 of \cite{Christensen2005} and Section 6.1.3 of \cite{jourdain2014optimal}). This is in contrast to PDMP-based samplers, where gradients influence only the intensity function, leading to greater robustness.  Methods based on the Barker acceptance have better robustness \cite{zanella2020informed}, but in our framework, they sacrifice the efficiency gains one achieves by employing the drift coefficient.
\end{remark}

\subsection{Problem setup}\label{subsec:setup}

Throughout this paper, we work under a set of technical assumptions on the potential function $U$, which defines the stationary distribution of the Piecewise Deterministic Markov Process (PDMP) under consideration. Our primary goal is to understand how the state process of the PDMP-based samplers converges, in a suitable sense, to a deterministic flow, particularly as it approaches the global minimiser of the potential, denoted by $x^*$.

\begin{assumption}\label{assumption}
The potential function $U$ is continuously differentiable on $\mathbb{R}^N$ and three times continuously differentiable away from the minimiser, i.e., $U \in C^1(\mathbb{R}^N) \cap C^3(\mathbb{R}^N \setminus \{x^*\})$, where $x^*$ is the unique point satisfying $\nabla U(x^*) = 0$. In addition, the Hessian matrix $\nabla^2 U(x)$ is positive definite for all $x \neq x^*$, and $U(x) \to \infty$ as $|x| \to \infty$.
\end{assumption}

Our main result establishes the convergence of PDMP-based samplers to a deterministic flow $\phi_t(x)$ (or, $z$ in place of $x$), up to a hitting time defined by
\[
\tau(x; \gamma) = \inf\left\{t > 0 : U(\phi_t(x)) \leq U(x^*) + \gamma \right\},
\]
assuming the initial point satisfies $U(x) > U(x^*) + \gamma$ for $\gamma>0$. 
Note that $\tau(x; \gamma)=+\infty$ for the limit of the Bouncy Particle Sampler with low refreshment jumps and we show the convergence of $X_t^\varepsilon$ up to a fixed time $T>0$. 

In a complementary direction, we also study the number of velocity refreshments required for the process to approach the minimiser. Let $T_1 < T_2 < \cdots$ denote the refreshment times of the velocity component. We investigate the tightness of the stopping index
\[
n^\varepsilon(x^\varepsilon; \gamma) = \inf\left\{n \in \mathbb{N} : U(X_{T_n}^\varepsilon) \leq U(x^*) + \gamma \right\},
\]
when the process starts from $x^\varepsilon$, 
showing that the process typically enters a neighbourhood of the minimiser after a finite number of refreshment jumps. 
This provides an information about the overall number of jumps until the process hits the high probability region. 

Our main results rely on an averaging principle described in Section \ref{subsec:averaging-pdmp}, while the analysis in this complementary direction makes use of a drift inequality which will be described in  Appendix~\ref{app:sec:fec}.

\section{Piecewise Deterministic Markov processes}\label{sec:pdmp}

\subsection{Markov processes and their extended generators}

As explained in the Introduction, we consider the fluid limit of several piecewise deterministic Monte Carlo algorithms, specifically, the Bouncy Particle Sampler, the Forward Event-Chain, the Coordinate Sampler and the Zig-Zag Sampler. We first introduce the extended generator of these piecewise deterministic Monte Carlo algorithms.

Suppose that the target distribution $\Pi$ on $\mathbb{R}^N$ has a density function $\exp(-U(x))$ with respect to Lebesgue measure. 
Let $x$ be the state variable and $v$ be the velocity variable. The extended generator of the Bouncy Particle Sampler is given by
\[
\mathcal{L}_{\text{BPS}} f(x,v) = v^{\top}\nabla_x f(x,v)+ (v^{\top}\nabla U(x))_+\bigl(\mathcal{B} - I\bigr) f(x,v)  
\]
where $\mathcal{B} f(x,v) = f\bigl(x, v - 2 n(x) n(x)^\top v\bigr)$ and for $x\neq x^*$, 
\begin{align*}    
n(x) = \frac{\nabla U(x)}{|\nabla U(x)|},\quad 
n^\perp(x)=(I-n(x)n(x)^{\top}). 
\end{align*}
To prevent reducibility of the Markov process, a refreshment mechanism is typically introduced. At refreshment events, occurring according to a Poisson process with a fixed refreshment rate, the particle’s velocity is randomly resampled, ensuring ergodicity. However, for simplicity, we omit this term from the displayed generator. 

A variant of the process,  the Forward Event-Chain, was proposed by \cite{vanetti2017piecewise} and \cite{Michel2017}. It replaces  the deterministic operator $\mathcal{B}$ by the Markov transition kernel $\mathcal{Q}$:
\begin{equation}\label{eq:fec}
\mathcal{Q}f(x,v)=\int_{\mathbb{R}^N}\int_0^\infty f(x,-\xi n(x)+n^\perp(x)w)R(\dif \xi)\varphi(w)\dif w
\end{equation}
where $R(\dif \xi)=\xi\exp(-\xi^2/2)1_{[0,\infty)}(\xi)\dif \xi$ is the Rayleigh$(1)$ distribution, and $\varphi(w)$ is the density of the standard normal distribution on $\mathbb{R}^N$.  Although this modification appears minor when compared with the Bouncy Particle Sampler, we shall show that it fundamentally alters the limit behaviour of the process.

Another variant is the Coordinate Sampler proposed by \cite{Wu2020CoordinateSampler}. The space of $v$ is $V:=\{\pm e_n: n=1,\ldots, N\}$ where $e_n$ denotes the 
$n$-th canonical basis vector in $\mathbb{R}^N$. It replaces $\mathcal{B}$ by operation $\mathcal{C}$: 
$$
\mathcal{C}f(x,v)=\sum_{v'\in V}~\frac{(v'^{\top}\nabla U(x))_+}{\sum_{n=1}^N|\partial_nU(x)|}~f(x, -v')
$$
where $\partial_nU(x)=\partial U(x)/\partial x_n$.

The Zig-Zag Sampler proposed by \cite{Bierkens2019} is another class of PDMP-based samplers, where the velocity $v$ takes values in the discrete set $\{-1,+1\}^N$. The sampler explores the state space by continuously switching directions along coordinate axes, producing a zig-zag trajectory. 
The extended generator of the Zig-Zag Sampler is given by
\[
\mathcal{L}_{\text{ZZS}} f(x,v) = v^{\top}\nabla_x f(x,v)+ \sum_{n=1}^N(v_n\partial_nU(x))_+\bigl(\mathcal{F}_n - I\bigr) f(x,v),
\]
with $\mathcal{F}_n f(x,v) = f\bigl(x,F_n(v)\bigr)$ where $F_n$ is the operation that switches the sign of the $n$-th coordinate of $v$. The space of $v$ is restricted to $\{-1,+1\}^N$ for the Zig-Zag Sampler. 

The first component of each extended generator is the drift term, corresponding to $\dot x=v$; the second is the jump part. The jump is triggered by the intensity function $(v^{\top}\nabla U(x))_+$ and the velocity switches from $v$ to $v - 2 n(x) n(x)^{\top} v$ deterministically in the Bouncy Particle Sampler. The Forward Event-Chain replaces the deterministic update by a random update. At each event the Coordinate Sampler updates only one component: it selects index $n$ with probability proportional to $|\partial_n U(x)|$, then sets the direction to $-\sgn(\partial_n U(x))\,e_n$.
For the Zig-Zag Sampler, the jump is triggered by the series of intensity functions $(v_n\partial_n U(x))_+$ for $n=1,\ldots N$, and the velocity switches from $v$ to $F_n(v)$ corresponding to source of the jump. 

For the fluid limit, we are interested in studying its rescaled generator $\mathcal{L}^\varepsilon$. In this case, the potential function  $U(x)$ is substituted by  $\varepsilon^{-1}U(x)$. For the analysis, we consider the averaging of Markov processes described in Subsection \ref{subsec:averaging-pdmp}. 

\begin{remark}
The time-scaling exponent $\alpha = -1$ introduced in Section \ref{subsec:setting} is not written explicitly for PDMP samplers, yet it is implicitly assumed: rescaling the initial condition in this framework automatically accelerates the process.
\end{remark}

\begin{remark}
The Forward Event-Chain literature includes several variants; see \cite{vanetti2017piecewise,Michel2017}.  Our basic update with any of these alternatives leaves all main results essentially unchanged.  We nevertheless adopt the present ``orthogonal refresh'' because it turns the jump kernel into a genuine Markov chain, making the expression simpler.  For the rest of the paper we therefore focus on this  kernel.
\end{remark}

\subsection{Averaging of piecewise deterministic Markov processes}
\label{subsec:averaging-pdmp}

We briefly explain the averaging principle.
Consider a Markov process with state $(\mathbf{x},\mathbf{y})\in\mathbb{R}^{p+q}$,
where $\mathbf{x}$ collects the \emph{slow} variables and $\mathbf{y}$ the \emph{fast} ones, as summarised below: 
\[
\renewcommand{\arraystretch}{1.05}
\begin{array}{c|c}
\textbf{Slow variables } & \textbf{Fast variables } \\ \hline
\mathbf{x} & \mathbf{y}
\end{array}
\]
The extended generator then admits the expansion
$$
\mathcal{L}^\varepsilon=\varepsilon^{-1}\mathcal{L}_0+\mathcal{L}_1+O(\varepsilon). 
$$
To analyse the limit behaviour of the process, we consider the following backward Kolmogorov equation for $u^\varepsilon(\mathbf{x},\mathbf{y},t)$:
\begin{align*}
    \frac{\dif u^\varepsilon}{\dif t}=\mathcal{L}^\varepsilon u^\varepsilon. 
    \end{align*}
With the formal expansion $u^\varepsilon=u_0+\varepsilon u_1+O(\varepsilon^2)$, we have microscopic and macroscopic equations: 
\begin{align*}
\begin{cases}
0&=\mathcal{L}_0 u_0,\\
\dfrac{\dif u_0}{\dif t}&=\mathcal{L}_0 u_1+\mathcal{L}_1 u_0.
\end{cases}
\end{align*}
Suppose that $\mathcal{L}_0$ is an operator for an ergodic Markov process acting on the $\mathbf{y}$-component , with invariant measure $\mu$.
 Then, since the microscopic,  Poisson equation implies that $u_0$ is invariant under the action of $\mathbf{y}$, it must be a constant. In other words, $u_0$ does not depend on $\mathbf{y}$, and we may write $u_0(\mathbf{x},\mathbf{y},t)=u_0(\mathbf{x},t)$. 
Then, by integrating out $\mathbf{y}$ with respect to $\mu$, the macroscopic equation becomes the effective equation: 
$$
\frac{\dif u_0(\mathbf{x},t)}{\dif t}=\mu\left(\mathcal{L}_1 u_0(\mathbf{x},t)\right).
$$
This equation indicates the limit of the process. This is the standard scenario in averaging for Markov processes.

In our setting the fast operator $\mathcal L_0$ is \emph{not} ergodic, so we do not average against an invariant law. 
Instead we average over micro-cycles between successive zero-crossings of 
\[
\lambda(x,v):=v^\top\nabla U(x)
\]
for the Bouncy Particle Sampler. 
The fast part flips whenever $\lambda>0$ but cannot flip for negative values; convexity (Assumption \ref{assumption}) makes $\dot\lambda=v^\top\nabla^2 U(x)\,v> 0$ on continuity pieces, pushing $\lambda$ back to zero and closing the cycle. 
Cycle-averaging the microscopic update of the tangential velocity $v_1:=n^\perp(x)v$ then we obtain an effective ordinary differential equation for the projected dynamics
\begin{equation}\nonumber
\dot x=v,\qquad \dot v=-\,\Xi(x,v),\qquad 
\Xi(x,v):=\dfrac{v^\top \nabla^2 U(x)\,v}{|\nabla U(x)|^2}\,\nabla U(x).
\end{equation}
For the Zig-Zag Sampler the same non-ergodic averaging leads to the piecewise-smooth rule developed in Sec.~\ref{sec:zzs_averaging}: coordinates with $\partial_j U(x)\ne0$ follow $v_j=-\mathrm{sgn}(\partial_j U(x))$, whereas coordinates with $\partial_j U(x)=0$ are chosen by a box-constrained quadratic program. The sign-flip inequality and boundary-layer bounds that justify this cycle averaging, together with a formal statement valid up to the first entrance time, are proved in Appendix~\ref{app:sec:proof_PDMP}.

\subsection{Computational cost for PDMP-based samplers}\label{subsec:comp-cost}

Throughout in this subsection, $\lambda_t$ denotes the intensity function and is therefore non-negative. In other sections, $\lambda_t$ denotes a signed pre-intensity (e.g. $\lambda_t=v_t^{\top}\nabla U(x_t))$, with the actual intensity taken as $\max\{0,\lambda_t\}$. 

For PDMP samplers the deterministic motion $\dot x=v$ is inexpensive; the dominant
cost comes from evaluating derivatives of $U$ when generating event times (gradients,
and Hessians). We simulate events on $[0,T]$ using thinning (\cite{Lewis1979SimulationThinning, Ogata1981}) with a
windowed envelope $\lambda_t^\ast \ge \lambda_t := \lambda(X_t,V_t)$; see also
\cite{Sutton02102023,Corbella2022,andral2024automated} for recent progress on PDMP implementations.
Let $M_T$ be the
number of candidate times generated by the envelope, $N_T$ the number of accepted
jumps, and $W_T$ the number of windows in the partition of $[0,T]$.

We adopt a simple window design: partition $[0,T]$ into windows $\Delta_k$ and use a
piecewise bound $\lambda_t^\ast$ on each window. Assume there exist window-uniform
constants $c_1,C_1,c_2>0$ such that for every window $\Delta_k$,
\[
c_1 \le \int_{\Delta_k}\lambda_t^\ast\,\dif t \le C_1
\qquad
c_2\int_{\Delta_k}\lambda_t^*\dif t\le \int_{\Delta_k}\lambda_t\dif t. 
\]
The former inequality is moderated by keeping the number of events in each window neither too small nor too large, while the latter inequality is governed by ensuring that the acceptance probability does not become too small.
By the compensator identity,
\[
\E[N_T]=\E\!\left[\int_0^T \lambda_t\,\dif t\right],
\qquad
\E[M_T]=\E\!\left[\int_0^T \lambda_t^\ast\,\dif t\right],
\]
hence
\[
c_2\,\E[M_T]\ \le\ \E[N_T]\ \le\ \E[M_T].
\]

To relate $M_T$ and $W_T$, note that summing the per-window bounds yields
\[
c_1\,W_T \ \le\ \int_0^T \lambda_t^\ast\,\dif t \ \le\ C_1\,W_T
\quad \text{almost surely},
\]
so
\[
c_1\,\E[W_T] \ \le\ \E[M_T] \ \le\ C_1\,\E[W_T].
\]
In implementation there is a fixed per-window overhead in addition to the per-candidate
derivative evaluations. Writing $C_{\mathrm{cand}}\ge 1$ for the number of gradient/Hessian
evaluations per candidate and $C_{\mathrm{win}}\ge 0$ for the number per window, the total
derivative count is
\[
\#\mathrm{deriv} \;=\; C_{\mathrm{cand}}\,M_T \;+\; C_{\mathrm{win}}\,W_T.
\]
Consequently,
\[
A\,\E[N_T] \ \le\ \E[\#\mathrm{deriv}] \ \le\ B\,\E[N_T]
\]
for  constants $A=C_{\mathrm{cand}}+C_{\mathrm{win}}C_1^{-1},B=C_{\mathrm{cand}}c_2^{-1}+C_{\mathrm{win}}c_1^{-1}c_2^{-1}>0$ that do not depend on $T$. Thus the expected derivative
cost is comparable, up to constants, to the expected number of jumps, and we use
$\E[N_T]$ as our cost proxy.

\begin{remark}
    In practice, the existence of the positive constant $c_2>0$ may fail locally when $\lambda_t= 0$. Instead,  we can expect 
    $$
    c_2\int_{\Delta_k}\lambda_t^*\dif t-\epsilon~|\Delta_k|\le \int_{\Delta_k}\lambda_t\dif t. 
    $$
    In this case, the expected number of derivatives is bounded above by
    $$
    B~(\E[N_T]+\epsilon T). 
    $$
    Hence if the tolerance $\epsilon$ is small enough, $\mathbb{E}[N_T]$ is dominant. 
\end{remark}

\section{Fluid limit of Bouncy Particle-family Samplers}\label{sec:fluid_bps}

This section analyses the fluid limit of the Bouncy Particle Sampler as a prototypical piecewise deterministic Monte Carlo algorithm. 
We first present the low-refreshment regime (Subsection~\ref{subsec:lowrefreshbps}), then derive the high-refreshment scaling (Subsection~\ref{subsec:highrefreshbps}), and finally obtain a constant-cost result for the Forward Event-Chain Sampler (Subsection~\ref{subsec:fec}).  Computational cost is measured throughout by the expected number of jump events, i.e., gradient evaluations.  
The Zig-Zag variants are treated separately in Section~\ref{sec:zzs_averaging}.

\subsection{Low-refreshment rate setting}
\label{subsec:lowrefreshbps}

We consider the Bouncy Particle Sampler for the potential $\varepsilon^{-1}U(x)$ and first analyse the case without velocity refreshment. 
The extended generator is $\mathcal{L}_{\text{BPS}}=\varepsilon^{-1}\mathcal{L}_{0,\text{BPS}}+\mathcal{L}_{1,\text{BPS}}$ where 
$$
\mathcal{L}_{0,\text{BPS}}f(x,v)=(v^{\top}\nabla  U(x))_+(\mathcal{B}-I)f(x,v),\quad \mathcal{L}_{1,\text{BPS}}f(x,v)=v^{\top}\nabla_xf(x,v)
$$
with reflection operator $\mathcal{B}$ defined by 
$$
\mathcal{B}f(x,v)=f(x,v-2n(x)n(x)^{\top}v).
$$
Observe that $\varepsilon^{-1}\lambda_+(x,v)$ is the intensity function of the Bouncy Particle Sampler, where 
$\lambda_+(x,v)=(v^\top\nabla U(x))_+$ and $\lambda(x,v)=v^{\top}\nabla U(x)$. 
For the analysis, we exclude the trivial case corresponding to states $(x,v)$ such that $\lambda(x,v)\neq 0$. In this case, the dynamics is simply $\dot{x}=v, \dot{v}=0$ possibly after an immediate jump until it hits $\lambda=0$. The non-trivial case of interest corresponds to the set of states where 
$$
\lambda(x,v)=0. 
$$
We refer to this set as the tangency set, as it represents the condition under which the system is about to undergo a qualitative change in dynamics. 
Near the tangency set, the fast part $\mathcal{L}_{0,\text{BPS}}$ flips $\lambda>0$ to $\lambda<0$; the free-flight governed by $\mathcal{L}_{1,\text{BPS}}$ makes $\dot{\lambda}>0$ under convexity,  so $\lambda$ drifts back to zero (see Subsection \ref{subsec:averaging-pdmp}). Consequently, the velocity continues to bounce frequently, resulting in a \textit{snapping} behaviour of the process.

We will say that $n(x)$ is a snapping direction at $x$ in the sense that the direction will be continuously switched in $\varepsilon^{-1/2}$ time and will be averaged in a long run. 
Relative to Section \ref{subsec:averaging-pdmp}, the decomposition is
\[
\renewcommand{\arraystretch}{1.05}
\begin{array}{c|c}
\textbf{slow variables} \quad& \quad \textbf{fast variable} \\ \hline
\bigl(x,\,v_1\bigr)& 
v_2
\end{array}
\]
where 
$$
v_1:=n^{\perp}(x)v,\quad v_2:=n(x)n(x)^{\top}v. 
$$
The norm of the velocity $v$ does not change throughout the dynamics. 
The Poisson equation $\mathcal{L}_{0,\text{BPS}}~u_0=0$ implies that $u_0$ is in the null space of $\mathcal{L}_{0,\text{BPS}}$.  In other words $u_0$ does not depend on the sign of $n(x)^{\top}v$. Moreover, the magnitude of this value, $|n(x)^{\top}v|$, which equals $|v_2|$, is determined by $|v_1|$ via the identity $|v_2|^2=|v|^2-|v_1|^2$.  Together these observations imply that $u_0$ is independent of $v_2$, that is, $u_0(x,v,t)=u_0(x, v_1,t)$. Then, the macroscopic equation $\dot{u}_0=\mathcal{L}_{1,\text{BPS}}~u_0$ becomes the effective equation 
\begin{align*}
\frac{\dif u_0}{\dif t}
&=v^{\top}\nabla_xu_0-\Xi(x, v)\nabla_{v_1}u_0, 
\end{align*}
where vector-valued function $\Xi:\mathbb{R}^N\times\mathbb{R}^N\rightarrow\mathbb{R}^N$ is defined by
\begin{equation}\nonumber
    \Xi(x,v)=\frac{v^{\top}\nabla^2U(x)v}{|\nabla U(x)|^2}~ \nabla U(x). 
\end{equation}
This equation corresponds to the flow $\phi_t(z)$ described as 
\begin{equation}\label{eq:bpsflow}
\dot{x}=v,\quad \dot{v}=-\Xi(x,v)
\end{equation}
where $z=(x,v)$ is an element of the tangency set.

Suppose that refreshment jumps occur with a constant rate $\rho>0$ at times
\[
0 = T_0 < T_1 < T_2 < \cdots,
\]
i.e., $T_{n+1}-T_n$ follows the exponential distribution with mean $\rho^{-1}$.  At each jump the velocity is resampled as $W_0, W_1, \dots$.  Let $\Phi_t^{\varepsilon}$ denote the corresponding flow of the slow variables 
$$
(X^{\varepsilon},n^\perp(X^\varepsilon)V^{\varepsilon}).
$$

We now construct the limiting flow $\Phi_t$ in $[T_n, T_{n+1})$ using the same jump times and velocity draws.
At $t = T_n$, set
\[
v(T_n) \;=\; B_+\bigl(x(T_n)\bigr)\,W_n,
\]
where the reflection operator $B_+(x)$ is defined by
\[
B_+(x)\,w = w \;-\; 2\,\bigl(w^\top n(x)\bigr)_+\,n(x),
\]
so that the post-jump velocity satisfies $v_{T_n}^\top n(x_{T_n}) \le 0$.
The motion evolves according to the free-flight
\begin{equation}\label{eq:freeflight}
\dot x_t = v_t, 
\quad
\dot v_t = 0,
\end{equation}
until the trajectory reaches the tangency set, at which point it follows the deterministic flow $\phi_t$.  
Figure \ref{fig:bps_2d} illustrates the limiting process. 
The proof of the following theorem is in  Appendix~\ref{app:subsec:bps_refresh_low}. 

\begin{theorem}\label{thm:bps_refresh_trajectory}
    Suppose that the both Bouncy Particle process and its limit share the same refresh-jump schedule. Suppose that $z^\varepsilon\rightarrow z$ for $z=(x,v)$ and $x\neq x^*$. Then, for $T>0$, the following convergence holds in probability: 
    $$
    \sup_{0\le t\le T}|\Phi_t^\varepsilon(z^\varepsilon)-\Phi_t(z)|\longrightarrow 0. 
    $$
\end{theorem}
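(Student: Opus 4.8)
The plan is to compare the Bouncy Particle process $\Phi_t^\varepsilon$ and its candidate limit $\Phi_t$ segment by segment, over the (random but $\varepsilon$-independent) partition $0=T_0<T_1<\cdots$ of $[0,T]$ into refreshment-free blocks, and to argue uniform closeness on each block by induction on the block index $n$. On $[T_n,T_{n+1})$ both processes are driven by the same velocity draw $W_n$; for $\Phi_t$ the post-refresh velocity is $B_+(x(T_n))W_n$ while for $\Phi^\varepsilon_t$ it is $W_n$ followed possibly by an immediate BPS bounce. The key structural fact to exploit is the dichotomy already set up in the excerpt: away from the tangency set $\{\lambda=0\}$ the dynamics is trivial free-flight $\dot x=v,\ \dot v=0$ (after at most one instantaneous bounce that realises $B_+$), whereas on the tangency set the fast operator $\varepsilon^{-1}\mathcal L_{0,\mathrm{BPS}}$ forces the snapping behaviour that averages, as $\varepsilon\to0$, to the deterministic flow $\phi_t$ solving \eqref{eq:bpsflow}. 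So within a single block I would split $[T_n,T_{n+1})$ at the (limiting) hitting time of the tangency set: before it, invoke continuity of the free-flight ODE in its initial data; after it, invoke the averaging principle of Subsection~\ref{subsec:averaging-pdmp}, i.e. the cycle-averaging estimate and boundary-layer bounds asserted in Appendix~\ref{app:sec:proof_PDMP}, to get uniform convergence of the slow variables $(X^\varepsilon, n^\perp(X^\varepsilon)V^\varepsilon)$ to $\phi_t$.

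Concretely, the steps are: (1) Set up the coupling: same refresh times $T_n$, same draws $W_n$, and note $T\wedge T_{K}$ contains only finitely many $T_n$ a.s., so it suffices to propagate a bound across $K=K(T,\omega)<\infty$ blocks; a union bound over $n\le K$ then closes the argument. (2) Base case / generic block: given $\Phi^\varepsilon_{T_n}(z^\varepsilon)\to\Phi_{T_n}(z)$, show the immediate post-refresh states agree in the limit — here one must check that applying $W_n$ and then (if $W_n^\top n(x)>0$) one BPS bounce reproduces $B_+(x)W_n$ up to an $o(1)$ perturbation, using continuity of $n(\cdot)$ at $x(T_n)\ne x^*$ (which holds on the event that the trajectory avoids $x^*$, an event of probability tending to $1$, or is excluded by a stopping-time argument). (3) Free-flight phase: on the sub-interval before the tangency set is reached, both $x_t$ and $v_t$ depend continuously (Lipschitz) on their time-$T_n$ data, and the hitting time of $\{\lambda=0\}$ is continuous at the limiting trajectory because $\dot\lambda=v^\top\nabla^2U(x)v>0$ there by Assumption~\ref{assumption} (transversal crossing), giving convergence of both the state and the hitting time. (4) Snapping phase: apply the averaging result — the effective equation $\dot u_0=\mathcal L_{1,\mathrm{BPS}}u_0$ derived above yields convergence of the slow pair to $\phi_t$ uniformly on the remainder of the block, provided the flow $\phi_t$ stays away from $x^*$ on $[0,T]$ (so $|\nabla U|$ is bounded below and $\Xi$ is Lipschitz there). (5) Assemble: continuity of $\Phi_{T_{n+1}}$ in the time-$T_n$ data plus steps (3)–(4) give $\Phi^\varepsilon_{T_{n+1}}\to\Phi_{T_{n+1}}$, completing the induction, and the sup over $t\in[0,T]$ is controlled because each block contributes a uniform-in-$t$ bound.

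I expect the main obstacle to be Step~(4), the passage through the tangency set, i.e. making the non-ergodic averaging rigorous. The fast operator $\mathcal L_{0,\mathrm{BPS}}$ is not ergodic (once $\lambda<0$ it does nothing), so the usual invariant-measure averaging does not apply directly; one instead has to control the micro-cycles between successive zero-crossings of $\lambda$, show the time spent in each micro-cycle is $O(\varepsilon^{1/2})$, and prove that the accumulated effect of the bounces on the tangential velocity $v_1=n^\perp(x)v$ converges to the drift $-\Xi(x,v)$ while the normal component stays slaved through $|v_2|^2=|v|^2-|v_1|^2$. The delicate points are the boundary-layer estimate near the first entrance into $\{\lambda=0\}$ (where the micro-cycle structure is just forming) and uniform control of the error as the trajectory possibly re-approaches and re-departs the tangency set; these are exactly the ``sign-flip inequality and boundary-layer bounds'' deferred to Appendix~\ref{app:sec:proof_PDMP}, which I would cite and whose hypotheses (convexity, $x\ne x^*$ along the limiting trajectory) I would verify hold on $[0,T]$. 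A secondary nuisance is handling the null event that the limiting flow hits $x^*$ or grazes the tangency set tangentially (non-transversal crossing); both are ruled out generically under Assumption~\ref{assumption}, but must be stated carefully, e.g. by a stopping time that truncates before any such degeneracy and then letting the truncation recede.
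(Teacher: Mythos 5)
Your proposal is correct and follows essentially the same route as the paper: Proposition~\ref{prop:bps} proves the snapping-phase convergence (your step 4) via the $L^1$ estimate of Lemma~\ref{lem:monotone} and Proposition~\ref{prop:bps_constraint} followed by Gronwall, the subsequent unnamed corollary handles the free-flight phase and hitting-time continuity with a case split on $\operatorname{sgn}(\lambda_0)$ (your steps 2--3), and the proof of Theorem~\ref{thm:bps_refresh_trajectory} assembles these by induction over refreshment blocks with a tail truncation on the block count (your steps 1, 5). The obstacles you flag --- non-ergodic micro-cycle averaging, boundary layers near $\{\lambda=0\}$, and avoidance of $x^*$ --- are precisely where the paper spends its effort, so your plan is well calibrated.
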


\begin{figure}[ht]
  \centering
  \includegraphics[width=0.5\textwidth]{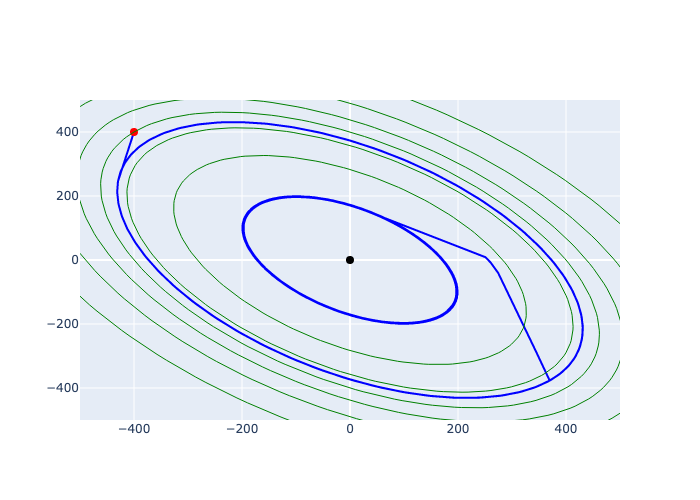}
  \caption{
Illustration of the Bouncy Particle Sampler. Aside from occasional refreshes, the limiting trajectory follows the level set between hits to the tangency set.”
}
  \label{fig:bps_2d}
\end{figure}

Below we present a complementary result showing that the Bouncy Particle Sampler requires only finitely many refresh jumps before it enters a high probability region. The proof can be found in Appendix~\ref{app:subsec:bps_no}.

\begin{theorem}\label{thm:bps_refresh}
For $\gamma>0$, 
suppose that $U(x)>U(x^*)+ \gamma$ and $x^\varepsilon \rightarrow x$. Then $n^\varepsilon(x^\varepsilon; \gamma)$ is tight. 
\end{theorem}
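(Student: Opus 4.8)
The plan is to establish tightness of $n^\varepsilon(x^\varepsilon;\gamma)$ via a geometric (exponential) drift/Lyapunov inequality for the embedded chain $\{X^\varepsilon_{T_n}\}_{n\ge 0}$ observed at refreshment times, with the potential level $U$ itself (suitably capped) playing the role of the Lyapunov function and the central region $\{x : U(x)\le U(x^*)+\gamma\}$ playing the role of the ``small set.'' Concretely, writing $h(x)=(U(x)-U(x^*)-\gamma)_+$, I would aim to show that there exist $\varepsilon_0>0$, $\kappa\in(0,1)$ and $b<\infty$, uniform in $\varepsilon\le\varepsilon_0$ and in the starting points along the convergent sequence $x^\varepsilon\to x$, such that
\[
\E\bigl[h(X^\varepsilon_{T_{n+1}})\,\big|\,X^\varepsilon_{T_n}=y\bigr]\ \le\ \kappa\,h(y)\ +\ b\,\mathbbm 1\{h(y)=0\}.
\]
Once this holds, a standard argument (iterate the inequality, use the tower property, and apply Markov's inequality to $h(X^\varepsilon_{T_n})$) shows that the hitting index $n^\varepsilon = \inf\{n: h(X^\varepsilon_{T_n})=0\}$ has a tail bounded uniformly in $\varepsilon$ by a geometric-type bound depending only on $h(x^\varepsilon)$, which is itself bounded since $x^\varepsilon\to x$; tightness follows.

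The core of the work is the one-step drift estimate, and here I would exploit the fluid-limit picture already developed. Between two successive refreshments the process runs for an $\mathrm{Exp}(\rho)$ time with a freshly sampled velocity $W_n$, and by Theorem~\ref{thm:bps_refresh_trajectory} (and its proof) the rescaled slow trajectory $\Phi^\varepsilon_t$ is, uniformly on $[0,T]$ and with high probability as $\varepsilon\to 0$, close to the limiting flow $\Phi_t$: a short free flight $\dot x=v$ with $v=B_+(x(T_n))W_n$ followed by motion along the deterministic flow $\phi_t$ that hugs the level set of $U$. The key deterministic fact to extract is that, started from any $y$ with $h(y)>0$, the limiting flow over one refreshment cycle decreases $U$ by a definite amount in expectation over the velocity draw $W$ — intuitively because immediately after the reflection $B_+$ the velocity points into the sublevel region ($v^\top\nabla U(y)\le 0$), so during the initial free flight $\tfrac{\dif}{\dif t}U(x_t)=v^\top\nabla U(x_t)<0$ until tangency, and the subsequent $\phi_t$-motion is non-increasing in $U$ (it follows a level set). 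Averaging the positive probability that $W$ yields a non-negligible descent over a cycle of duration bounded below in probability gives a strict contraction factor $\kappa<1$; on the central region the bound $b$ just absorbs the at-most-bounded excursion of $U$ in one cycle. Passing from the limiting flow back to $\Phi^\varepsilon$ requires only that the approximation error in Theorem~\ref{thm:bps_refresh_trajectory} is small enough to preserve the strict inequality for all sufficiently small $\varepsilon$; some care is needed because $h$ is unbounded, so I would first truncate to a large sublevel set $\{U\le R\}$, prove the drift there with $R$-dependent constants, and then check that the chain cannot escape to $\{U>R\}$ with non-vanishing probability — using coercivity of $U$ (Assumption~\ref{assumption}) and the fact that $|v|$ is conserved so that $U$ can grow by at most a controlled amount per cycle.

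The main obstacle I anticipate is making the descent-per-cycle estimate uniform near the two degenerate places: near the minimiser $x^*$, where $\nabla U$ vanishes and $\Xi$ in \eqref{eq:bpsflow} is singular, and near the boundary $\{U=U(x^*)+\gamma\}$ of the target region, where $h$ is non-smooth and the descent over a cycle may be small. The first is handled by noting we only need the drift inequality for $h(y)>0$, i.e. away from $x^*$, but one must still ensure the constant $\kappa$ does not degrade as $y$ approaches the boundary of the central region; this should be fine because even a small-in-absolute-terms decrease is a fixed fraction of the small quantity $h(y)$, but verifying it requires a uniform-in-$y$ lower bound on the relative descent $\E[h(y)-h(X^\varepsilon_{T_1})]/h(y)$, which in turn rests on a uniform lower bound (over the relevant compact annulus) on the probability that the fresh velocity $W$ produces an order-one-in-$U$-units descent during the free-flight phase. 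A secondary technical point is that the error control in Theorem~\ref{thm:bps_refresh_trajectory} is stated for fixed starting point and fixed $T$, so I would need its proof to deliver an estimate uniform over $y$ in a compact set and over the (random but stochastically bounded) cycle length; I expect the existing argument in Appendix~\ref{app:subsec:bps_refresh_low} to be strong enough, but this is the place where the details must be checked rather than quoted.
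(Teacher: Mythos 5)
Your overall strategy—establish a geometric drift inequality for the refresh-time skeleton chain, with the central set $\{U\le U(x^*)+\gamma\}$ playing the role of the small set, then convert it into a uniform tail bound on the hitting index—is exactly the paper's strategy, so you have identified the right architecture. However, the paper's implementation differs from yours in two places where your proposal would run into real trouble, and in both cases the paper's choice is what makes the argument close cleanly.

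First, the Lyapunov function. You take $h(x)=(U(x)-U(x^*)-\gamma)_+$, which is additive in $U$; the paper takes $V(x)=\exp(U(x))$ (Proposition~\ref{prop:bpsdrift}). The exponential choice is not cosmetic. The quantity that the free-flight phase controls is $\Lambda_{t(x,v)}=U(x+vt(x,v))-U(x)$, and Lemma~\ref{lem:drift_bound} bounds it by $-\min\{(\underline{\kappa}/2\overline{\kappa}^2)\underline{\eta}^2\,(v^\top n(x))^2/|v|^2,\ \gamma/2\}$, whose law under the Gaussian refresh is independent of $x$ because $(v^\top n(x))^2/|v|^2$ is rotation-invariant. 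With $V=\exp(U)$ this translates directly into $PV(x)\le(1-\beta)V(x)$ on the whole band $B_\gamma$ with a single $x$-free $\beta$. With your $h$, an order-one descent in $U$ does not convert into $\E[h]\le\kappa h(y)$ with a uniform $\kappa<1$ unless you artificially restrict to a bounded band; you flag this and propose truncation, but then you also need to control escape from the band, which is a separate estimate.

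Second, and more importantly, you aim to prove the drift uniformly in $\varepsilon$ for the $\varepsilon$-chain itself, and acknowledge that this requires upgrading the pointwise-in-$(x,T)$ convergence of Theorem~\ref{thm:bps_refresh_trajectory} to something locally uniform in the starting point and the (random, unbounded) cycle length. The paper sidesteps all of this: Proposition~\ref{prop:bpsdrift} establishes the drift for the $\varepsilon$-\emph{independent limiting} refresh kernel $P$ of the ODE-with-jumps process, which is $U$-non-increasing by construction (the post-refresh velocity satisfies $v^\top n(x)\le0$, and the subsequent $\phi_t$-motion stays on a level set), so the chain cannot escape upward and the drift on $B_\gamma$ is all one needs. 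This yields $\P_x(n(x;\gamma)\ge M)\le C(r-1)/V(x^*)\,(r^M-1)^{-1}$ for the limit, and the only convergence input required is the already-proved pointwise statement $\limsup_{\varepsilon\to0}\P_{x}(n^\varepsilon(x^\varepsilon;\gamma)\ge M)\le\P_x(n(x;\gamma)\ge M)$. No uniformity in $\varepsilon$, no escape control, no worry about the degenerate places you list (near $x^*$ is outside $B_\gamma$; near the boundary of $L_\gamma$ is handled automatically because the multiplicative constant $\beta$ is $x$-free on $B_\gamma$). In short: your plan is the right kind of plan, but as written it would require substantially more delicate uniform estimates than the paper needs, precisely because you place the drift on the $\varepsilon$-chain rather than on its limit.
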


According to the theorem, we know that only a fixed number of refreshment jumps are needed before the process enters the high-probability region. To estimate how many total jumps are required, it is helpful to understand the number of bounce jumps that typically occur between refreshment steps. The following proposition provides such an estimate. Let $N_t^\varepsilon$ denote the number of jumps up to time $t$ before the process exits the band
$$
\bigl\{x \in \mathbb{R}^N : U(x^*) + 2\gamma < U(x) < U(x^*) + (2\gamma)^{-1}\bigr\}.
$$
This band condition is technically necessary because it provides uniform estimates for the process’s behaviour.
Its proof is deferred to  Appendix~\ref{app:subsec:bps_refresh_low}.

\begin{proposition}\label{prp:no_of_jumps}
Suppose that $\rho>0$ is either constant, or $\rho\rightarrow\infty$ and $\varepsilon^{1/2}\rho\rightarrow 0$. 
For $\gamma\in (0,1/2)$ and for $U(x^*)+2\gamma<U(x^\varepsilon
)<U(x^*)+(2\gamma)^{-1}$
$$
(1+\varepsilon^{1/2}\rho^{2})~\E[N_{T_1}^\varepsilon]
$$
is bounded above and below by positive constants as $\varepsilon\rightarrow 0$. 
\end{proposition}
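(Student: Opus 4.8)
The plan is to express the expected reflection count through the compensator of the counting process, remove the refresh randomness by a Fubini argument (which turns the refresh rate $\rho$ into an exponential discount), and then split the resulting integral into the contribution of the ballistic free-flight phase and that of the snapping boundary layer near the tangency set $\{v^\top\nabla U(x)=0\}$, controlling each with the estimates already used for the averaging principle.

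First I would note that on $[0,T_1)$ the Bouncy Particle run contains no refreshments, so $N^\varepsilon$ counts only reflections and is a counting process with intensity $\varepsilon^{-1}\lambda_+(X_s,V_s)$, $\lambda(x,v)=v^\top\nabla U(x)$, stopped at the exit time $\sigma^\varepsilon$ of the band $\{U(x^*)+2\gamma<U(x)<U(x^*)+(2\gamma)^{-1}\}$. Since $T_1\sim\mathrm{Exp}(\rho)$ is independent of the reflection dynamics, the compensator identity (as in Subsection~\ref{subsec:comp-cost}) together with Fubini gives
\[
\E[N_{T_1}^\varepsilon]=\E\!\left[\int_0^{\sigma^\varepsilon}e^{-\rho s}\,\varepsilon^{-1}\lambda_+(X_s,V_s)\,ds\right],
\]
where $(X,V)$ is now the refresh-free process from the post-refresh state at time $0$; $\rho$ appears only through the discount $e^{-\rho s}$. (A standard localization justifies the compensator identity before passing to the limit.)

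Next I would decompose $[0,\sigma^\varepsilon)$ into the interval $[0,\tau^\varepsilon)$ needed to reach an $O(\varepsilon^{1/2})$-neighbourhood of the tangency set, and the snapping phase $[\tau^\varepsilon,\sigma^\varepsilon)$. On $[0,\tau^\varepsilon)$ one has $\lambda_s\le 0$ apart from an initial interval of length $O(\varepsilon)$ which produces at most one reflection, so this phase contributes $O(1)$ to $\E[N_{T_1}^\varepsilon]$. On the snapping phase I would invoke the sign-flip inequality and boundary-layer bounds proved for the averaging principle in Appendix~\ref{app:sec:proof_PDMP}, combined with the band hypothesis: by Assumption~\ref{assumption} the band is a compact annulus on which $|\nabla U|$ and $V_s^\top\nabla^2U(X_s)V_s$ are bounded above and below away from $0$ and $\infty$, so reflection cycles have length of exact order $\varepsilon^{1/2}$ and the occupation measure of $\varepsilon^{-1}\lambda_+(X_s,V_s)$ is of exact order $\varepsilon^{-1/2}$ per unit time, with constants uniform over cycles and over the slowly drifting point on the level set. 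Hence the snapping phase contributes, up to constants, $\varepsilon^{-1/2}\,\E[\int_{\tau^\varepsilon}^{\sigma^\varepsilon}e^{-\rho s}\,ds]$. Finally I would estimate $\E[\int_{\tau^\varepsilon}^{\sigma^\varepsilon}e^{-\rho s}\,ds]$ two-sidedly: for the upper bound $\int_{\tau^\varepsilon}^{\sigma^\varepsilon}e^{-\rho s}\,ds\le\rho^{-1}e^{-\rho\tau^\varepsilon}$ with $\tau^\varepsilon$ converging (by Theorem~\ref{thm:bps_refresh_trajectory} and the uniform band estimates) to the deterministic hitting time of the tangency set along the straight free flight, which is bounded below on the band; for the lower bound, on an event of probability bounded below the process reaches the tangency set still inside the band and then remains there (the limiting flow preserves the level set), so $\sigma^\varepsilon$ exceeds $\tau^\varepsilon$ by a nonnegligible amount. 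Assembling the three steps and balancing the $O(1)$ free-flight term against the $\varepsilon^{-1/2}\times(\text{time spent snapping})$ term under the two hypotheses on $\rho$ — using the constraint $\varepsilon^{1/2}\rho\to 0$ to ensure the snapping phase is not extinguished before it starts — yields that $\E[N_{T_1}^\varepsilon]$ has exact order $(1+\varepsilon^{1/2}\rho^2)^{-1}$, i.e.\ the claim; the scaling arithmetic is routine once the occupation bounds are in hand.

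The main obstacle is the snapping step: obtaining two-sided, window-uniform control of the boundary layer, namely that reflection cycles have length $\asymp\varepsilon^{1/2}$ and the compensator grows at rate $\asymp\varepsilon^{-1/2}$ with constants independent of the cycle, of $\varepsilon$, and of the drifting base point. This is exactly where the non-ergodicity of $\mathcal{L}_{0,\text{BPS}}$ bites and where the band hypothesis is essential; the one-sided upper bound is comparatively soft, but the matching lower bound requires excluding a prolonged excursion of $\lambda$ away from zero, which is the role of the sign-flip inequality of Appendix~\ref{app:sec:proof_PDMP}. A secondary difficulty is the uniform-in-$\varepsilon$ control of the law of $\tau^\varepsilon$ near $0$ needed to handle the $\rho\to\infty$ regime.
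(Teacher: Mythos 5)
Your overall architecture (compensator identity plus Fubini over the independent refresh time, then a free-flight/snapping decomposition with an $\varepsilon^{-1/2}$ occupation rate in the boundary layer) is the right shape, but two of your steps do not go through as stated, and both happen to be exactly the places where the paper's proof does something different.

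First, the central two-sided occupation estimate. You propose to show that individual reflection cycles have length $\asymp\varepsilon^{1/2}$ ``with constants uniform over cycles'' and you correctly flag this as the main obstacle; but this per-cycle uniformity is stronger than what is true (given the excursion shape, the bounce time is random, so individual cycles are not uniformly of order $\varepsilon^{1/2}$) and stronger than what is needed. The paper avoids cycle-by-cycle control entirely: the upper bound comes from Dynkin's formula applied to $\lambda_t^\varepsilon=V^\top\nabla U(X)$, in which each bounce contributes $-2\varepsilon^{-1}(\lambda_+^\varepsilon)^2$ to the compensator, yielding the aggregate estimate $\E\bigl[\int(\lambda_s^\varepsilon)_+^2\,\dif s\bigr]=O(\varepsilon)$ and then $\E[N_T^\varepsilon]=\varepsilon^{-1}\E[\int(\lambda_s^\varepsilon)_+\,\dif s]=O(\varepsilon^{-1/2})$ by Cauchy--Schwarz (Proposition~\ref{prop:bps_constraint}); the matching lower bound is the deterministic, pathwise Lemma~\ref{lem:monotone}(ii), a triangle-area/Cauchy--Schwarz counting inequality $cT^2/(1+N_T)\le\int(\lambda_t)_+\,\dif t$, converted into the probability bound of Corollary~\ref{cor:bps_constraint}. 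If you want to complete your argument you should replace the uniform boundary-layer claim by these aggregate $L^1/L^2$ estimates.

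Second, and more seriously, your treatment of the $\rho\to\infty$ regime is wrong. You bound the discounted snapping duration by $\rho^{-1}e^{-\rho\tau^\varepsilon}$ and assert that the hitting time of the tangency set ``is bounded below on the band.'' It is not: after a refresh the velocity is $B_+(x)W$ with $W$ Gaussian, and the time to reach tangency is proportional to $|\lambda_0|\propto|U_1|$, where $U_1$ is the normal component of the direction, which can be arbitrarily small. If the hitting time were bounded below by $\tau_0>0$, your bound would make the snapping contribution exponentially small in $\rho$ and the claimed lower bound would fail. The correct mechanism --- and the one the paper uses --- is that $U_1^2\sim B(1/2,(N-1)/2)$, so $\P(|U_1|\le x)\lesssim x$; the snapping phase starts before $T_1$ only on the event $\{\underline S\le T_1\}$ of probability $O(\rho^{-1})$, and conditionally contributes $O(\varepsilon^{-1/2}T_1)=O(\varepsilon^{-1/2}\rho^{-1})$, giving $\E[N_{T_1}^\varepsilon]=O(1+\varepsilon^{-1/2}\rho^{-2})$ with a matching lower bound obtained on an explicit good event. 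Relatedly, your concluding ``routine scaling arithmetic'' asserts $\E[N_{T_1}^\varepsilon]\asymp(1+\varepsilon^{1/2}\rho^2)^{-1}$, but your own decomposition (and the paper's proof) gives order $1+\varepsilon^{-1/2}\rho^{-2}=\varepsilon^{-1/2}\rho^{-2}(1+\varepsilon^{1/2}\rho^2)$; these differ by the unbounded factor $\varepsilon^{-1/2}\rho^{-2}$, so you should check that arithmetic against the normalisation in the statement rather than declaring it routine.
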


\begin{remark}
The limiting dynamics describe a geodesic flow on the level set of the potential function, where $ v^\top \nabla U(x) = 0 $ is preserved. In other words, the process does not change the potential function without refreshment jumps. 
\end{remark} 


\begin{remark}[Computational cost] \label{remark:jump-count}
When $\rho>0$ is fixed, 
by Theorem \ref{thm:bps_refresh} and Proposition \ref{prp:no_of_jumps}, the computational cost (as discussed in Subsection~ \ref{subsec:comp-cost}) is estimated to be on the order of $\varepsilon^{-1/2}$.
\end{remark}

\begin{remark}[Comparison between Bouncy Particle Sampler and Random Walk Metropolis]
If the potential is subquadratic ($1<\beta<2$), the estimated cost is lower order for the Bouncy Particle Sampler, and if superquadratic ($2<\beta$) the cost of the random walk Metropolis is smaller (see Subsection \ref{subsec:setting}). 
\end{remark}

\begin{remark}\label{remark:non-convex-bps}
For target distributions with a non-convex potential function, snapping behaviour does not occur. Snapping arises since the process jumps at $\lambda=(v^{\top}\nabla U(x)) > 0$ and transitions to $-\lambda$ afterwards, while $\dot{\lambda}$ remains positive due to the convexity of the potential.
\end{remark}

\subsection{High-refreshment rate setting}
\label{subsec:highrefreshbps}

The low-refreshment setting is natural because, in practice, the scale factor $\varepsilon$ is unknown, so one cannot directly tune the refreshment frequency via $\varepsilon$.  On the other hand, as described in \cite{bierkens2018high}, the refreshment rate can be adjusted indirectly by controlling the relative number of bounce-induced and refreshment-induced jumps.  With this in mind, we study the regime in which the refreshment rate $\rho$ tends to infinity as $\varepsilon$ goes to zero, and we will identify the natural scaling of $\rho$ under that tuning strategy. For simplicity, we assume that $v$ is refreshed at time $t=0$. See Appendix~\ref{app:subsec:bps_high_ref} for the proof. 

\begin{theorem}\label{thm:bps_high_refresh}
Let $\gamma>0$ and let $U(x)> U(x^*)+ \gamma$. 
Suppose that $\rho\rightarrow \infty$ and $\varepsilon\rho\rightarrow 0$, and $x^\varepsilon\rightarrow x$ 
Then the $x$-process of the Bouncy Particle Sampler with the refreshment rate $\rho$ starting from $X^\varepsilon(0)=x^\varepsilon$ 
 converges in probability to 
 \begin{equation}\label{eq:bps_high_ref_flow}
 \dot{x}=-\frac{1}{\sqrt{2\pi}}n(x)
 \end{equation}
 uniformly on $0\le t\le \tau(x; \gamma)$. 
\end{theorem}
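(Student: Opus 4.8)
The plan is to prove this by the non-ergodic averaging principle of Subsection~\ref{subsec:averaging-pdmp}, exploiting a three-scale separation: the bounce operator acts at rate $\varepsilon^{-1}$ (fastest), refreshments occur at rate $\rho\to\infty$ (intermediate), and the free flight moves at rate $O(1)$ (slowest). First I would localise to a band $\{U(x^*)+\gamma' < U < U(x^*)+1/\gamma'\}$ with $0<\gamma'<\gamma$, exactly the device of Proposition~\ref{prp:no_of_jumps}: on this band $|\nabla U|$ is bounded away from $0$ and $\infty$, Assumption~\ref{assumption} gives a uniform lower bound $v^\top\nabla^2U(x)v\ge\mu_0|v|^2$, and $n(\cdot)$ is Lipschitz. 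Hence the candidate limit ODE $\dot x=-\tfrac{1}{\sqrt{2\pi}}n(x)$ has a unique solution, and since $\tfrac{\dif}{\dif t}U(x_t)=-\tfrac{1}{\sqrt{2\pi}}|\nabla U(x_t)|\le -c<0$ on the band, it reaches $\{U\le U(x^*)+\gamma\}$ at the finite time $\tau(x;\gamma)$. This reduces the theorem to uniform-in-$t$ convergence of the trajectory on $[0,\tau(x;\gamma)]$ while the process stays in the band, letting $\gamma'\uparrow\gamma$ at the end.

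The core of the argument is a renewal decomposition over refreshment micro-cycles. Let $0=T_0<T_1<\cdots$ be the refreshment times, so the gaps are i.i.d.\ $\mathrm{Exp}(\rho)$, and let $W_k$ be the velocity draws. At $T_k$ the velocity is set to $W_k$; if $W_k^\top\nabla U(X_{T_k})>0$ a bounce is triggered within time $O(\varepsilon)$, after which the velocity equals $B_+(X_{T_k})W_k$, the half-reflection of Theorem~\ref{thm:bps_refresh_trajectory}, and $\lambda=v^\top\nabla U$ has become negative; because $\dot\lambda\ge\mu_0|v|^2>0$ while $|\nabla U|$ is bounded on the band, $\lambda$ needs time $\Theta(1)$ to climb back to $0$, and since $\varepsilon\rho\to0$ this is overtaken by the next refreshment with probability tending to $1$. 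Uniformly over micro-cycles one then gets
\[
X_{T_{k+1}}-X_{T_k}=(T_{k+1}-T_k)\,B_+(X_{T_k})W_k+o_{\mathbb{P}}(\rho^{-1}),
\]
the error absorbing the $O(\varepsilon|W_k|)$ displacement during the immediate bounce and the intra-cycle variation of $n(X_t)$. Telescoping up to time $t$, comparing the increments with their conditional means $\rho^{-1}\,\E_W[B_+(X_{T_k})W]$, and controlling the martingale part by a maximal inequality (its predictable quadratic variation is $O(\rho^{-1})\to0$, using finite moments of the refreshment law and the fact that $|V_t|$ is constant between refreshments) yields $\sup_{t\le\tau}\big|X^\varepsilon_t-\int_0^t b(X^\varepsilon_s)\,\dif s\big|\to 0$ in probability, with $b(x):=\E_W[B_+(x)W]$. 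Spherical symmetry of the refreshment law kills the component of $b$ orthogonal to $n(x)$, so $b(x)$ is a negative multiple of $n(x)$; evaluating the constant — a Gaussian first-absolute-moment — produces the drift $-\tfrac{1}{\sqrt{2\pi}}n(x)$ of the statement. A Gronwall estimate (using the band Lipschitz bound on $n$, and absorbing $|x^\varepsilon-x|\to0$) then upgrades this to $\sup_{t\le\tau(x;\gamma)}|X^\varepsilon_t-x_t|\to0$ in probability, where $x_t$ solves $\dot x=-\tfrac{1}{\sqrt{2\pi}}n(x)$, $x_0=x$; tightness of $X^\varepsilon$ in $C([0,\tau],\R^N)$ is immediate from $|\dot X^\varepsilon_t|=|V^\varepsilon_t|$ being tight.

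I expect the main obstacle to be the one flagged in Subsection~\ref{subsec:averaging-pdmp}: the fast operator is not ergodic, so the averaging cannot be run against an invariant law and must be localised to individual refreshment cycles. The quantitative heart is therefore a uniform-in-starting-point estimate showing that (i) the probability that a cycle is long enough for $\lambda$ to return to $0$ and initiate snapping is exponentially small in $\rho$, and (ii) even on that rare event the extra displacement is $O(1)$ per cycle and hence $o(1)$ in aggregate; both rest on the band bounds and the convexity lower bound $v^\top\nabla^2U(x)v\ge\mu_0|v|^2$. Secondary technical points are the boundary layers at $t=0$ (harmless, since $v$ is refreshed there) and near $t=\tau(x;\gamma)$ (where $|\nabla U|$ and $n$ degenerate, forcing the coupling to stop just before $U$ reaches $U(x^*)+\gamma$), together with the passage from expected to uniform-in-$t$ control via the maximal inequality.

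An alternative, more abstract route would be to verify an Ethier--Kurtz-type martingale-problem convergence using perturbed test functions of the form $f+\varepsilon f_1+\rho^{-1}f_2$, with $f_1,f_2$ chosen to annihilate the $\varepsilon^{-1}\mathcal{L}_{0,\text{BPS}}$ and $\rho(\mathcal{R}-I)$ terms respectively; this would give the limiting generator $b(x)^\top\nabla_x$ directly. I would nonetheless favour the concrete renewal argument above, since it makes transparent why the limit is precisely the normalised gradient flow with the specific constant $\tfrac{1}{\sqrt{2\pi}}$, and it dovetails with the band estimates already needed for Proposition~\ref{prp:no_of_jumps}.
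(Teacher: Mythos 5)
Your proposal is correct and follows essentially the same route as the paper: the paper also works with the embedded chain at the refreshment times, shows the per-cycle increment is $T_1\,B_+(x)W_1$ up to a negligible error coming from the bounce delay $\sigma$ (controlled exactly as you do, via $\mathbb{P}(\rho\sigma>\kappa)\le\exp(-(\varepsilon\rho)^{-1}\kappa\lambda_0)\to 0$), and then concludes by invoking the convergence-of-characteristics theorem for pure-jump semimartingales (Theorem IX.4.21 of \cite{jacodshryaev}) together with the bound $\rho\,\E[|X^\varepsilon(T_1)-X^\varepsilon(0)|^2]\le 2N\rho^{-1}$, rather than your hand-rolled martingale maximal inequality plus Gronwall, which amounts to the same thing. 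One small correction: the probability that a given cycle lasts long enough for $\lambda$ to return to zero is only $O(\rho^{-1})$ (the post-refresh velocity can be nearly tangential to $\nabla U$), not exponentially small in $\rho$; the aggregate contribution of such cycles is still negligible because the displacement per cycle is itself $O_{\mathbb{P}}(\rho^{-1})$, which is how your accounting should be phrased.
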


Overall, the expected jump count is the sum of refresh and bounce events.
Refreshments arrive at rate $\rho$ per unit time. Using Proposition~\ref{prp:no_of_jumps},
the bounce contribution per unit time scales as $O(\varepsilon^{-1/2}/\rho)$.
Thus the cost is
\[
f(\rho)\;\asymp\;\rho\;+\;\varepsilon^{-1/2}\rho^{-1}.
\]
Minimising $f(\rho)$ gives $\rho^\star=\varepsilon^{-1/4}$, i.e., balancing refreshes and bounces.

\begin{remark}
 By the above argument, the overall computational cost is estimated as $O(\varepsilon^{-1/4})$ when the refreshment rate is appropriately chosen. This can be achieved by adaptively tuning the refreshment rate so that the numbers of refreshments and bounces are approximately equal. Equalising the two contributions is the same strategy taken in  \cite{bierkens2018high}. 
\end{remark}

\subsection{The Forward Event-Chain Sampler}\label{subsec:fec}

As $\varepsilon\to0$ the Bouncy Particle Sampler suffers from
snapping: deterministic bounces trap the trajectory on one level set,
causing the jump count to explode. To avoid the high jump rate, the Forward Event-Chain type update  \cite{Michel2017} and \cite{vanetti2017piecewise} proposed the Forward Event-Chain update, which randomises the
normal component of the velocity and thereby breaks this snapping behaviour. 
This process replaces  the deterministic operator $\mathcal{B}$  of the Bouncy Particle Sampler  by the Markov transition kernel $\mathcal{Q}$ defined in (\ref{eq:fec}).

Here we briefly explain why the choice of $\mathcal{Q}$ preserves the invariant velocity distribution, specifically, the standard normal distribution.
Decompose the velocity as
$$
v= \xi~n(x)+w
$$
where $\xi\in\mathbb R$ is the gradient-aligned component and
$w := n^{\perp}(x)v$ is tangential.  Each Forward Event-Chain event consists of  
\begin{enumerate}
\item[(i)] redrawing $\xi$ from a Rayleigh$(1)$ distribution and $w$ from normal distribution on the tangent space, 
followed by  
\item[(ii)] flipping the sign of the new gradient-aligned component.  
\end{enumerate}
The tangential update of Step~(i), and Step~(ii) are obviously measure-preserving, so we analyse the gradient-aligned update in step~(i) only.

Since
$\lambda(x,v)=(v^{\top}\nabla U(x))_+ =
c\,\xi_+$ with $c:=|\nabla U(x)|$,
the jump generator acting on a test function $f$ of the gradient-aligned coordinate is
\[
  (\mathcal L f)(\xi)
  = c \int_{0}^{\infty} \xi_+\,(f(\xi^*)-f(\xi))\,R(\mathrm d\xi^*),
\]
where $R$ is the Rayleigh$(1)$ distribution.
Because
$\xi_+\,(2\pi)^{-1/2}\exp(-\xi^2/2)\,\mathrm d\xi
  =(2\pi)^{-1/2}\mathbf 1_{\{\xi\ge0\}}\,R(\mathrm d\xi)$,
we have
\[
  \int_{\mathbb R} (\mathcal L f)(\xi)\,\frac{1}{\sqrt{2\pi}}e^{-\xi^2/2}\,\mathrm d\xi = 0,
\]
so $\xi\sim\mathcal N(0,1)$ is invariant under step~(i).  
Consequently, the composite update (i) and (ii) leaves
normal distribution unchanged. For a complete proof see  Section A of \cite{Michel2017}, and Section A of \cite{vanetti2017piecewise}.

In the limit, the process reflects according to the above random update when it hits the tangency set, and the number of jumps is $O(1)$  
as illustrated in Figure \ref{fig:fec_2d}. 
More specifically, we have the following. 
See Appendix~\ref{app:subsec:proof_cs} for the proof. 

\begin{theorem}\label{thm:fec}
Let $\gamma>0$ and $U(x)> U(x^*)+ \gamma$. Suppose that $x^\varepsilon\rightarrow x$. Then $\{n^\varepsilon(x^\varepsilon; \gamma)\}_{\varepsilon>0}$ is tight. 
\end{theorem}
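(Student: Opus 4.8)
The plan is to establish a geometric drift (Foster–Lyapunov) inequality for the embedded chain $(X^\varepsilon_{T_n}, V^\varepsilon_{T_n})$ sampled at refreshment times, with Lyapunov function $U$ (or an increasing function of $U$), where the ``small set'' is replaced by the central region $\{x : U(x)\le U(x^*)+\gamma\}$, and then invoke the standard fact that such a drift condition makes the hitting index $n^\varepsilon(x^\varepsilon;\gamma)$ tight. Concretely, I would first analyse one refreshment cycle of the limiting Forward Event-Chain flow: after a refreshment at $T_n$ the velocity is a fresh Gaussian draw (composed with the $\mathcal Q$-type reflection so that $v^\top n(x)\le 0$), the process free-flights until it hits the tangency set $\{\lambda=0\}$, and on the tangency set the averaged dynamics apply. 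The key point, already signalled in the excerpt, is that unlike the BPS the Forward Event-Chain update \emph{randomises} the normal component, so the process does not get pinned to a level set: between two refreshments the potential $U$ decreases by a strictly positive amount, and I would quantify this decrease. For a target with subquadratic (indeed any coercive convex) growth, a free-flight segment of order-one length starting from a generic direction with $v^\top\nabla U(x)\le 0$ produces a decrease in $U$ that is bounded below by a positive function of the current level $U(x)$, uniformly for $x$ outside the central region; averaging over the Gaussian velocity draw gives the expected one-step decrease.

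Second, I would transfer this to the actual process $\Phi^\varepsilon$ for small $\varepsilon$. Here I would lean on the convergence machinery already developed — in particular Theorem~\ref{thm:bps_refresh_trajectory}-style trajectory convergence adapted to the $\mathcal Q$ kernel (the Forward Event-Chain analogue established in the same appendix), together with the averaging principle of Section~\ref{subsec:averaging-pdmp} — to argue that the expected decrease of $U$ over one refreshment cycle of $\Phi^\varepsilon$ is, for $\varepsilon$ small, at least half the limiting decrease, uniformly over starting points in any compact annulus $\{U(x^*)+\gamma \le U(x) \le R\}$. Combined with a complementary argument controlling what happens when $U(X^\varepsilon_{T_n})$ is large (coercivity plus the fact that each free-flight lasts $O(1)$ time, so $U$ cannot blow up within a cycle and in fact strictly decreases at high levels too), this yields a drift inequality of the form $\E[\,V(X^\varepsilon_{T_{n+1}}) \mid \mathcal F_{T_n}\,] \le V(X^\varepsilon_{T_n}) - \delta$ on $\{U(X^\varepsilon_{T_n}) > U(x^*)+\gamma\}$ for a suitable $V$ and $\delta>0$ not depending on $\varepsilon$ (for $\varepsilon$ below some threshold), which is exactly the hypothesis needed for tightness of the first-passage index: $\E[n^\varepsilon(x^\varepsilon;\gamma)] \le V(x^\varepsilon)/\delta + C$, and $V(x^\varepsilon)$ is bounded since $x^\varepsilon\to x$.

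The main obstacle I anticipate is obtaining a \emph{uniform-in-$\varepsilon$} lower bound on the per-cycle decrease of $U$ that does not degrade near the tangency set and near the boundary of the central region. Two subtleties feed into this. First, immediately after a refreshment the process may already be close to the tangency set (if the fresh velocity is nearly tangential), so the free-flight before the first bounce can be short; one must show that the averaged post-tangency dynamics, or a subsequent refreshment, still guarantee net progress — this is where the randomisation in $\mathcal Q$ genuinely matters and where a careful lower bound on $\P(v^\top\nabla U(x) \text{ is bounded away from } 0)$ under the Gaussian draw is needed. Second, the trajectory convergence results hold up to the hitting time $\tau(\cdot;\gamma)$ and over compact sets, so one needs a uniform version of the limit over the relevant annulus, plus a separate (softer) argument handling very large $U$ via coercivity; stitching the ``moderate level'' estimate and the ``high level'' estimate into a single drift function $V$ requires some care but is routine once the one-cycle decrease is in hand. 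I would also note that the detailed quantitative version of this argument is precisely what Appendix~\ref{app:sec:fec} is set up to deliver, so the proof here can be organised as: (1) state the one-cycle drift lemma for the limit, (2) state its uniform $\varepsilon$-perturbation, (3) conclude tightness via the classical Foster–Lyapunov hitting-time bound.
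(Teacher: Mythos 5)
Your overall strategy --- a Foster--Lyapunov drift for the embedded jump chain in which the central region $L_\gamma$ plays the role of the small set --- is exactly the approach the paper takes, but your execution of the key drift estimate differs genuinely from the paper's. You propose to prove the one-cycle decrease of $U$ for the \emph{limiting} dynamics and then transfer it to the $\varepsilon$-process via a trajectory-convergence result in the style of Theorem~\ref{thm:bps_refresh_trajectory}. The paper never proves such a convergence result for the Forward Event-Chain and does not need one: it writes down the explicit transition kernel of the $\varepsilon$-jump-chain (equation~(\ref{eq:kernel}), with $Q_x$ the law of $-\xi n(x)+n^\perp(x)w$, $\xi\sim\mathrm{Rayleigh}(1)$, $w$ Gaussian), and computes $P^\varepsilon V/V$ for $V=e^{U}$ directly under the event-time density $\varepsilon^{-1}\lambda_t^+e^{-\varepsilon^{-1}\Lambda_t^+}$ (Proposition~\ref{prop:drift}). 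Combined with the deterministic estimate $\Lambda_{t(x,v)}\le -\min\{c\,(v^\top n(x))^2/|v|^2,\gamma/2\}$ of Lemma~\ref{lem:drift_bound}, and the observation that the law of $(v^\top n(x),|v|)$ under $Q_x$ is $x$-free, this yields a \emph{geometric} drift $P^\varepsilon V\le(1-\beta)V$ on the band $B_\gamma$ that is uniform in $x$ and in all small $\varepsilon$ simultaneously (Proposition~\ref{prop:fec_dc}). What the paper's route buys is precisely the uniformity you identify as your main obstacle: no uniform-over-starting-points version of the fluid limit is needed, and no passage from convergence in probability to convergence of expectations of $e^{U}$ (which would require a uniform integrability argument you do not supply). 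Your additive drift $\E[V(X_{T_{n+1}})\mid\mathcal F_{T_n}]\le V(X_{T_n})-\delta$ would also give tightness, but the geometric version gives the exponential tail $\P(n^\varepsilon\ge M)\lesssim r^{-M}$ for free.

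Two soft spots in your plan deserve flagging. First, your ``high level'' argument asserts that $U$ cannot increase over a cycle; for the $\varepsilon$-chain this is false --- the event can fire after the tangency time, when $U$ is increasing again --- and the paper instead controls upward excursions quantitatively via the second part of Proposition~\ref{prop:drift}, $P^\varepsilon(x,A_\delta(x))\le 2e^{-\varepsilon^{-1}\delta/2}$, summing this over the first $M-1$ steps to keep the chain in the band $B_\gamma$ with probability tending to one. Second, for the Forward Event-Chain the counted events are the $\mathcal Q$-updates themselves (there is no separate refreshment clock), so the embedded chain must be taken at bounce times, not refreshment times; your cycle description is consistent with this, but the framing should be corrected.
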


\begin{figure}[ht]
  \centering
  \includegraphics[width=0.5\textwidth]{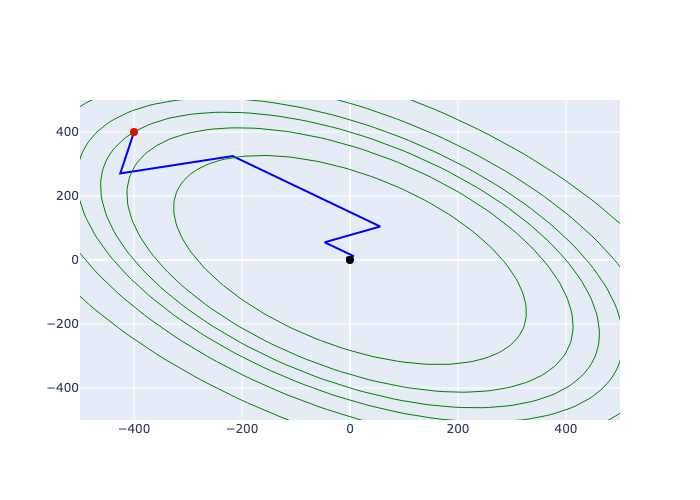}
  \caption{
Illustration of the Forward Event-Chain type update. The process does not exhibit a snapping phase. 
}
  \label{fig:fec_2d}
\end{figure}


\begin{remark}
Theorem~\ref{thm:fec} shows that the algorithm runs with $O(1)$ computational cost.  
Although the change from the original Bouncy Particle Sampler is only minor, it brings a substantial gain in efficiency in this context. 
\end{remark}

\section{Fluid limit of the Zig-Zag family Samplers}\label{sec:zzs_averaging}

This section establishes the fluid limit of the Zig-Zag Sampler. We begin by deriving the Karush--Kuhn--Tucker conditions, which expose the distinction between \textit{clipped} and \textit{snapping coordinates} in Section \ref{subsec:conv-opt-prob}.  Section~\ref{subsec:kkt_zigzag} shows how the optimiser fixes the velocity when the path reaches a tangency surface $\{\partial_n U=0\}$.  In Section \ref{sec:zigzag_diagonal_dominance}, we prove convergence of the Zig-Zag to a deterministic process  under a diagonal‑dominance assumption.

\subsection{Convex optimisation problem}
\label{subsec:conv-opt-prob}

For an integer $K>0$, 
let $H$ be a $K\times K$ symmetric positive-definite matrix, and let $c \in \mathbb{R}^K$ be a given vector. Consider the quadratic program
\[
\min_{v}  
\frac{1}{2}\,v^{\top} H\,v
+
c^{\top}v,
\]
subject to
\[
-1 \le v_n \le 1,
\quad
(n=1,\ldots,K).
\]
As we will see in Section~ \ref{subsec:kkt_zigzag}, this problem arises in the analysis of the local behaviour of the Zig-Zag Sampler, but for now we treat it in isolation.

To analyse the problem, we split the constraint $-1 \le v_n \le 1$ into two separate inequalities:
\[
v_n - 1 \le 0,
\quad
-\,v_n - 1 \le 0.
\]
Corresponding to these constraints, we define the Lagrangian
\[
L(v,\alpha,\beta)
=
\frac{1}{2}\,v^{\top} H\,v
+
c^{\top}v
+
\sum_{n=1}^K \alpha_n\,(v_n - 1)
+
\sum_{n=1}^K \beta_n\,(-\,v_n - 1),
\]
where $\alpha_n,\beta_n \ge 0\,$ for $n=1,\ldots,K$. Its solution $v^*$ satisfies the Karush--Kuhn--Tucker (KKT) conditions. First, $v^*$ satisfies the stationarity condition
\[
H\,v^*+c+\alpha-\beta=0.
\]
In addition, by complementary slackness, we have
\[
\alpha_n\,(v_n^* - 1) = 0,
\quad
\beta_n\,(v_n^* + 1) = 0,
\]
and therefore at least one of $\alpha_n^*$, $\beta_n^*$ should be equal to zero for the same index $n$. Hence, for each $n=1,\ldots,K$, exactly one of the following holds:
\begin{enumerate}
\item[(i)] $\alpha_n^* = -\,\bigl(H\,v^*\bigr)_n - c_n\ge 0$,  $\beta_n^* = 0$, and $v_n^* = +1$.
\item[(ii)] $\alpha_n^* = 0$, $\beta_n^* = \bigl(H\,v^*\bigr)_n + c_n\ge 0$, and $v_n^* = -1$.
\item[(iii)] $\alpha_n^* = \beta_n^* = 0$, and $-1 < v_n^* < 1$ and $\bigl(H\,v^*\bigr)_n + c_n = 0$.
\end{enumerate}

We call that any coordinate $n$ for which $v_n^* = \pm 1$ (cases 1 or 2 above) is a \emph{clipped coordinate}; write $\Kclipp$ for the set. The remaining indices, for which $-1 < v_n^* < 1$, are the \emph{snapping coordinates}; write $\Ksnapp$ for the set.  


\subsection{KKT condition and the limit process for the Zig-Zag Sampler}\label{subsec:kkt_zigzag}

We now present a heuristic derivation of how the Zig-Zag Sampler relates to the optimisation problem introduced above.
Suppose that the potential function $U(x)$ is twice continuously differentiable and $\nabla^2 U(x)$ is positive-definite. Recall that in the transient regime, the Zig-Zag Sampler switches the $n$th coordinate at a rate $ \varepsilon^{-1} (v_n\partial_n U(x))_+$.
In the small $\varepsilon$ limit, this means that the Zig-Zag Sampler dynamics for the $n$-th coordinate are described by
\[
  \dot{x}_n \;=\; -\operatorname{sgn}\bigl(\partial_n U(x)\bigr)
  \quad\text{whenever}\quad \partial_n U(x)\neq 0,
\]
since the velocity $v_n$ will `immediately' switch to this direction.
Accordingly, we are interested in the set of \emph{tangency coordinates} defined by
\[
  K=K(x) \;=\; \{\,n : \partial_n U(x)= 0\}.
\]
Set $J=K^c$. 
For $n \in J$, the dynamics is given as above, so $v_n = -\operatorname{sgn}(\partial_n U(x))$ for $n \in J$.
Suppose now that  $n\in K$ at  time $t= 0$. One has
\[
  v_n(t)\,\partial_n U\bigl(x(t)\bigr)
  \;\approx\;
  t\,v_n(0)\,\bigl[\nabla^2 U\bigl(x(0)\bigr)\,v(0)\bigr]_n+O(t^2)
  \quad\text{for small } t.
\]
Hence, if $v_n\bigl(\nabla^2 U(x)\,v\bigr)_n > 0$, the velocity component $v_n$ flips sign almost `immediately'.  Consequently, the velocity $v_n$,  viewed here as the \emph{averaged} velocity of the $n$-th coordinate of $x$, satisfies
\begin{equation}\label{eq:kkt_negativity}
  v_n \,\bigl(\nabla^2 U(x)\,v\bigr)_n \;\le\; 0
\end{equation}
if $n$ stays in the tangency coordinates. 
By the dynamics of the Zig-Zag Sampler, this implies that 
\begin{equation}\nonumber
  \bigl(\nabla^2 U(x)\,v\bigr)_n \neq 0\quad \Longrightarrow\quad v_n \;=\; -\,\operatorname{sgn}\bigl[\bigl(\nabla^2 U(x)\,v\bigr)_n\bigr].
\end{equation}
On the other hand, when $(\nabla^2 U(x)\,v)_n=0$, the averaged $n$-th component $v_n$ need not lie in $\{-1,+1\}$; instead, it may lie anywhere in $[-1,+1]$.

We will now establish that the averaged velocity  in the tangency coordinates, $v_K = (v_k)_{k \in K}$, can be interpreted as a solution to a quadratic program. 
Define
\begin{align*}
  H & \;=\; H(x) \;=\; (H_{kl})_{k,l \in K} \;=\; \bigl(\partial_k\partial_l U(x)\bigr)_{k,l\in K}, \\
  c & \;=\;  c(x) \;=\; (c_{k})_{k \in K} \;=\; \Bigl(\sum_{l\in J}\partial_k\partial_l U(x)\,v_l\Bigr)_{k\in K} \;=\; \Bigl(-\sum_{l\in J}\partial_k\partial_l U(x)\,\operatorname{sgn}\left(\partial_l U(x)\right)\Bigr)_{k\in K}.
\end{align*}
For $k\in K$, let $\alpha_k=\left[-(Hv+c)_k\right]_+$ and 
$\beta_k=\left[(Hv+c)_k\right]_+$. 
For each $k\in K$, exactly one of the following holds: 
\begin{enumerate}
    \item[(i)] $\alpha_k > 0$ and $\beta_k = 0$, then then $(Hv+c)_k = (\nabla^2U(x) v)_k < 0$ and therefore $v_k = +1$.
    \item[(ii)] $\alpha_k = 0$ and $\beta_k > 0$, then $(Hv+c)_k = (\nabla^2U(x) v)_k > 0$ and therefore $v_k = - 1$.
    \item[(iii)]
    $\alpha_k = \beta_k = 0$ then $(Hv+c)_k =(\nabla^2 U(x) v)_k = 0$ and  the average velocity satisfies $-1 \le v_k \le 1$.
\end{enumerate}
This corresponds exactly to the KKT conditions discussed in the previous section. We conclude that 
$v_K$ is the solution to the quadratic program in Subsection \ref{subsec:conv-opt-prob}. 

From the formal argument above, the limiting dynamics of the Zig-Zag Sampler can be expressed as
$$\dot x = v$$ with 
\begin{align*}
  v_j &  \;=\; -\operatorname{sgn}\bigl(\partial_j U(x)\bigr),
  \quad
  j\in J, \\
  v_K & \;=\; \arg\min  \left\{ \frac{1}{2}v_K^{\top}Hv_K+c^{\top}v_K:  -1\le v_k\le 1,\quad k\in K \right\}.
\end{align*}
A schematic, piecewise-smooth realisation is summarised in Procedure~\ref{proc:update}.

\begin{figure}[htbp]
\centering
\begin{minipage}{0.90\linewidth}\small        
\textbf{Procedure \refstepcounter{algorithm}\thealgorithm\label{proc:update} 
(Coordinate-update rule)}

\begin{enumerate}\setlength{\itemsep}{4pt}
  \item[(i)] \textbf{Initial split.}\\
        $K \gets \{n : \partial_n U(x)=0\}$ \;(tangency),\quad
        $J \gets \{1,\dots,N\}\!\setminus\! K$ \;(non-tangency).
  \item[(ii)] \textbf{Assign velocities.}
        \begin{enumerate}\setlength{\itemsep}{2pt}
          \item[(a)] If $n\in J$: $\,v_n \gets -\operatorname{sgn}\!\bigl(\partial_n U(x)\bigr)$.
          \item[(b)] If $n\in K$: \textit{solve} the convex program  
                    (Sec.~\ref{subsec:conv-opt-prob}) to obtain $v_K$.
        \end{enumerate}
        \textbf{Evolve:} integrate $\dot x=v$ until next event. 
  \item[(iii)] \textbf{Boundary event.}\\
        When some $k\in J$ first satisfies $\partial_k U(x)=0$, set  
        $K \gets K\cup\{k\},\quad J \gets J\setminus\{k\}$  
        and \emph{return to Step 2}.
\end{enumerate}
\end{minipage}
\end{figure}

We classify tangency coordinates as either clipped or snapping, following the terminology introduced for the optimisation problem. We classify each coordinate according to Table \ref{tab:tangency-classes}. 

\begin{table}[htbp]
\centering
\renewcommand{\arraystretch}{1.05}

\begin{tabular}{lccc}
\toprule
& \textbf{Non-tangency ($J$)} & \multicolumn{2}{c}{\textbf{Tangency ($K$)}}\\
&                       & Clipped ($\Kclipp$) & Snapping ($\Ksnapp$)\\
\midrule
$\partial_n U(x)$
& $\neq 0$               & $=0$   & $=0$\\
Average velocity $v_n$
& $-\operatorname{sgn}\!\bigl(\partial_n U(x)\bigr)$ 
& $\;\pm1$             & $(-1,1)$\\
\bottomrule \\
\end{tabular}
\caption{Classification of coordinates in the Zig-Zag fluid limit. 
If the gradient component is non-zero (left column) the sampler moves strictly against it.  
When $\partial_n U(x)=0$ (tangency) the optimiser either drives the velocity to a boundary value ($\pm1$, \textbf{clipped}) or leaves it in the interior $(-1,1)$ (\textbf{snapping}).}
\label{tab:tangency-classes}
\end{table}

\subsection{Diagonal dominant scenario}\label{sec:zigzag_diagonal_dominance}

We consider the following diagonal-dominance condition: 
\begin{equation}
    \label{DD}
\partial_i^2U(x)-\sum_{j\neq i, j=1,\ldots, N}|\partial_i\partial_j U(x)|>0
\end{equation}
for any tangency coordinate $i$. 
Under this condition, there is no clipped coordinate, and set of the tangency set coincides with the snapping set. Thus the set $\{1,\ldots, N\}$ partitioned into the snapping coordinates denoted by $K$ since $K=\Ksnapp$ in this case, and the non-tangency coordinate denoted by $J$.  

If we apply the averaging argument of Section \ref{subsec:averaging-pdmp},
the state splits into
\[
\renewcommand{\arraystretch}{1.05}
\begin{array}{c|c}
\textbf{slow variables}\quad  &  \quad\textbf{fast variable} \\ \hline
(x,\,v_{J}) & v_{K}
\end{array}.
\]
The extended generator $\mathcal{L}_{\text{ZZS}}^\varepsilon$ has a decomposition $\varepsilon^{-1}\mathcal{L}_{0,\text{ZZS}}+\mathcal{L}_{1,\text{ZZS}}$ such that
$$
\mathcal{L}_{0,\text{ZZS}}f(x,v)=\sum_{i=1}^N(v_i\partial_i U(x))_+(\mathcal{F}_i-I)f(x,v),\quad 
\mathcal{L}_{1,\text{ZZS}}f(x,v)=v^{\top}\partial_xf(x,v). 
$$
As in the Bouncy Particle Sampler, in a neighbourhood of the tangency set
$\bigcup_{k\in K}\{\partial_k U(x)=0\}$,
the fast dynamics flip any positive
$\lambda_k:=v_k\,\partial_k U(x)$ to negative, while the free-flight dynamics
generated by $\mathcal{L}_{1,\mathrm{ZZS}}$ imply $\dot{\lambda}_k>0$ under the
diagonal-dominance condition \eqref{DD}. This closes a micro-cycle (see
Subsection~\ref{subsec:averaging-pdmp}).

For $j\in J$, the fast dynamics enforce
$v_j=-\operatorname{sgn}(\partial_j U(x))$. For $k\in K$, averaging over the
micro-cycle keeps the tangency constraints invariant along the slow motion:
\[
\frac{\mathrm d}{\mathrm dt}\,\partial_k U(x(t))
=\sum_{j=1}^N \partial_k\partial_j U(x)\,v_j = 0,\qquad k\in K.
\]
Since $v_J=(v_j)_{j\in J}$ is fixed, the admissible slow direction for
$v_K=(v_k)_{k\in K}$ is determined by
$H_{KK}(x)\,v_K = -\,H_{KJ}(x)\,v_J$, hence
\[
v_K \;=\; -\,H_{KK}(x)^{-1} H_{KJ}(x)\,v_J,
\]
where
\[
H(x):=\nabla^2 U(x)=
\begin{pmatrix}
H_{KK}(x)&H_{KJ}(x)\\
H_{JK}(x)&H_{JJ}(x)
\end{pmatrix}.
\]



Until the process reaches the set $\{x:U(x)\le U(x^*)+\gamma \}$, it evolves according to  Procedure~\ref{proc:update}. In this case, the optimisation problem is solved explicitly. 

Let $\phi_t(x)$ denote the deterministic flow defined by the rule. By construction, the hitting time to the high-probability region, $\tau(x; \gamma)$, is finite, as each coordinate $n$ advances at unit speed and never flips sign until it reaches the boundary $\partial_n U(x) = 0$, beyond which it remains confined to that constraint surface.
The proof of the following theorem is in  Appendix~\ref{app:subsec:zigzag_trajectory}.

\begin{theorem}\label{thm:zigzag_trajectory}
Assume \ref{assumption} and $U(x)$ is diagonally dominant. 
Let $\gamma>0$. 
Let $x$ be such that $U(x)>U(x^*)+ \gamma$ and let $x^\varepsilon\rightarrow x$. The Zig-Zag Sampler starting from $x^\varepsilon$ and $v\sim\mathcal{U}(\{-1,+1\}^N)$ converges in probability to $\phi_t(x)$ uniformly for $t\le \tau(x; \gamma)$. 
\end{theorem}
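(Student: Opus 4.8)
The plan is to reduce the theorem to the non-ergodic averaging principle of Appendix~\ref{app:sec:proof_PDMP}, applied separately on each of the finitely many smooth pieces of the limiting flow $\phi_t$, with the diagonal-dominance condition \eqref{DD} supplying the uniform estimates that make every fast micro-cycle close at a rate bounded away from zero. \textbf{Step 1 (localisation and structure of the limit).} First I would use Assumption~\ref{assumption} together with $\frac{\dif}{\dif t}U(\phi_t(x))=-\sum_{j\in J}|\partial_j U(\phi_t(x))|\le 0$ to confine $\{\phi_t(x):0\le t\le\tau(x;\gamma)\}$, and — by a bootstrap with Steps~2--3 — the Zig-Zag trajectory for small $\varepsilon$, to a fixed compact band $\mathcal K\subset\{y:U(x^*)+\gamma/2\le U(y)\le U(x)+1\}$. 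On $\mathcal K$ the derivatives of $U$ up to order three are bounded, $\nabla^2 U$ has eigenvalues in $[\underline m,\overline m]$ with $\underline m>0$, $|\nabla U|\ge\delta_1>0$ (since $\nabla U=0$ only at $x^*\notin\mathcal K$), and the gap in \eqref{DD} is $\ge\delta_0>0$; in particular $J\neq\emptyset$ on $\mathcal K$, so $\frac{\dif}{\dif t}U(\phi_t(x))\le-\delta_1/\sqrt N$ and $\tau(x;\gamma)<\infty$. Next I would record that the tangency set along $\phi_t$ is non-decreasing — Procedure~\ref{proc:update} keeps $\partial_kU(\phi_t(x))=0$ once that holds — so $[0,\tau(x;\gamma)]$ is a concatenation of at most $N+1$ phases $0=\sigma_0<\dots<\sigma_m=\tau(x;\gamma)$ on each of which $K$, hence $v_J=-\sgn(\nabla_JU)$ and $v_K=-H_{KK}^{-1}H_{KJ}v_J$, varies smoothly; at each $\sigma_i$ exactly one coordinate enters $K$, and \eqref{DD} forces $\frac{\dif}{\dif t}|\partial_kU(\phi_t(x))|<0$ there, so the crossing is transversal and the flow is genuinely piecewise-$C^1$.

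\textbf{Step 2 (uniform sign-flip margin).} The key quantitative input I would establish is that for every admissible partition $(K,J)$ and every $y\in\mathcal K$ the quadratic-program solution obeys $\|v_K\|_\infty\le1-\eta_0$ for a constant $\eta_0>0$ depending only on $\mathcal K$. Writing $M=\|v_K\|_\infty=|v_{k_*}|$, the $k_*$-th row of $H_{KK}v_K=-H_{KJ}v_J$ with $\|v_J\|_\infty\le1$ gives $M\bigl(\partial_{k_*}^2U-\sum_{k'\in K\setminus\{k_*\}}|\partial_{k_*}\partial_{k'}U|\bigr)\le\sum_{j\in J}|\partial_{k_*}\partial_jU|$; since by \eqref{DD} the bracket is $\ge\delta_0+\sum_{j\in J}|\partial_{k_*}\partial_jU|$, this yields $M\le s/(\delta_0+s)$ with $s=\sum_{j\in J}|\partial_{k_*}\partial_jU|\le N\overline m$, hence $M\le N\overline m/(\delta_0+N\overline m)=:1-\eta_0<1$. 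Combined with the fact that under diagonal dominance there are no clipped coordinates, so $(\nabla^2U\,\bar v)_k=0$ for $k\in K$ (case (iii) of the KKT analysis, refining \eqref{eq:kkt_negativity}), the cycle-averaged drift of $\lambda_k=v_k\partial_kU$ obeys, to leading order on each leg $v_k=\pm1$ of a micro-cycle, $\dot\lambda_k\approx\partial_k^2U\,(1\mp\bar v_k)\ge\underline m\,\eta_0>0$ uniformly on $\mathcal K$ and over $K$. This is exactly the sign-flip inequality the averaging principle requires; it also bounds the inter-flip time and the amplitude of the oscillation of $\partial_kU$ by $O(\varepsilon^{1/2})$, so that $O(\varepsilon^{-1/2})$ flips occur per unit time, mirroring Proposition~\ref{prp:no_of_jumps}.

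\textbf{Step 3 (averaging on each phase and gluing).} On the interior of a phase I would apply the averaging principle of Appendix~\ref{app:sec:proof_PDMP} with slow variables $(x,v_J)$ and fast variable $v_K$: the smoothness and non-degeneracy of Step~1 and the uniform margin of Step~2 verify its hypotheses, and the relevant ``cycle average'' of $v_K$ is precisely the solution of the box-constrained quadratic program, so for every $\kappa>0$ one gets $\sup_{\sigma_i\le t\le\sigma_{i+1}-\kappa}\bigl(|X_t^\varepsilon-\phi_t(x)|+|V_{J,t}^\varepsilon-v_J|\bigr)\xrightarrow{\P}0$, and hence also $\sup|U(X_t^\varepsilon)-U(\phi_t(x))|\to0$ there since $|V^\varepsilon|=\sqrt N$. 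To cross $\sigma_{i+1}$ I would use transversality: the $\varepsilon$-entrance time $\sigma_{i+1}^\varepsilon=\inf\{t:|\partial_kU(X_t^\varepsilon)|\le\varepsilon^{1/3}\}$ converges to $\sigma_{i+1}$ in probability, the new micro-cycle ignites within $O(\varepsilon^{1/2})$ time while $X^\varepsilon$ moves $O(\kappa)$, and iterating over the $\le N$ phases and letting $\kappa\to0$ after $\varepsilon\to0$ gives $\sup_{0\le t\le\tau(x;\gamma)}|X_t^\varepsilon-\phi_t(x)|\xrightarrow{\P}0$. The random initial velocity is irrelevant because $v_j$ aligns to $-\sgn(\partial_jU(x))$ within $O(\varepsilon)$ time for $j\in J(x)$ and the micro-cycle is entered regardless of the starting sign for $k\in K(x)$, which also closes the bootstrap of Step~1.

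\textbf{Main obstacle.} The hard part is Step~3, specifically the non-ergodicity of $\mathcal L_{0,\text{ZZS}}$ on $v_K$: a coordinate whose rate hits zero never restarts on its own, so there is no invariant law to average against and the averaging must be performed over the micro-cycles generated by $\mathcal L_{1,\text{ZZS}}$. Turning this into quantitative control — boundary-layer estimates showing the process settles into the micro-cycle regime on the $\varepsilon^{1/2}$ scale, that the per-cycle time-average of $v_K$ matches the quadratic program up to $O(\varepsilon^{1/2})$, and that these averages concentrate over the $O(\varepsilon^{-1/2})$ cycles in an $O(1)$ window — is the business of Appendix~\ref{app:sec:proof_PDMP}; the real work in the present theorem is to verify those hypotheses uniformly, which is exactly why the diagonal-dominance margin $\eta_0$ of Step~2 is indispensable. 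A secondary difficulty is the phase transitions, where $\bar v_k$ jumps discontinuously from $\pm1$ to an interior value and a fresh micro-cycle must be ignited with negligible macroscopic displacement; here the transversality $\frac{\dif}{\dif t}|\partial_kU(\phi_t(x))|<0$, again a consequence of \eqref{DD}, keeps the phase count finite and the crossings clean.
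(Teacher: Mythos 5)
Your outline reproduces the architecture of the paper's proof — split $[0,\tau(x;\gamma)]$ into phases bounded by hitting times of the tangency surfaces, derive uniform estimates on the band from diagonal dominance, establish convergence on each phase, and glue — and your $\eta_0$-margin computation showing $\|v_K\|_\infty\le 1-\eta_0$ under \eqref{DD} is a correct and useful addition (the paper asserts the absence of clipped coordinates in Section~\ref{sec:zigzag_diagonal_dominance} without quantifying it). Two points, however, diverge from what the paper actually proves and leave a genuine gap.

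First, the sign-flip derivation in your Step~2 is not valid as a pathwise bound. You obtain $\dot\lambda_k\approx\partial_k^2U\,(1\mp\bar v_k)\ge\underline m\,\eta_0$ by substituting the micro-cycle averages $\bar v_j$ for the instantaneous velocities $v_j$, $j\neq k$, while keeping $v_k=\pm1$. On a leg of a micro-cycle every coordinate in $K$ is simultaneously $\pm 1$, not at its cycle average, so this substitution does not bound the actual generator. The bound the paper uses (to verify the hypothesis of Lemma~\ref{lem:monotone} in Corollary~\ref{cor:zz_constraint}) is the simpler and pathwise-correct
\[
\frac{\dif}{\dif t}\,\lambda_{t,k}^\varepsilon
= V_k^\varepsilon(t)\,\bigl(\nabla^2 U(X^\varepsilon(t))\,V^\varepsilon(t)\bigr)_k
\;\ge\; \partial_k^2U - \sum_{j\neq k}\bigl|\partial_k\partial_j U\bigr|\;\ge\;\delta_0>0,
\]
which follows directly from \eqref{DD} since $|V_j^\varepsilon|=1$ for all $j$. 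Your $\underline m\eta_0$ bound is neither tight nor needed.

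Second, invoking ``the averaging principle of Appendix~\ref{app:sec:proof_PDMP}'' as a black box with hypotheses to check is not available here: the appendix does not state an abstract, reusable averaging theorem for non-ergodic fast operators. What it actually supplies for the Zig-Zag chain is a chain of concrete estimates — Lemma~\ref{lem:monotone} (a deterministic oscillation bound for monotone-between-flips functions), Corollary~\ref{cor:zz_constraint} (its application via Dynkin's formula and Cauchy--Schwarz to get $\E\sup_{t\le T\wedge\tau^\varepsilon}|\partial_kU(X^\varepsilon(t))|=O(\varepsilon^{1/4})$ and $\E\int|\partial_kU|\,\dif t=O(\varepsilon^{1/2})$), and Proposition~\ref{prop:zigzag} (an integration-by-parts identity for $\int W^\varepsilon\,\dif u$ with $W^\varepsilon=V_K^\varepsilon+H_{KK}^{-1}H_{KJ}V_J^\varepsilon$, followed by Gronwall). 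Your Step~3 names none of this, and its substitutes (the $\varepsilon^{1/3}$-entrance time, transversality from \eqref{DD}) address only the gluing, not the within-phase convergence. So the plan is compatible with the paper's strategy but leaves the quantitative core — exactly the part where the non-ergodicity of $\mathcal L_{0,\mathrm{ZZS}}$ bites — unproved rather than reduced to a cite-able result.
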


Let $N_{i,t}^\varepsilon$ be the number of jumps corresponding to the $i$-th coordinate. The following proposition is a part of  Corollary~\ref{cor:zz_constraint}. 

\begin{proposition}
Suppose that $\nabla_i U(x^\varepsilon)=0$, and $U(x^*)+\gamma<U(x^\varepsilon)<U(x^*)+\gamma^{-1}$ for some $\gamma\in (0,1)$. 
For $T>0$, 
$$
\varepsilon^{1/2}~\E[N_{i, T\wedge \tau^\varepsilon(x^\varepsilon;\gamma)}^\varepsilon]
$$
is bounded away from zero and infinity as $\varepsilon\rightarrow 0$. 
\end{proposition}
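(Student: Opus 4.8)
The plan is to derive sharp two-sided bounds on $\E[N_{i,T\wedge\tau^\varepsilon}^\varepsilon]$ by comparing the jump count of the $i$-th coordinate to the expected value of its compensator, exactly as in Subsection~\ref{subsec:comp-cost}. Concretely, by the compensator identity,
\[
\E[N_{i,t}^\varepsilon]=\varepsilon^{-1}\,\E\!\left[\int_0^{t}\bigl(V_i^\varepsilon(s)\,\partial_i U(X^\varepsilon(s))\bigr)_+\,\dif s\right],
\]
so the task reduces to showing that the time-average of the intensity $(V_i^\varepsilon\partial_i U(X^\varepsilon))_+$ over $[0,T\wedge\tau^\varepsilon]$ is of order $\varepsilon^{1/2}$, both above and below. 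The heuristic reason is that coordinate $i$ is a tangency (snapping) coordinate: $\partial_i U(X^\varepsilon)$ stays within an $O(\varepsilon^{1/2})$ boundary layer of zero, so the pre-intensity $\lambda_i=V_i^\varepsilon\partial_i U(X^\varepsilon)$ is itself $O(\varepsilon^{1/2})$ in magnitude, while it is flipped to its negative part at rate $\varepsilon^{-1}\lambda_i$, keeping $\lambda_i$ oscillating rapidly around zero. Multiplying the $O(\varepsilon^{1/2})$ size by the $\varepsilon^{-1}$ prefactor and integrating over a time interval of order one gives $\E[N_{i,T\wedge\tau^\varepsilon}^\varepsilon]\asymp\varepsilon^{-1/2}$, i.e.\ $\varepsilon^{1/2}\E[N_{i,T\wedge\tau^\varepsilon}^\varepsilon]\asymp 1$.

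The key steps, in order, are as follows. First I would invoke Theorem~\ref{thm:zigzag_trajectory} and the band hypothesis $U(x^*)+\gamma<U(X^\varepsilon)<U(x^*)+\gamma^{-1}$ (which holds on $[0,T\wedge\tau^\varepsilon]$ by the trajectory convergence together with a confinement estimate) to get uniform bounds: on this time interval, $|\nabla U|$, $\|\nabla^2 U\|$, and the relevant Hessian eigenvalues are bounded above and below by positive constants, and the diagonal-dominance constant in \eqref{DD} is bounded below. Second, I would establish the \emph{boundary-layer bound}: for the snapping coordinate $i$, $|\partial_i U(X^\varepsilon(s))|\le C\varepsilon^{1/2}$ with high probability (and in $L^1$) uniformly on $[0,T\wedge\tau^\varepsilon]$. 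This is the crux and is precisely the estimate underlying the averaging argument of Subsection~\ref{subsec:averaging-pdmp}; it follows from writing $\dif\,\partial_i U(X^\varepsilon)=\sum_j\partial_i\partial_j U(X^\varepsilon)V_j^\varepsilon\,\dif s$, decomposing $\dot\lambda_i=V_i^\varepsilon\partial_i\partial_i U\,V_i^\varepsilon+\sum_{j\ne i}V_i^\varepsilon\partial_i\partial_j U\,V_j^\varepsilon$, using diagonal dominance to see that whenever $\lambda_i>0$ the drift of $\lambda_i$ is bounded below by a positive constant (forcing frequent sign flips that pull $\lambda_i$ back), and controlling the excursion size of $\lambda_i$ away from zero by a martingale/supermartingale argument on the scale $\varepsilon^{1/2}$. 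Third, for the \emph{lower bound} on $\E[N_{i,t}^\varepsilon]$ I would show that $\partial_i U(X^\varepsilon)$ does not collapse to a much smaller scale than $\varepsilon^{1/2}$: the fluctuations injected by the jump mechanism keep $\E[\lambda_i(s)_+]\ge c\varepsilon^{1/2}$ on a set of times of positive measure, so the integrated intensity is at least $c'\varepsilon^{1/2}$, giving $\E[N_{i,T\wedge\tau^\varepsilon}^\varepsilon]\ge c''\varepsilon^{-1/2}$. Finally I would combine the two bounds, note that $\tau^\varepsilon\to\tau(x;\gamma)\in(0,\infty)$ so that the effective averaging interval length is bounded away from $0$ and $\infty$, and conclude $\varepsilon^{1/2}\E[N_{i,T\wedge\tau^\varepsilon}^\varepsilon]$ is bounded away from $0$ and $\infty$.

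The main obstacle I expect is the boundary-layer estimate in the second step, specifically obtaining the \emph{matching} lower bound of order $\varepsilon^{1/2}$ on the typical size of $|\partial_i U(X^\varepsilon)|$ (equivalently, on $\E[\lambda_i(s)_+]$). The upper bound $O(\varepsilon^{1/2})$ is a fairly standard boundary-layer/supermartingale computation once diagonal dominance gives the restoring drift, but ruling out that the rapid flipping drives $\lambda_i$ to a smaller scale (say $o(\varepsilon^{1/2})$ on average) requires a genuine two-sided analysis of the fast dynamics near the tangency surface --- essentially identifying the stationary law of the rescaled process $\varepsilon^{-1/2}\lambda_i$ on the fast time scale and showing it has a nondegenerate positive part on average. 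I would handle this by a change of time $s=\varepsilon\tau$, which turns the fast oscillation of $\lambda_i$ into an $O(1)$-rate process, and then apply a Foster--Lyapunov argument (or a direct analysis of the one-dimensional reflected-type dynamics) to pin down both tails of the occupation measure of $\varepsilon^{-1/2}\lambda_i$; the coupling between coordinates is controlled using the uniform bounds from Step~1 and treating the off-diagonal terms $\sum_{j\ne i}\partial_i\partial_j U\,V_j^\varepsilon$ as a bounded perturbation whose effect is dominated, in the diagonal-dominant regime, by the restoring term $\partial_i^2 U$. All remaining steps --- the compensator identity, the localisation to the band, and passing $\varepsilon\to0$ --- are routine given these estimates, and this proposition is in any case subsumed by Corollary~\ref{cor:zz_constraint}, to whose proof the full details are deferred.
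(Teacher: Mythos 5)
Your compensator-identity framing and your upper-bound heuristic are in the right spirit: the paper, in Corollary~\ref{cor:zz_constraint}, likewise applies the compensator identity and then establishes $\E\!\left[\int_0^{T\wedge\tau^\varepsilon}(\lambda_{s,i}^\varepsilon)_+^{2}\,\dif s\right]=O(\varepsilon)$ by Dynkin's formula applied directly to the pre-intensity $\lambda_{t,i}^\varepsilon$, and then deduces $\E\!\left[\int(\lambda_{s,i}^\varepsilon)_+\dif s\right]=O(\varepsilon^{1/2})$ by Cauchy--Schwarz (this is the argument of Proposition~\ref{prop:bps_constraint}, transported to a single coordinate using diagonal dominance to ensure $\dot\lambda_{t,i}^\varepsilon>0$). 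Note, however, that the paper's \emph{uniform-in-time} estimate is $\E\!\left[\sup_{t}|\partial_i U(X^\varepsilon(t))|\right]\lesssim\varepsilon^{1/4}$, not $\varepsilon^{1/2}$; the $\varepsilon^{1/2}$ scale refers to the time-average, not the running maximum.

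The genuine gap is in your plan for the lower bound. You propose a change of time $s=\varepsilon\tau$, a Foster--Lyapunov analysis, and the identification of the occupation measure of $\varepsilon^{-1/2}\lambda_i$, and you flag this as the main obstacle. None of that is needed, and it is where you have missed the key idea. The paper proves the lower bound on the jump count \emph{from the upper bound on the intensity integral}, via the purely deterministic pathwise inequality of Lemma~\ref{lem:monotone}(ii): for a right-continuous function $f$ with $f(0)=0$, sign-flipping discontinuities, and a derivative bounded between $0<m\le M$ on continuity pieces, one has
\[
\frac{m^2}{2M}\Bigl(1+\frac{M}{m}\Bigr)^{-1}\frac{T^2}{1+N_T}\;\le\;\int_0^T|f(t)|\,\dif t\;\le\;\Bigl(1+\frac{M}{m}\Bigr)\int_0^T(f(t))_+\,\dif t .
\]
Because the drift of $\lambda_{t,i}^\varepsilon$ between jumps is bounded above and below by positive constants on the band $B_\gamma$ (uniform Hessian bounds plus diagonal dominance), $\lambda_{t,i}^\varepsilon$ satisfies the hypotheses with $f(0)=0$ (since $\partial_iU(x^\varepsilon)=0$). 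Thus on every path a \emph{small} value of $\int(\lambda_{s,i}^\varepsilon)_+\dif s$ forces a \emph{large} $N_{i,T}^\varepsilon$. Combined with Markov's inequality applied to the established upper bound $\E\!\left[\int(\lambda_{s,i}^\varepsilon)_+\dif s\right]\lesssim\varepsilon^{1/2}$, this gives $\P\bigl(\varepsilon^{1/2}N_{i,T}^\varepsilon\le M^{-1}\bigr)\lesssim\varepsilon^{1/2}+M^{-1}$ (see Corollary~\ref{cor:bps_constraint}), hence $\E[N_{i,T}^\varepsilon]\gtrsim\varepsilon^{-1/2}$. No two-sided analysis of the fast dynamics, no stationary law of a rescaled process, and no Foster--Lyapunov estimate are required: the deterministic lemma ties the jump count and the excursion integral together in both directions, so proving the upper bound is enough.
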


\begin{remark}
The number of jumps is of the order $ \varepsilon^{-1/2} $, as in the Bouncy Particle Sampler. A key feature of the Zig-Zag Sampler is that it does not require a tuning parameter to reach near the centre, unlike the random walk Metropolis algorithm and the Bouncy Particle Sampler.
\end{remark}

\begin{remark}\label{remark:non-convex-zz}
As in the case of the Bouncy Particle Sampler, the snapping behaviour does not occur when the potential function is not convex. Snapping arises because the sum of (\ref{eq:kkt_negativity}), i.e., $\sum v_n(\nabla^2 U(x)v)_n$ is positive. If the sum is positive, the total sum of $\lambda_n = v_n \, \partial U(x)$ for $n=1,\ldots, N$ is increasing over time, where $\lambda_n^+$ is the intensity function of the $n$-th coordinate. Whenever $\lambda_n$ becomes positive, it immediately flips to $-\lambda_n$. However, because the overall sum remains positive and keeps increasing, another coordinate $m$ will eventually attain a positive $\lambda_m$ and flip as well.
\end{remark}

\subsection{Non-diagonal-dominant scenario (illustrative)}

We demonstrate how the system's dynamics change in  Procedure~\ref{proc:update} for non-diagonally-dominant Gaussian case. Typically, the process adds the new tangency coordinate as a  snapping coordinate each time it intersects a new tangency surface $\{\partial_n U=0\}$ as in the diagonal-dominance scenario. So the number of tangency coordinates is non-decreasing. However, this is not always the case. 

Consider the two-dimensional case with the Hessian:  
\[
\nabla^2U(x)\equiv H=\begin{pmatrix}
    0.4& 0.5\\
    0.5 &1
\end{pmatrix}. 
\]
When $K=\emptyset$, the process follows the direction $v\in\{-1,+1\}^2$ until it hits a hyperplane $(Hx)_1=0$ or $(Hx)_2=0$. Suppose that it hits $(Hx)_1=0$ at time $t$, which implies that $t\mapsto v_1(H(x+vt))_1$ is increasing, i.e., $v_1(Hv)_1>0$, so $v_1=v_2$. At this point, the tangency coordinate is $K=\{1\}$ at time $t$, and the corresponding optimisation problem is 
$$
\min_{-1\le v_1\le 1}\frac{1}{2}H_{11}v_1^2 +c~ v_1,\quad c=H_{12}v_2.
$$
The solution to this problem is $v_1=-v_2$ in this case. In this case, the coordinate $1$ is clipped, and the process
reverses direction rather than becoming more constrained. 
The tangency coordinate is $\emptyset$ after the hitting time.  However, when the process next hits to the hyperplane $(Hx)_2=0$, it becomes trapped in the hyperplane, i.e., $K=\{2\}$ and $2$ is a snapping coordinate. 
See
Figure \ref{fig:zz_2d_diagonal_non_dominant}. 

\begin{figure}[ht]
  \centering
  \includegraphics[width=0.5\textwidth]{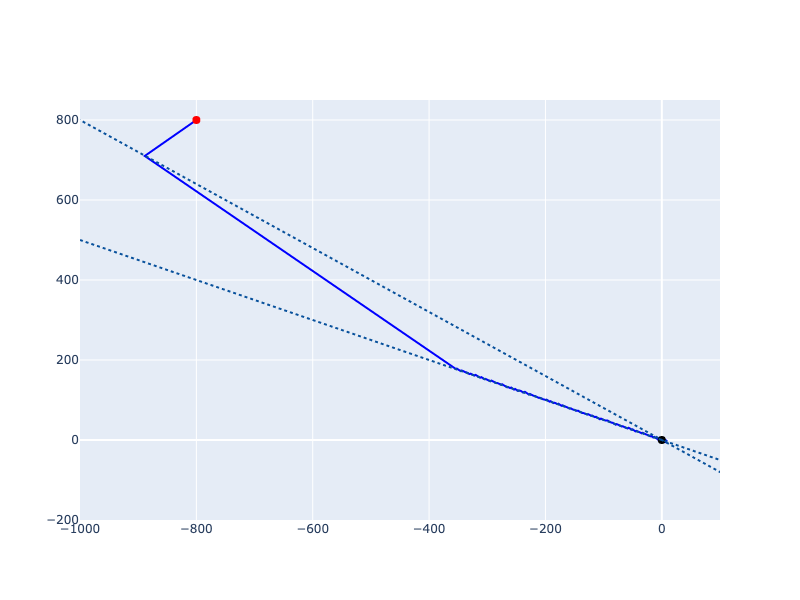}
  \caption{
Illustration of the Zig-Zag Sampler in the ``change direction'' clipped case. The starting position is indicated by the red dot.}
  \label{fig:zz_2d_diagonal_non_dominant}
\end{figure}

In the setting above, when the trajectory meets the first hyperplane, the number of tangency coordinates remains unchanged immediately before and after the hitting time. In a typical setup, this count usually increases, just as it does at the second hyperplane. There are, however, rare cases in which the count actually decreases after the intersection, compared with its value just before the hitting time. Consider the following Hessian: 
\[
\nabla^2 U(x)\equiv H=\begin{pmatrix}
1 & -0.3 & 1 \\
-0.3 & 1 & 1 \\
1 & 1 & 4
\end{pmatrix}. 
\]
Suppose that $v_2=v_3=+1$ and $K=\{1\}$, meaning that $(Hx)_1=0$ is satisfied. 
The optimal solution for $v_1$ is $v_1=-0.7$, that is, $1$ is a snapping coordinate. 
When the process hits $(Hx)_2=0$ at time $t$, the optimisation problem is:  
$$
\min\frac{1}{2}\tilde{v}^{\top}\tilde{H}\tilde{v}+
\tilde{c}^{\top}
\tilde{v},\quad \tilde{v}=(v_1,v_2)\in [-1,+1]^2
$$
where 
$$
\tilde{H}=\begin{pmatrix}
1&-0.3\\
-0.3&1
\end{pmatrix}
,\quad 
\tilde{c}=\begin{pmatrix}
1\\ 1
\end{pmatrix}. 
$$
The optimal solution for this problem is $\tilde{v}=(-1, -1)$, which results in the set of tangency coordinates becoming $\emptyset$. It means that, after hitting $(Hx)_1=0$, the process follows $(Hx)_1=0$, but when it next hits $(Hx)_2=0$, it escapes from both hyperplanes, as illustrated in Figure \ref{fig:zz_3d_escape}. 

\begin{figure}[ht]
  \centering
  \includegraphics[width=0.8\textwidth]{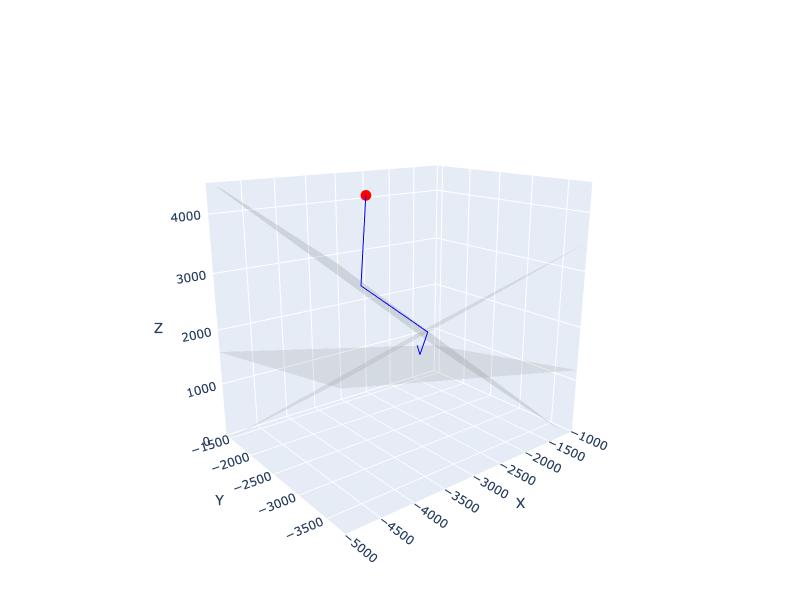}
  \caption{
Illustration of the Zig-Zag Sampler in the ``escape'' clipped case. First, the process hits $(Hx)_1=0$ and the process becomes trapped in the hyperplane. When the process hits $(Hx)_2=0$ next, the line in blue escapes from both hyperplanes, which are shown in grey. Then, it hits $(Hx)_3=0$ and becomes trapped in the third hyperplane. 
}
  \label{fig:zz_3d_escape}
\end{figure}

\subsection{Coordinate Sampler}

To avoid the $\mathcal O(\varepsilon^{-1/2})$ cost we turn to the Coordinate Sampler, which flips a single velocity at each event. In contrast to the Zig-Zag Sampler, the Coordinate Sampler updates only one velocity at each jump, so the intensity function does not snap, even under a convex potential. See Figure \ref{fig:zz_cs_2d} for the corresponding output.

Formally, the Coordinate Sampler changes only one coordinate $n$ at a time, and 
on this coordinate, the process moves in direction $s\in\{\pm 1\}$. The rate function is $(s\partial_nU(x))_+$. 
When an event occurs, a new coordinate $m$ is chosen proportional to $|\partial_mU(x)|$, 
and its sign is updated by $s=-\operatorname{sgn}\partial_mU(x)$. The limit process is more or less the same, but the event occurs when the process hits $\partial_nU(x)=0$ when the coordinate $n$ is selected. 
Let $n^\varepsilon(x; \gamma)$ be the number of events until the process hits $\{x: U(x)\le U(x^*)+\gamma \}$. The proof of the following Theorem is in Appendix~\ref{app:subsec:proof_cs}. 

\begin{theorem}\label{thm:cs}
    Let $\gamma>0$ and $U(x)> U(x^*)+ \gamma$. Suppose that $x^\varepsilon\rightarrow x$. Then $\{n^\varepsilon(x^\varepsilon; \gamma)\}_{\varepsilon>0}$ is tight. 
\end{theorem}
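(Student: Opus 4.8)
The plan is to prove tightness by a one-step drift inequality for the embedded chain $\{(X^\varepsilon_{T_k},V^\varepsilon_{T_k})\}_{k\ge0}$ at event times, showing that each event decreases the potential $U$ by a uniformly positive amount in conditional expectation as long as the process is outside the central region $C:=\{x:U(x)\le U(x^*)+\gamma\}$. This is exactly the ``exponential drift condition with the usual small set replaced by $C$'' alluded to in the introduction, and it is the same mechanism behind Theorems~\ref{thm:bps_refresh} and \ref{thm:fec}; in fact the present argument also proves Theorem~\ref{thm:fec} with $\Delta_m$ below replaced by the analogous per-event decrease of the Forward Event-Chain update.

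First I would describe one leg of the limiting Coordinate-Sampler dynamics. After an event a new coordinate $m$ is chosen with probability $p_m(x)=|\partial_m U(x)|/\|\nabla U(x)\|_1$ and the velocity is set to $s\,e_m$ with $s=-\sgn(\partial_m U(x))$, so the active coordinate $x_m$ moves strictly against $\partial_m U$; by one-dimensional strict convexity of $t\mapsto U(x+t s e_m)$ the derivative $\partial_m U$ along this line is monotone and crosses zero exactly once, at distance $\ell_m<\infty$, and $U$ strictly decreases on $[0,\ell_m]$ by $\Delta_m(x):=\int_0^{\ell_m}|\partial_m U(x+t s e_m)|\,\dif t>0$. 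Thus in the limit $U(X_{T_{k+1}})=U(X_{T_k})-\Delta_m(X_{T_k})$, and a possibly misaligned initial velocity only triggers an extra, essentially instantaneous, first event. Next I would establish a uniform lower bound $\E[\Delta_m\mid x]=\sum_n p_n(x)\Delta_n(x)\ge\eta>0$ valid for all $x$ with $\gamma<U(x)-U(x^*)\le\bar U$, for any fixed $\bar U$. The ingredients are: (i) convexity gives $U(x)-U(x^*)\le\nabla U(x)^\top(x-x^*)\le|\nabla U(x)|\,|x-x^*|$, and $|x-x^*|$ is bounded on a sublevel set, so $|\nabla U(x)|$ and hence $\max_n|\partial_n U(x)|$ are bounded below; (ii) for the maximising coordinate $n^\star$, $p_{n^\star}(x)$ is bounded below since $\|\nabla U\|_1$ is bounded above on the sublevel set, and $\Delta_{n^\star}(x)$ is bounded below using the Hessian bound on the compact set $\{\gamma/2\le U-U(x^*)\le\bar U\}$, which is bounded away from $x^*$ so that the $C^3$ part of Assumption~\ref{assumption} applies; only the decrease accrued while $U-U(x^*)\ge\gamma/2$ is credited, which keeps all constants on that compact set. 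Then $\E[\Delta_m\mid x]\ge p_{n^\star}(x)\Delta_{n^\star}(x)\ge\eta$.

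Second I would transfer this to the $\varepsilon$-process. Because the Coordinate Sampler flips only one velocity per event there is no snapping: immediately after an event the rate $\varepsilon^{-1}(s\partial_m U)_+$ vanishes, the process coasts deterministically along $x_m$ until $\partial_m U$ changes sign, and the next jump then occurs within an $O(\sqrt\varepsilon)$ boundary layer past the crossing — the only place $\varepsilon$ enters — with an overshoot in $U$ of order $\varepsilon$ carrying a light (Rayleigh-type) tail coming from the Poisson/thinning structure. Hence, for $\varepsilon$ small, $\E[U(X^\varepsilon_{T_{k+1}})\mid\mathcal F_{T_k}]\le U(X^\varepsilon_{T_k})-\eta/2$ whenever $\gamma<U(X^\varepsilon_{T_k})-U(x^*)\le\bar U$, and the trajectory cannot leave the enlarged sublevel set in fewer than $\asymp\varepsilon^{-1}$ events. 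An optional-stopping argument on the supermartingale $U(X^\varepsilon_{T_{k\wedge n^\varepsilon}})+\tfrac\eta2(k\wedge n^\varepsilon)$, using $U\ge U(x^*)$ and $U(x^\varepsilon)\to U(x)$, then gives $\sup_\varepsilon\E[n^\varepsilon(x^\varepsilon;\gamma)]<\infty$, and Markov's inequality yields tightness.

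The main obstacle is the boundary-layer analysis at the tangency crossings: one must show, uniformly over the relevant compact region and over $\varepsilon$, that the $\varepsilon$-leg follows the limiting leg up to an $O(\sqrt\varepsilon)$ correction with a well-controlled tail, so that the accumulated overshoot over the $O(1)$ expected number of legs is negligible and the drift inequality survives with constant $\eta/2$; this, together with ruling out premature escape from the enlarged sublevel set (which needs $\asymp\varepsilon^{-1}$ legs against an $O(1)$ expected count), is the only genuinely technical part. The remaining steps — the convexity-based lower bound on $\E[\Delta_m\mid x]$ and the optional-stopping conclusion — are routine.
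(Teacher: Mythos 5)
Your overall strategy -- a drift inequality for the embedded jump chain with the ``small set'' replaced by the central sublevel set $C$, followed by optional stopping -- is the same mechanism the paper uses, and is in the right place. But the two proofs diverge in ways that matter technically.

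The paper works with the multiplicative Lyapunov function $V(x)=\exp(U(x))$ and establishes the drift $P^\varepsilon V\le(1-\beta)V$ on the band $B_\gamma$ (Proposition~\ref{prop:cs_dc}) by computing directly with the exact $\varepsilon$-dependent jump-time density \eqref{eq:kernel}. This choice of $V$ is the crux: writing $\Lambda_t=\Lambda_{t(x,v)}+\Lambda_t^+$ beyond the tangency time, the overshoot factor $\exp(\Lambda_t^+)$ integrates against the jump-time density to give exactly $1/(1-\varepsilon)$ (Proposition~\ref{prop:drift}), so the post-tangency overshoot in $U$ cancels out analytically and no boundary-layer analysis is needed; the drift constant then comes entirely from the pre-tangency decrease $\Lambda_{t(x,v)}$ via Lemma~\ref{lem:drift_bound} and the $p_{n^*}\ge1/N$, $q_{n^*}\ge1/N$ argument. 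You instead work with an \emph{additive} drift on $U$ itself and try to control the overshoot by transferring the limiting $\Delta_m$ to the $\varepsilon$-process through a boundary-layer estimate. That can be made to work, but it places the entire burden on the part you flag as ``the only genuinely technical part,'' and you underestimate it in two places. First, the overshoot in $U$ past a tangency is not uniformly $O(\varepsilon)$: the jump-time survival function $\exp(-\varepsilon^{-1}\Lambda_t^+)$ has a tail that makes a large overshoot possible on each leg, and you need a quantitative bound analogous to the second part of Proposition~\ref{prop:drift} ($P^\varepsilon(x,A_\gamma(x))\le 2\exp(-\varepsilon^{-1}\gamma/2)$) rather than a heuristic ``light Rayleigh-type tail.'' Second, the statement ``the trajectory cannot leave the enlarged sublevel set in fewer than $\asymp\varepsilon^{-1}$ events'' is exactly what that overshoot bound buys; asserting it without the bound is circular. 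Once those gaps are filled your argument would give $\sup_\varepsilon\E[n^\varepsilon]<\infty$ and hence tightness, but the paper's route through $V=e^U$ and the exact kernel is cleaner precisely because it sidesteps all of this.

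A secondary, smaller difference: the paper never invokes the limiting Coordinate-Sampler dynamics in this proof at all -- the drift is proved \emph{directly} for $P^\varepsilon$ uniformly in small $\varepsilon$ -- so no ``transfer from the limit'' step is needed. Your plan to first compute $\Delta_m$ in the limit and then control the discrepancy is a detour, and it is also slightly redundant in that the paper's geometric drift already yields exponential (not just bounded-expectation) control on $n^\varepsilon$.
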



\begin{figure}[ht]
  \centering
  \includegraphics[width=0.4\textwidth]{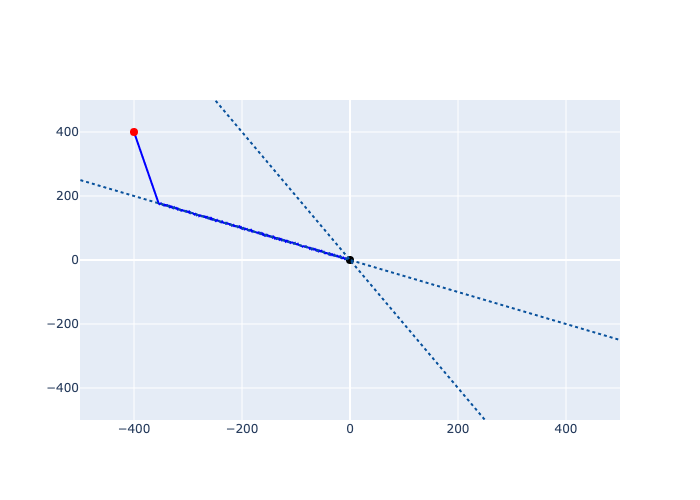}
  \includegraphics[width=0.4\textwidth]{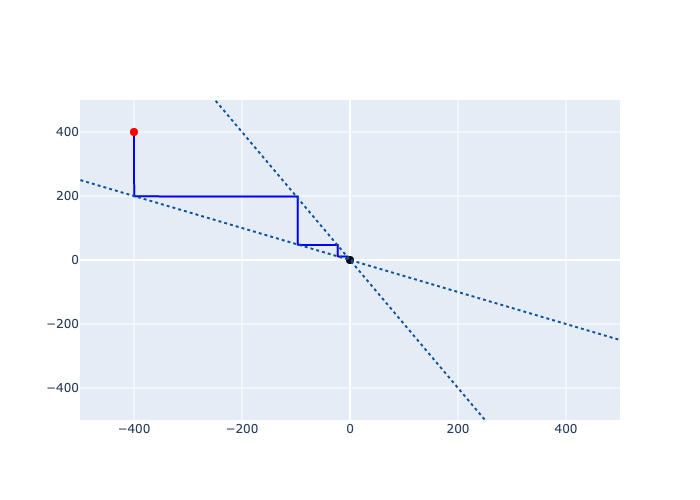}
  \caption{
Illustration of the Zig-Zag Sampler (left) and the Coordinate Sampler (right) is shown. The dotted lines indicate the boundaries $(Hx)_1 = 0$ and $(Hx)_2 = 0$. The Zig-Zag Sampler becomes trapped along one of these boundaries, whereas the Coordinate Sampler simply reflects off the boundaries without getting stuck.
}
  \label{fig:zz_cs_2d}
\end{figure}

\begin{remark}
    Theorem \ref{thm:cs} shows that the number of jumps is $O(1)$, as in the Forward Event-Chain Sampler. 
\end{remark}

\section{Discussion}\label{sec:discussion}

Our investigation into piecewise deterministic Markov processes (PDMPs)  under convex potentials provides valuable insight into their use as Monte Carlo methods. By analysing the transient phase of the Bouncy Particle Sampler and the Zig-Zag Sampler within a fluid-limit framework enables a direct efficiency comparison to traditional Metropolis--Hastings algorithms.

A key finding is that for Gaussian target distributions, the event complexity of the BPS with a low refreshment rate, as well as that of the Zig-Zag Sampler, is comparable to that of classical random-walk Metropolis algorithms, with complexity $O(\varepsilon^{-1/2})$. This reduced efficiency is due to snapping, caused by snapping, i.e., rapid reflections that pin the path to a constraint surface. However, the efficiency of BPS can be improved to $O(\varepsilon^{-1/4})$ by tuning the refreshment rate.
In addition, we show that both the Forward Event-Chain sampler \cite{Michel2017} and the Coordinate Sampler \cite{Wu2020CoordinateSampler} can achieve $O(1)$ computational complexity, which highlights the performance benefits possible with carefully designed PDMP-based samplers.
As in Remark \ref{remark:jump-count}, these results assume that the samplers are properly implemented (see \cite{Sutton02102023, Corbella2022, andral2024automated}).

Our results rely on convexity of the potential functions. 
Notably under non-convex potentials, BPS and Zig-Zag can avoid snapping. Non-convexity presents challenges for standard random-walk Metropolis or gradient-based methods, yet PDMP-based samplers appear capable of circumventing some of these difficulties. Quantifying performance along the lines presented in this work for non-convex landscapes remains a direction for further research.

In this study, we employed a simplified setting for the Zig-Zag Sampler, which does not yet exploit the full optimisation‐based description. A more general formulation could yield deeper theoretical insight into its behaviour, particularly in anisotropic settings.

More broadly, our work illustrates the promise of fluid limit analysis for the study of PDMP-based samplers.  Our results show how the transient dynamics of such samplers can be effectively captured using suitable averaging arguments. This perspective opens new directions for algorithm design; for instance, the Forward Event-Chain and the Coordinate Sampler serve as concrete examples where fluid limit insights guide the development and selection of efficient sampling strategies. 

In conclusion, this work provides a rigorous analytical framework for understanding the transient regime of PDMP-based samplers and offers practical guidance for designing efficient samplers. Moreover, the same fluid-limit lens clarifies behaviours of traditional MCMC schemes, enriching the broader theory of Monte Carlo efficiency.

\section*{Acknowledgements}
SA was supported by the EPSRC studentship grant EP/W523793/1. JB was funded by the research programme ‘Zigzagging through computational barriers’ with project number 016.Vidi.189.043, which was financed by the Dutch Research Council (NWO). KK was supported in part by JST, CREST Grant Number JPMJCR2115, Japan, and JSPS KAKENHI Grant Number  JP21K18589.
GOR was supported by EPSRC grants Bayes for Health (R018561) CoSInES (R034710) PINCODE (EP/X028119/1), EP/V009478/1 and also
UKRI for grant EP/Y014650/1, as part of the ERC Synergy project OCEAN.

\appendix

\section{Proof of the averaging principle of PDMP-based samplers}
\label{app:sec:proof_PDMP}

Our objective is to show that the perturbed PDMP trajectory $\bigl(X_t^{\varepsilon}\bigr)_{t\ge0}$ remains uniformly close to the fluid limit trajectory up to the first time it enters the high-probability set $L_\gamma$: 
$$
L_\gamma = \{x\in\mathbb{R}^N:  U(x)\le U(x^*)+\gamma\}. 
$$
The proof proceeds in three steps:
\begin{enumerate}
    \item[(i)] We confine the analysis to the ``energy band'' $B_\gamma$:
    $$
B_\gamma =\{x\in\mathbb{R}^N: U(x^*)+\gamma\le U(x)\le U(x^*)+\gamma^{-1}\}. 
$$
    \item[(ii)] Uniform gradient and Hessian bounds on $B_{\gamma}$ yield Lipschitz‐type growth control on the drift and the rate functions. 
    \item[(iii)] The sign-flip inequality (Lemma \ref{lem:monotone}) converts those controls into an $L^{1}/L^{\infty}$ estimate on the gap between the perturbed trajectory and its limit. 
\end{enumerate} 
Together these steps, applying Gronwall’s inequality finishes the proof of  Theorem~\ref{thm:bps_refresh_trajectory} and Theorem~\ref{thm:zigzag_trajectory}. Similarly, Martingale convergence framework completes the proof of  Theorem~\ref{thm:bps_high_refresh}. The uniform bounds on the band are also used to estimate the number of jumps in Proposition \ref{prp:no_of_jumps}.

Define the first exit time of the perturbed process $X_t^\varepsilon$ from the band by
$$
\tau^\varepsilon(x; \gamma)=\inf\{t\ge 0: X^\varepsilon_t\notin B_{\gamma} \}
$$
and write simply $\tau^\varepsilon$ when $(x,\gamma)$ is clear.

Under Assumption \ref{assumption}, there exist positive constants
\begin{equation}\label{eq:local_uniform_bound}
\underline{\eta}\le |\nabla U(x)|\le \overline{\eta},\quad 
\underline{\kappa}~I\le \nabla^2 U(x)\le \overline{\kappa}~I
\end{equation}
for $x\in B_{\gamma} $, where $I$ is the identity matrix with size $N$. Also, the upper bound of the norm of a function $f$ on $B_{\gamma} $ is denoted by $C_f$. 
The Hessian lower bound is necessary as it controls the hitting time of the process to tangency set. The lower and upper bound controls the behaviour of the limit process.


\begin{lemma}\label{lem:monotone}
     Let $f: [0,T) \to \mathbb{R}$ be right-continuous  with  a finite set of discontinuities $D(f)$ and satisfying $f(0)=0$.  
     At each jump time $t\in D(f)$ we have $f(t-)>0$ and 
 $f(t-)=-f(t)$. On the complement $D(f)^{\mathrm{c}}$ the function is differentiable and  obeys \[
m\le f'(t)\le M,\quad 0<m\le M. 
\]
Then 
\begin{enumerate}
    \item[(i)] 
$\sup_{0\le t\le T}|f(t)|^2\le 2M~\int_0^T(f(t))_+\dif t$.
\item[(ii)] 
Let $N_T=\sharp D(f)$. Then 
\[
\frac{m^2}{2M}\left(1+\frac{M}{m}\right)^{-1}\frac{T^2}{1+N_T}~\le~ \int_0^T |f(t)|\dif t~\le~ \left(1+\frac{M}{m}\right)\int_0^T (f(t))_+\dif t. 
\] 
\end{enumerate}
\end{lemma}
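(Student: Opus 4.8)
The plan is to exploit the sawtooth structure forced by the hypotheses. Between consecutive discontinuities $f$ is $C^1$ with $f'\in[m,M]$, hence strictly increasing; the sign-flip condition $f(t-)=-f(t)>0$ means that on the interval immediately following a jump, $f$ climbs from a negative value back up through $0$ to a positive value before the next flip. So I would first decompose $[0,T)$ into its maximal intervals of continuity $I_0,\dots,I_{N_T}$ (with $I_k=[s_k,s_{k+1})$, $s_0=0$, $s_{N_T+1}=T$), set $a_0:=0$ and $a_k:=f(s_k-)>0$ for the $k$-th jump time, so that $f(s_k)=-a_k$, and record $\ell_k:=|I_k|$ with $\sum_k\ell_k=T$. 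For $k<N_T$, $f$ rises on $I_k$ monotonically from $-a_k$ through $0$ to $a_{k+1}$, and $a_{k+1}+a_k=\int_{I_k}f'\in[m\ell_k,M\ell_k]$; the terminal interval $I_{N_T}$ runs from $-a_{N_T}$ up to $b:=f(T-)$, whose sign is unconstrained --- this is the one place needing care.

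The technical core is a per-interval estimate from the change of variables $y=f(t)$, $\dif t=\dif y/f'(t)$ with $f'\in[m,M]$: on an interval where $f$ increases from $-a$ through $0$ to $a'$, one has $\tfrac{(a')^2}{2M}\le\int (f)_+\,\dif t\le\tfrac{(a')^2}{2m}$ and $\tfrac{a^2}{2M}\le\int (-f)_+\,\dif t\le\tfrac{a^2}{2m}$, with the obvious one-sided versions when an endpoint is nonnegative (needed for $I_0$ and for $I_{N_T}$). From these, (i) follows by observing $\sup_{[0,T]}|f|=\max\{a_1,\dots,a_{N_T},b_+\}$: when $b<0$ one has $|b|<a_{N_T}$, and each $a_j$ is in any case attained as the positive limit $f(s_j-)$; on the interval on which $f$ rises to this maximum the comparison gives $\int_0^T(f)_+\,\dif t\ge(\sup|f|)^2/(2M)$, i.e.\ $\sup|f|^2\le 2M\int_0^T(f)_+\,\dif t$.

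For (ii), summing the comparison estimates over $k=0,\dots,N_T-1$ gives $\int_0^T(f)_+\,\dif t\ge\tfrac1{2M}\sum_{k=1}^{N_T}a_k^2$, while $\int_0^T(-f)_+\,\dif t\le\tfrac1{2m}\sum_{k=1}^{N_T}a_k^2$ (the terminal interval adds only nonnegative mass to the $(f)_+$ side, and its negative part is bounded by $a_{N_T}^2/2m$ in both sign cases for $b$); hence $\int(-f)_+\,\dif t\le\tfrac Mm\int(f)_+\,\dif t$ and $\int|f|\,\dif t\le(1+\tfrac Mm)\int(f)_+\,\dif t$, the upper bound. For the lower bound, on $I_k$ with $k<N_T$, $\int_{I_k}|f|\,\dif t\ge\tfrac{a_k^2+a_{k+1}^2}{2M}\ge\tfrac{(a_k+a_{k+1})^2}{4M}\ge\tfrac{m^2\ell_k^2}{4M}$; on $I_{N_T}$ one gets the same bound $\tfrac{m^2\ell_{N_T}^2}{4M}$ after checking that $a_{N_T}-b\ge a_{N_T}+b\ge m\ell_{N_T}$ when $b<0$ and using the direct estimate when $b\ge0$. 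Summing and applying the power-mean inequality $\sum_k\ell_k^2\ge T^2/(N_T+1)$ yields $\int_0^T|f|\,\dif t\ge\tfrac{m^2}{4M}\cdot\tfrac{T^2}{1+N_T}$, and since $m\le M$ we have $\tfrac{m^2}{4M}\ge\tfrac{m^2}{2M}(1+\tfrac Mm)^{-1}$, which is the stated constant (in fact slightly stronger).

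The main obstacle I anticipate is organisational rather than conceptual: the terminal micro-cycle $I_{N_T}$ need not close with a sign flip, so $b=f(T-)$ may be positive, negative, or zero, and each of the three cases must be checked not to spoil the one-sided change-of-variables bounds, the identification of $\sup|f|$, or the per-interval quadratic lower bound. The change of variables itself also has to be phrased carefully since $f'$ is only bounded (not constant), but writing $\int(f)_+\,\dif t$ and $\int(-f)_+\,\dif t$ as $\int |y|\,\dif y/f'(t(y))$ over the appropriate range makes this routine. Beyond that it is bookkeeping plus one use of Cauchy--Schwarz.
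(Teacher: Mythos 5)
Your proof is correct and follows essentially the same route as the paper's: decompose $[0,T)$ into the monotone pieces between sign flips, bound the positive and negative areas on each piece quadratically using $m\le f'\le M$, and apply Cauchy--Schwarz over the pieces for the lower bound. The only differences are organisational --- the paper normalises by extending the domain so that $f$ vanishes at both endpoints (costing one extra discontinuity) rather than treating the terminal interval by cases as you do, and your algebraic route via $a_k^2+a_{k+1}^2\ge (a_k+a_{k+1})^2/2$ yields the slightly sharper constant $\frac{m^2}{4M}$ in place of $\frac{m^2}{2M}\left(1+\frac{M}{m}\right)^{-1}$.
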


\begin{proof}
Without loss of generality, assume that $f$ vanishes at both endpoints; this costs at most one additional discontinuity. To achieve this, we extend the domain of $f$ appropriately. If $f(T-) = 0$, we extend the domain to $[0, T]$ and define $f(T) = 0$. If $f(T-) \neq 0$, we define $f(T) = -|f(T-)|$, and extend the domain to $[0, T + |f(T-)|/m]$, setting $f(t) := f(T) + m(t - T)$ for the extended region.

\begin{enumerate}
    \item[(i)] 
For every $t\in D(f)$, we know that $f(t-)>0$, and we can find two points $s$  and $u$
in the interval such that $s<t<u$ with $f(s)=f(u)=0$ and $f$ is continuous on $[s,t)$ and on $(t,u]$. Because $f'$ is bounded on intervals of continuity,
\begin{equation}\label{eq:interval-bound}
f(t-)=\int_s^tf'(v)\dif v=\int_t^u f'(v)\dif v~\Longrightarrow~
 t-s, 
u-t\in\left[\frac{f(t-)}{M}, \frac{f(t-)}{m}\right]. 
\end{equation}
From this, we deduce the bound of the integral as follows, since each area is bounded by a triangle with height $f(t-)$ with base lengths $t-s$ and $u-t$, respectively: 
$$
\frac{f(t-)^2}{2M}\le \int_s^t f(x)\dif x,\quad
\int_t^u (-f(x))\dif x\le \frac{f(t-)^2}{2m}. 
$$
Summing over all points $t\in D(f)$, we obtain the following inequalities: 
$$
\sum_{t\in D(f)}\frac{f(t-)^2}{2M}\le \int_0^T(f(t))_+\dif t,\quad 
\int_0^T(-f(t))_+\dif t\le 
\sum_{t\in D(f)}\frac{f(t-)^2}{2m}. 
$$
This proves the first claim and gives the upper bound for the integrals of $|f(x)|$, since $\sup |f(t)|$ is attained at $t\in D(f)$. 

\item[(ii)] 
Now using (\ref{eq:interval-bound}), we can relate the length of the intervals: 
$$
(u-s)\le \left(1+\frac{M}{m}\right)(u-t). 
$$
This gives the lower bound of $f(t-)$ using the interval $(u-s)$ in place of $(t-s)$ and $(u-t)$: 
$$
m\left(1+\frac{M}{m}\right)^{-1}(u-s)\le f(t-)
$$
Next, we introduce the dependence of $s$ and 
$u$ on $t$, denoting them as $s(t)$ and $u(t)$. With this, we can now express the total time $T$
as the sum since the interval $[s(t), u(t)]$ covers $[0,T]$: 
$$
T\le \sum_{t\in D(f)}(u(t)-s(t)). 
$$
Additionally, we can estimate the square of the differences by using the Cauchy--Schwarz inequality: 
$$
\sum_{t\in D(f)}(u(t)-s(t))^2\ge 
\left(\sum_{t\in D(f)}(u(t)-s(t))\right)^2~\left(~\sum_{t\in D(f)}1\right)^{-1}\ge \frac{T^2}{N_T}. 
$$
Finally, combining all of the above estimates, we get the key result: 
\begin{align*}
    \sum_{t\in D(f)}\frac{f(t-)^2}{2M}&\ge 
    \frac{m^2}{2M}\left(1+\frac{M}{m}\right)^{-1}\sum_{t\in D(f)}(u(t)-s(t))^2\\
    &\ge \frac{m^2}{2M}\left(1+\frac{M}{m}\right)^{-1}\frac{T^2}{N_T}. 
\end{align*}
Thus, we have established the desired bound since $\int_0^T|f(t)|\dif t\ge \int_0^T(f(t))_+\dif t$, observing that the function $f$ on the extended interval has at most one additional discontinuity.
\end{enumerate}
\end{proof}

\subsection{Proof of Proposition \ref{prp:no_of_jumps}}\label{app:subsec:bps_no}

Throughout the proof let
$$
\lambda_t(x,v):= v^{\top}\nabla U(x+vt),\quad \lambda^\varepsilon_t:=V^\varepsilon(t)^{\top}\nabla U(X^\varepsilon(t)). 
$$
For $t\ge 0$ define the jump counter 
\[
   N_t^{\varepsilon}
   \;:=\;
   \#\bigl\{\,s\in[0,\,t\wedge\tau^{\varepsilon}(x;\gamma)]:\,
             \ V^{\varepsilon}(s)\neq V^{\varepsilon}(s-)\bigr\}.
\]
$N_t^{\varepsilon}$ counts the discontinuities of the velocity process $V^{\varepsilon}$ that occur no later than time $t$ or the exit time $\tau^{\varepsilon}(x;\gamma)$. This equals the number of PDMP events up to $t$ or exit time.


\begin{proposition}\label{prop:bps_constraint}
For $0<\gamma<1$ and assume that $x^\varepsilon\in B_{\gamma}$ and $v^\varepsilon\neq 0$.  
There exists constant $C=C_\gamma>0$ such that for $T\ge 1$, 
$$\E\left[N_T^\varepsilon\right]\le C~T~(1+|v^\varepsilon|)~\varepsilon^{-1/2}. 
$$
In addition, if $\lambda_0^\varepsilon=0$, 
$$
\E\left[\int_0^{T\wedge\tau^\varepsilon}|\lambda^\varepsilon_s|\dif s\right]\le C~T~(1+|v^\varepsilon|)~\varepsilon^{1/2}. 
$$
\end{proposition}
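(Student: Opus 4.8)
The plan is to run a single Dynkin (martingale) identity for the signed rate $\lambda^\varepsilon$, stopped at $T\wedge\tau^\varepsilon$, and then feed its output into the sign-flip inequality of Lemma~\ref{lem:monotone}. The engine is the observation that $\lambda^\varepsilon$, regarded as the function $(x,v)\mapsto v^\top\nabla U(x)$, has a strikingly clean extended generator under the rescaled dynamics. Along free flight $\dot x=v$ one has $\tfrac{\dif}{\dif t}\lambda_t=v^\top\nabla^2U(x)v$, while a bounce sends $v\mapsto\mathcal Bv$ with $(\mathcal Bv)^\top\nabla U(x)=-v^\top\nabla U(x)=-\lambda^\varepsilon$. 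Writing $q_s:=V^\varepsilon(s)^\top\nabla^2U(X^\varepsilon(s))\,V^\varepsilon(s)$, the transport part $\mathcal L_{1,\mathrm{BPS}}$ and the jump part $\varepsilon^{-1}\mathcal L_{0,\mathrm{BPS}}$ therefore contribute
\[
\mathcal L^\varepsilon\lambda^\varepsilon \;=\; q_s \;-\; 2\varepsilon^{-1}\bigl(\lambda_s^\varepsilon\bigr)_+^2 .
\]
On the band $B_\gamma$ the quantities $\lambda^\varepsilon$, $q$ and the intensity $\varepsilon^{-1}(\lambda^\varepsilon)_+$ are all bounded, so (after the usual localisation) Dynkin's formula at $T\wedge\tau^\varepsilon$ is a genuine identity and rearranges to
\[
2\varepsilon^{-1}\,\E\!\left[\int_0^{T\wedge\tau^\varepsilon}\bigl(\lambda_s^\varepsilon\bigr)_+^2\,\dif s\right]
\;=\;
\lambda_0^\varepsilon \;-\; \E\!\left[\lambda_{T\wedge\tau^\varepsilon}^\varepsilon\right] \;+\; \E\!\left[\int_0^{T\wedge\tau^\varepsilon} q_s\,\dif s\right].
\]

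Next I would bound the right-hand side with the uniform band estimates~\eqref{eq:local_uniform_bound}. Since bounces preserve the velocity norm (and no refreshment is active on the interval of interest here), $|V^\varepsilon(s)|\equiv|v^\varepsilon|$, hence $|\lambda_s^\varepsilon|\le\overline\eta\,|v^\varepsilon|$ and $q_s\le\overline\kappa\,|v^\varepsilon|^2$; together with $T\wedge\tau^\varepsilon\le T$ and $\lambda_0^\varepsilon\le\overline\eta|v^\varepsilon|$ the right-hand side is at most $2\overline\eta|v^\varepsilon|+\overline\kappa|v^\varepsilon|^2T$, and using $T\ge1$ this yields
\[
\E\!\left[\int_0^{T\wedge\tau^\varepsilon}\bigl(\lambda_s^\varepsilon\bigr)_+^2\,\dif s\right]\;\le\; C_\gamma\,(1+|v^\varepsilon|)^2\,T\,\varepsilon .
\]
A Cauchy--Schwarz step (pathwise in the time variable, then in expectation) turns the second moment into a first moment,
\[
\E\!\left[\int_0^{T\wedge\tau^\varepsilon}\bigl(\lambda_s^\varepsilon\bigr)_+\,\dif s\right]
\;\le\;
\left(\E\!\left[\int_0^{T\wedge\tau^\varepsilon}\bigl(\lambda_s^\varepsilon\bigr)_+^2\,\dif s\right]\right)^{1/2}\bigl(\E[T\wedge\tau^\varepsilon]\bigr)^{1/2}
\;\le\;
C_\gamma\,(1+|v^\varepsilon|)\,T\,\varepsilon^{1/2}.
\]

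Both assertions now follow. For the jump-count bound, the compensator of the bounce-counting process is $\varepsilon^{-1}\int_0^{\cdot\,\wedge\tau^\varepsilon}(\lambda_s^\varepsilon)_+\,\dif s$, so $\E[N_T^\varepsilon]=\varepsilon^{-1}\E[\int_0^{T\wedge\tau^\varepsilon}(\lambda_s^\varepsilon)_+\,\dif s]$ (plus at most $\rho T$ further events if refreshment is present, of lower order in the regimes considered), and the last display gives $\E[N_T^\varepsilon]\le C_\gamma(1+|v^\varepsilon|)T\varepsilon^{-1/2}$. For the integral bound, when $\lambda_0^\varepsilon=0$ the path $s\mapsto\lambda_s^\varepsilon$ on $[0,T\wedge\tau^\varepsilon)$ satisfies the hypotheses of Lemma~\ref{lem:monotone} with $m=\underline\kappa|v^\varepsilon|^2$ and $M=\overline\kappa|v^\varepsilon|^2$: it starts at $0$, is right-continuous with finitely many jumps, each jump happens when $\lambda^\varepsilon>0$ and flips its sign, and on the complement of the jump set its derivative is $q_s\in[\underline\kappa|v^\varepsilon|^2,\overline\kappa|v^\varepsilon|^2]$ on the band. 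Part~(ii) of the lemma gives, pathwise, $\int_0^{T\wedge\tau^\varepsilon}|\lambda_s^\varepsilon|\,\dif s\le(1+\overline\kappa/\underline\kappa)\int_0^{T\wedge\tau^\varepsilon}(\lambda_s^\varepsilon)_+\,\dif s$; taking expectations and inserting the previous bound completes the proof.

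The one genuine idea, and the step I expect to be the crux, is spotting the generator identity $\mathcal L^\varepsilon\lambda^\varepsilon=q_s-2\varepsilon^{-1}(\lambda_s^\varepsilon)_+^2$ and, in particular, that applying Dynkin to $\lambda^\varepsilon$ itself (rather than to $|\lambda^\varepsilon|$ or $(\lambda^\varepsilon)^2$) already isolates the decisive term $\varepsilon^{-1}\int(\lambda^\varepsilon)_+^2$ with the right sign, so that no snapping-specific control of the individual excursion sizes is needed. What is left is bookkeeping: justifying the localisation used in Dynkin's formula, noting that $X^\varepsilon$ still lies on $\partial B_\gamma$ at time $\tau^\varepsilon$ so the band bounds persist up to and including the stopping time, and observing that convexity enters only through $\underline\kappa>0$, which is exactly the ``$m>0$'' hypothesis required by Lemma~\ref{lem:monotone}.
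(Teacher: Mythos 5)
Your proposal is correct and follows essentially the same route as the paper: Dynkin's formula applied to $\lambda^\varepsilon$ itself (with the generator identity $\mathcal L^\varepsilon\lambda^\varepsilon = q_s - 2\varepsilon^{-1}(\lambda_s^\varepsilon)_+^2$ appearing in the paper's displayed Dynkin identity), rearrangement to isolate $\varepsilon^{-1}\E\bigl[\int(\lambda^\varepsilon)_+^2\bigr]$, the uniform band bounds for the boundary terms and $q_s$, Cauchy--Schwarz to pass to the first moment, and Lemma~\ref{lem:monotone}(ii) for the $|\lambda^\varepsilon|$ integral. You make the generator computation and the hypotheses of Lemma~\ref{lem:monotone} a bit more explicit, but the argument is the same one.
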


\begin{proof}
Observe that 
$$
\E\left[N_T^\varepsilon\right]=\varepsilon^{-1}\E\left[\int_0^{T\wedge\tau^\varepsilon}(\lambda^\varepsilon_s)_+\dif s\right]. 
$$
From Dynkin's formula, for each fixed $0 \le t \le T$ we have
\[
   \E\left[\lambda^\varepsilon_{T\wedge\tau^\varepsilon}\right] - \E\left[\lambda^\varepsilon_0\right]
   \;=\; \E\Bigl[\int_0^{T\wedge\tau^\varepsilon}\nabla^2 U\bigl(X^\varepsilon(s)\bigr)
         \bigl(V^\varepsilon(s)\bigr)^{\otimes 2}\dif s\Bigr]
      \;-\;2\,\varepsilon^{-1}\E\left[\int_0^{T\wedge\tau^\varepsilon} (\lambda^\varepsilon_s)_+^2\dif s\right].
\]
By isolating the last term, this equation implies
\begin{align*}
\E\left[\int_0^{T\wedge\tau^\varepsilon}(\lambda^\varepsilon_s)_+^2\dif s\right]&=
2\varepsilon~\left(\E\left[\lambda^\varepsilon_0\right]-\E\left[\lambda^\varepsilon_{T\wedge\tau^\varepsilon}\right]+ \E\Bigl[\int_0^{T\wedge\tau^\varepsilon}\nabla^2 U\bigl(X^\varepsilon(s)\bigr)
         \bigl(V^\varepsilon(s)\bigr)^{\otimes 2}\dif s\Bigr]\right)
\\
&\le 2\varepsilon ~(2|v^\varepsilon|~\overline{\eta}+T~\overline{\kappa}~|v^\varepsilon|^2), 
\end{align*}
where we used $|V^\varepsilon(s)|=|v^\varepsilon|$. 
Cauchy--Schwarz inequality, we obtain
\begin{align*}
\E\left[\int_0^{T\wedge\tau^\varepsilon}(\lambda^\varepsilon_s)_+\dif s\right]
\le \left\{\E\left[\int_0^{T\wedge\tau^\varepsilon}(\lambda^\varepsilon_s)_+^2\dif s\right]\right\}^{1/2}~T^{1/2}. 
\end{align*}
This completes the first claim. Second bound follows by Lemma \ref{lem:monotone}. 
\end{proof}

Proposition \ref{prop:bps_constraint} provides the upper bound of the expectation of $\varepsilon^{1/2}N_T^\varepsilon$. 
However, the inequality also holds in the opposite direction. 

\begin{corollary}\label{cor:bps_constraint}
    If $x^\varepsilon\in B_{\gamma}$ and   $\lambda_0^\varepsilon=0$, for some $C>0$, 
    $$
    \mathbb{P}\left(\varepsilon^{1/2}N_T^\varepsilon\le M^{-1}\right)\le \frac{\varepsilon^{1/2}+M^{-1}}{T}~|v^\varepsilon|. 
    $$
    for any $M,T>0$. 
\end{corollary}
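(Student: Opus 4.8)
The plan is to sandwich the occupation integral $\int_0^{T\wedge\tau^\varepsilon}|\lambda_s^\varepsilon|\,\dif s$ between a deterministic lower bound in terms of $1/(1+N_T^\varepsilon)$, supplied by Lemma~\ref{lem:monotone}(ii), and the moment upper bound of Proposition~\ref{prop:bps_constraint}; Markov's inequality then converts ``$N_T^\varepsilon$ is small'' into ``the occupation integral is atypically large'', an event that the moment bound forbids from being likely.

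First I would check that $f(t):=\lambda_t^\varepsilon$ meets the hypotheses of Lemma~\ref{lem:monotone} on $[0,T\wedge\tau^\varepsilon]$. We have $f(0)=\lambda_0^\varepsilon=0$ by assumption; between events the velocity is frozen, so $f$ is $C^1$ there with $f'(t)=V^\varepsilon(t)^\top\nabla^2U(X^\varepsilon(t))\,V^\varepsilon(t)\in[\underline\kappa|v^\varepsilon|^2,\overline\kappa|v^\varepsilon|^2]=:[m,M]$ by \eqref{eq:local_uniform_bound} together with $|V^\varepsilon(t)|\equiv|v^\varepsilon|$ on $B_\gamma$; events fire only while $\lambda^\varepsilon>0$ (the rate being $\varepsilon^{-1}(\lambda^\varepsilon)_+$), so $f(t-)>0$ at each discontinuity, while the bounce $v\mapsto v-2n(x)n(x)^\top v$ maps $\lambda$ to $-\lambda$, so $f(t)=-f(t-)$; and the jump count on $[0,T\wedge\tau^\varepsilon]$ is dominated by a Poisson variable of mean $\varepsilon^{-1}\overline\eta\,|v^\varepsilon|\,T$, hence a.s.\ finite. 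Lemma~\ref{lem:monotone}(ii) then delivers, almost surely,
\[
\int_0^{T\wedge\tau^\varepsilon}|\lambda_s^\varepsilon|\,\dif s\;\ge\;c_\gamma\,|v^\varepsilon|^2\,\frac{(T\wedge\tau^\varepsilon)^2}{1+N_T^\varepsilon},
\qquad c_\gamma:=\frac{\underline\kappa^{3}}{2\overline\kappa(\underline\kappa+\overline\kappa)} .
\]

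Next, on $\{\tau^\varepsilon\ge T\}$ one has $T\wedge\tau^\varepsilon=T$, and on $\{\varepsilon^{1/2}N_T^\varepsilon\le M^{-1}\}$ one has $1+N_T^\varepsilon\le 1+(M\varepsilon^{1/2})^{-1}$, so the right-hand side above is at least $c_\gamma|v^\varepsilon|^2T^2\,M\varepsilon^{1/2}/(1+M\varepsilon^{1/2})$. Since $\lambda_0^\varepsilon=0$, the second estimate of Proposition~\ref{prop:bps_constraint} bounds $\E\bigl[\int_0^{T\wedge\tau^\varepsilon}|\lambda_s^\varepsilon|\,\dif s\bigr]$ by $CT(1+|v^\varepsilon|)\varepsilon^{1/2}$ (Proposition~\ref{prop:bps_constraint} is stated for $T\ge1$; smaller $T$ is handled directly or renders the claim vacuous), so Markov's inequality would give
\[
\P\bigl(\varepsilon^{1/2}N_T^\varepsilon\le M^{-1}\bigr)
\;\le\;\frac{CT(1+|v^\varepsilon|)\varepsilon^{1/2}}{c_\gamma|v^\varepsilon|^2T^2}\cdot\frac{1+M\varepsilon^{1/2}}{M\varepsilon^{1/2}}
\;=\;\frac{C}{c_\gamma}\cdot\frac{1+|v^\varepsilon|}{|v^\varepsilon|^2}\cdot\frac{\varepsilon^{1/2}+M^{-1}}{T},
\]
which is the claimed bound up to the $\gamma$-dependent constant $C/c_\gamma$ and the harmless rearrangement of the velocity prefactor.

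I expect the only genuinely delicate point to be the exit time $\tau^\varepsilon$: since Lemma~\ref{lem:monotone} is applied on the \emph{random} interval $[0,T\wedge\tau^\varepsilon]$, upgrading $(T\wedge\tau^\varepsilon)^2$ to $T^2$ forces one either to restrict to $\{\tau^\varepsilon\ge T\}$ or to control $\P(\tau^\varepsilon<T)$ separately. The latter I would handle through the identity $U(X_t^\varepsilon)-U(x^\varepsilon)=\int_0^t\lambda_s^\varepsilon\,\dif s$ (without refreshment the potential changes only through the drift), which combined with the same moment bound shows the trajectory cannot leave $B_\gamma$ before time $T$ unless $x^\varepsilon$ starts within $O(\varepsilon^{1/2})$ of $\partial B_\gamma$ --- the situation excluded in every application of the corollary. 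The remaining manipulations are routine Markov/Cauchy--Schwarz computations.
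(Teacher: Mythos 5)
Your argument is correct and is essentially the paper's own proof: apply Lemma~\ref{lem:monotone}(ii) to $f=\lambda^\varepsilon$ to lower-bound the occupation integral by a constant times $|v^\varepsilon|^2(T\wedge\tau^\varepsilon)^2/(1+N_T^\varepsilon)$, then play this off via Markov's inequality against the moment bound of Proposition~\ref{prop:bps_constraint}. You are in fact more careful than the paper on the two points you flag --- the $T\wedge\tau^\varepsilon$ versus $T$ issue and the velocity prefactor $(1+|v^\varepsilon|)/|v^\varepsilon|^2$ versus the stated $|v^\varepsilon|$ (the paper's own computation actually produces your prefactor) --- but both discrepancies are immaterial in the only regime where the corollary is invoked, namely $|v^\varepsilon|$ bounded away from $0$ and $\infty$.
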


\begin{proof}
    By Lemma \ref{lem:monotone}, since $\lambda_0^\varepsilon=0$, we know that for a constant $c>0$, 
$$
\frac{c~T^2}{1+N_t^\varepsilon}\le \int_0^t(\lambda^\varepsilon_s)_+\dif s, 
$$
therefore, 
\begin{align*}
\mathbb{P}\left(\varepsilon^{1/2}N_T^\varepsilon\le M^{-1}\right)&\le 
    \mathbb{P}\left(\frac{c~T^2}{1+\varepsilon^{-1/2}M^{-1}}\le\int_0^{T\wedge\tau^\varepsilon}(\lambda_s^\varepsilon)_+\dif s\right)\\
    &\le 
    \frac{1+\varepsilon^{-1/2}M^{-1}}{c~T^2}\E\left[\int_0^{T\wedge\tau^\varepsilon}(\lambda_s^\varepsilon)_+\dif s\right]\\
    &\le \frac{\varepsilon^{1/2}+M^{-1}}{c~T}C~|v^\varepsilon|. 
\end{align*}
\end{proof}

\begin{proof}[Proof of Proposition \ref{prp:no_of_jumps}]
First define the first time at which the intensity reaches zero:
$$
t^{\varepsilon}(z)\;:=\;\inf\bigl\{\,t>0 : \lambda^{\varepsilon}_{t}(z)=0\bigr\}.
$$
\begin{enumerate}
    \item[(i)] Upper bound.  We have the following bound:
$$
\E\!\Bigl[N^{\varepsilon}_{\,T_{1}\wedge \tau^{\varepsilon}}\Bigr]
   \;\le\; 1
   +\E\!\Bigl[
        N^{\varepsilon}_{\,T_{1}\wedge \tau^{\varepsilon}}
        \,\mathbf 1_{\{\,t^{\varepsilon}<T_{1}\wedge \tau^{\varepsilon}\}}
     \Bigr].
$$
If $t^{\varepsilon}>T_{1}$ the process performs no jumps when $\lambda^{\varepsilon}_{0}<0$ and at most one jump when $\lambda^{\varepsilon}_{0}\ge 0$. When $t^\varepsilon<T_1$, we may apply Proposition \ref{prop:bps_constraint}: 
$$
\E[N^\varepsilon_{T_1\wedge\tau^\varepsilon}\mid\mathcal{F}_{t^\varepsilon}^\varepsilon\vee \sigma(T_1)]\le C~T_1~(1+|v^\varepsilon|)~\varepsilon^{-1/2}. 
$$
To bound the indicator term, note that $\dot{\lambda}_t^\varepsilon\le \overline{\kappa}~|v^\varepsilon|^2$. We have a lower bound of the hitting time $t^\varepsilon$:
$$
\underline{S}:=\frac{|\lambda_0^\varepsilon|}{\overline{\kappa}|v^\varepsilon|^2}<t^\varepsilon. 
$$
Combining the two pieces yields
\begin{align*}
 \E[N_{T_1}^\varepsilon]&\le 
    1+C~\varepsilon^{-1/2}~\E\left[T_1(1+|v^\varepsilon|)~\mathbf 1_{\{\underline{S}\le T_1\}}\right].  
\end{align*}
For the fixed $\rho$ the right-hand side is $O(\varepsilon^{-1/2})$. Next let $\rho\rightarrow\infty$. Observe 
$$
|\lambda_0^\varepsilon|=|\nabla U(x^\varepsilon)|~|v^\varepsilon|~|U_1|,\quad U_1:=(v^\varepsilon/|v^\varepsilon|)^{\top}n(x^\varepsilon). 
$$ 
$U_1$ is the first coordinate of a vector drawn uniformly from the unit $N$-sphere. Hence $U_1^2\sim B(1/2, (N-1)/2)$, so 
\begin{align*}
    \mathbb{P}(\underline{S}\le T_1\mid~T_1, |v^\varepsilon|)&\le 
\mathbb{P}\left(\frac{\underline{\eta}~|v^\varepsilon|~|U_1|}{\overline{\kappa}|v^\varepsilon|^2}\le T_1\mid~T_1, |v^\varepsilon|\right)\\
&\le\mathbb{P}\left(~|U_1|\le \underline{\eta}^{-1}\overline{\kappa}|v^\varepsilon|~T_1\mid~T_1, |v^\varepsilon|\right)\\
&\le C~|v^\varepsilon|~T_1
\end{align*}
for some constant $C>0$, since $\mathbb{P}(U_1^2\le x)\le C \sqrt{x}$ for some $C>0$. This proves that $\E[N_{T_1}^\varepsilon]$ is at most on the order of $1+\varepsilon^{-1/2}\rho^{-2}$. 
\item[(ii)]
Lower bound. We have
\begin{align*}
    \E[N^\varepsilon_{T_1\wedge\tau^\varepsilon}\mid\mathcal{F}_{t^\varepsilon}^\varepsilon\vee \sigma(T_1)]&\ge 
    M^{-1}\left(1-\mathbb{P}\left(N^\varepsilon_{T_1\wedge\tau^\varepsilon}\le M^{-1}\mid\mathcal{F}_{t^\varepsilon}^\varepsilon\vee \sigma(T_1)\right)\right)\\
    &
    \ge 
    M^{-1}\left(1-\frac{\varepsilon^{1/2}+M^{-1}}{T_1-t^\varepsilon}|v^\varepsilon|\right)_+
\end{align*}
Choose $M$ so that the brackets equals $1/2$; we obtain the lower bound
$$
\left(\frac{T_1-t^\varepsilon}{2~|v^\varepsilon|}-\varepsilon^{1/2}\right)_+. 
$$
Since $ \underline{\kappa}|v^\varepsilon|^2\le \dot{\lambda}_t^\varepsilon\le \overline{\kappa}~|v^\varepsilon|^2$, if $\lambda_0^\varepsilon<0$, 
we have the upper bound of the hitting time $t^\varepsilon$, and the lower bound of the hitting time $\tau^\varepsilon$
$$
t^\varepsilon< \frac{-\lambda_0^\varepsilon}{\underline{\kappa}|v^\varepsilon|^2}=:\overline{S},\quad 
\frac{\gamma}{\overline{\kappa}|v^\varepsilon|^2}<\tau^\varepsilon. 
$$
Consider the following event: 
$$
A:=\left\{T_1>2\overline{S},\ 1/2<|v^\varepsilon|<1,\ T_1<\tau^\varepsilon,\ U_1<0\right\}. 
$$
Then 
\begin{align*}
    \E[N^\varepsilon_{T_1\wedge\tau^\varepsilon}]
    &\ge
    \E\left[\left(\frac{T_1-t^\varepsilon}{2~|v^\varepsilon|}-\varepsilon^{1/2}\right)_+,~ A\right]\\
&\ge
    \E\left[\left(\frac{T_1}{4}-\varepsilon^{1/2}\right)_+,~ A\right]. 
\end{align*}
Arguing as in part (i) and using the tail estimate for $U_1$, we obtain the lower bound. 
\end{enumerate}
\end{proof}

\subsection{Proof of Theorem \ref{thm:bps_refresh_trajectory}}\label{app:subsec:bps_refresh_low}

Let $(X^\varepsilon(t), V^\varepsilon(t))$ denote the Bouncy Particle Sampler process starting from $z=(x, v)$. Define 
$$
V_1^\varepsilon(t)=n^\perp(X^\varepsilon(t))V^\varepsilon(t)
$$
to be the tangential component of $V^\varepsilon$, and define
$$
\phi_t^\varepsilon(z)=(X^\varepsilon(t), V_1^\varepsilon(t)).
$$
We first analyse the convergence of the process to a deterministic flow $\phi_t(z)$ defined in (\ref{eq:bpsflow}) in the absence of refreshment jumps, i.e., $\rho =0$.

\begin{proposition}\label{prop:bps}
Suppose that $\rho =0$. 
Let $z=(x,v)$ and $z^\varepsilon=(x^\varepsilon, v^\varepsilon)$ be  the points in the tangency set with $x^*\neq x$, and suppose $z^\varepsilon\rightarrow z$.  Then, for any $T>0$, the following convergence holds in probability: 
$$
\sup_{0\le t\le T}|\phi_t^\varepsilon(z^\varepsilon)-\phi_t(z)|\longrightarrow 0. 
$$
\end{proposition}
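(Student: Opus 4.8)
The plan is to prove Proposition~\ref{prop:bps} via a sandwiching/averaging argument that controls the fast variable $V_2^\varepsilon:=n(X^\varepsilon)n(X^\varepsilon)^\top V^\varepsilon$ through its contribution to the slow variables, and then closes with Gronwall. Since both $z^\varepsilon$ and $z$ lie in the tangency set $\{\lambda=0\}$ and $x\neq x^*$, we may work (up to the first exit time $\tau^\varepsilon(x;\gamma)$ for a suitable $\gamma$ chosen from the initial energy $U(x)$) inside the band $B_\gamma$, where the uniform bounds \eqref{eq:local_uniform_bound} hold; a separate argument shows the limiting flow $\phi_t$ also stays in a slightly larger band on $[0,T]$, so confining to the band costs nothing in the limit. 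The key object is the signed pre-intensity $\lambda^\varepsilon_t=V^\varepsilon(t)^\top\nabla U(X^\varepsilon(t))$: it satisfies exactly the hypotheses of Lemma~\ref{lem:monotone} on $[0,T\wedge\tau^\varepsilon]$ with $m=\underline\kappa|v|^2$, $M=\overline\kappa|v|^2$ (using $|V^\varepsilon|\equiv|v^\varepsilon|$, $\dot\lambda^\varepsilon=\nabla^2U(X^\varepsilon)(V^\varepsilon)^{\otimes 2}$ between jumps, and the sign-flip $\lambda^\varepsilon(t-)=-\lambda^\varepsilon(t)$ at bounces). Hence by Lemma~\ref{lem:monotone}(i), $\sup_{t\le T\wedge\tau^\varepsilon}|\lambda^\varepsilon_t|^2\le 2M\int_0^{T\wedge\tau^\varepsilon}(\lambda^\varepsilon_s)_+\,ds$, and Proposition~\ref{prop:bps_constraint} gives $\E\!\int_0^{T\wedge\tau^\varepsilon}(\lambda^\varepsilon_s)_+\,ds = O(\varepsilon^{1/2})$, so $\lambda^\varepsilon_t\to 0$ uniformly in probability.

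Next I would compare $X^\varepsilon,V_1^\varepsilon$ with $x,v_1$ of the flow \eqref{eq:bpsflow}. For $X^\varepsilon$, $\dot X^\varepsilon=V^\varepsilon=V_1^\varepsilon+V_2^\varepsilon$; since $V_2^\varepsilon=\bigl(n(X^\varepsilon)^\top V^\varepsilon\bigr)n(X^\varepsilon)$ and $n(X^\varepsilon)^\top V^\varepsilon=\lambda^\varepsilon_t/|\nabla U(X^\varepsilon)|$, the bound $|\lambda^\varepsilon_t|\le\varepsilon^{1/4}$-type control together with $|\nabla U|\ge\underline\eta$ on $B_\gamma$ gives $\sup_t|V_2^\varepsilon(t)|\to 0$ in probability, so $X^\varepsilon$ is, up to a vanishing error, driven by $V_1^\varepsilon$. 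For $V_1^\varepsilon$ the subtlety is that it jumps at each bounce: at a bounce the reflection sends $V$ to $V-2(n^\top V)n$, which changes only the $V_2$-component, so $V_1^\varepsilon$ is in fact continuous across bounces; between bounces $\dot V_1^\varepsilon = \frac{d}{dt}\bigl(n^\perp(X^\varepsilon)V^\varepsilon\bigr)$, and expanding the derivative of the projector one finds the drift equals $-\Xi(X^\varepsilon,V^\varepsilon)$ plus terms that are linear in $V_2^\varepsilon$ (hence $o(1)$) — this is exactly the computation that produces the effective vector field in Subsection~\ref{subsec:lowrefreshbps}. I would assemble these into $\frac{d}{dt}(X^\varepsilon,V_1^\varepsilon) = F(X^\varepsilon,V_1^\varepsilon) + R^\varepsilon_t$ where $F$ is the (locally Lipschitz, by Assumption~\ref{assumption} and \eqref{eq:local_uniform_bound}) vector field generating $\phi_t$, and $\sup_{t\le T\wedge\tau^\varepsilon}|R^\varepsilon_t|\to 0$ in probability — the remainder collecting the $V_2^\varepsilon$-terms and the measure-zero-in-time jump corrections.

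Then Gronwall's inequality on $|\phi^\varepsilon_t(z^\varepsilon)-\phi_t(z)|$, using $|\phi^\varepsilon_0(z^\varepsilon)-\phi_0(z)|=|z^\varepsilon-z|\to 0$ and the local Lipschitz constant of $F$ on the band, yields $\sup_{t\le T\wedge\tau^\varepsilon}|\phi^\varepsilon_t(z^\varepsilon)-\phi_t(z)|\to 0$ in probability. Finally one removes the truncation at $\tau^\varepsilon$: since $\phi_t(z)$ stays in the interior of a fixed band on $[0,T]$ (the BPS limit flow preserves the level set of $U$, by the Remark following Proposition~\ref{prp:no_of_jumps}, so $U(\phi_t(z))\equiv U(x)$), and $X^\varepsilon$ is uniformly close to it on $[0,T\wedge\tau^\varepsilon]$, a standard argument shows $\P(\tau^\varepsilon<T)\to 0$, upgrading the estimate to all of $[0,T]$. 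The main obstacle I anticipate is the bookkeeping in the $V_1^\varepsilon$-drift computation: one must verify that after projecting out the fast direction, the error terms genuinely vanish (they are products of $V_2^\varepsilon$ with bounded Hessian quantities, but one has to be careful that the number of bounces — which is $O(\varepsilon^{-1/2})$ — does not cause the accumulated jump corrections to survive in the limit; this is handled precisely because $V_1^\varepsilon$ is continuous at bounces, so there is in fact no accumulation, but establishing this cleanly is the crux).
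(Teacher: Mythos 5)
Your proposal is correct and follows essentially the same route as the paper: decompose $V^\varepsilon$ into the slow tangential part $V_1^\varepsilon$ (which is continuous across bounces, since the reflection $\mathcal B$ changes only the normal component and $X^\varepsilon$ does not move at a jump) and the fast normal part $V_2^\varepsilon$, control the latter via Lemma~\ref{lem:monotone} and Proposition~\ref{prop:bps_constraint}, identify the effective vector field $-\Xi$ for $\dot V_1^\varepsilon$ modulo a remainder that is linear in $\lambda^\varepsilon$, close with Gronwall on the band $B_\gamma$, and finally remove the localisation using the fact that the limit flow~\eqref{eq:bpsflow} preserves $U$. The only cosmetic deviation is that you pass through the pointwise bound $\sup_{t\le T\wedge\tau^\varepsilon}|\lambda_t^\varepsilon| = O_p(\varepsilon^{1/4})$ (via Lemma~\ref{lem:monotone}(i) and Markov) before integrating, whereas the paper feeds the integrated $L^1$ bound $\E\int_0^{T\wedge\tau^\varepsilon}|\lambda_s^\varepsilon|\,\dif s = O(\varepsilon^{1/2})$ from Proposition~\ref{prop:bps_constraint} directly into the Gronwall estimate — both variants work.
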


\begin{proof}
Choose $\gamma$ small enough so that $x\in B_{\gamma}$. 
First we consider a convergence localised to the hitting time to high-probability region $\tau^\varepsilon(x^\varepsilon;\gamma)$. Without loss of generality, we can assume that not only $X^\varepsilon(t)$, but also $x(t)$ is in $B_{\gamma} $. This is because $U(x(t))$ does not change. 
Let 
$$
 \alpha(T):=\sup_{0\le t\le T}|X^\varepsilon(t)-x(t)|,\quad 
 \beta(T):=\sup_{0\le t\le T}|V_1^\varepsilon(t)-v(t)|. 
$$
By assumption, it follows that both $\alpha(0)$ and $\beta(0)$ converge to zero. 
We have 
\begin{align}\label{eq:alpha_bound}
    \alpha(t\wedge \tau^\varepsilon)-\alpha(0)
    \le \int_0^{t\wedge\tau^\varepsilon}|V^\varepsilon(s) - v(s)|\dif s\le \int_0^{t\wedge \tau^\varepsilon}\beta(s)\dif s + \int_0^{t\wedge\tau^\varepsilon}|V^\varepsilon(s)-V_1^\varepsilon(s)|\dif s.
\end{align}
The second term on the right-hand side is negligible by Proposition \ref{prop:bps_constraint} since the integrand satisfies 
$$
|V^\varepsilon-V^\varepsilon_1|=|n(X^\varepsilon)^{\top}V^\varepsilon|\le \underline{\eta}^{-1}~|\lambda^\varepsilon|, 
$$
and the right-hand side converges to zero by the proposition. 
We now bound $\beta(t)$. 
Define the error term $\Psi^\varepsilon$ so that 
\begin{align*}
    \dot{V}^\varepsilon_1(t)=\Xi(X^\varepsilon(t),V^\varepsilon(t))+\Psi^\varepsilon(t). 
\end{align*}
The error term has an expression
\begin{align*}
    \Psi^\varepsilon(t)
    &= \psi\bigl(X^\varepsilon(t), V^\varepsilon(t)\bigr),
\end{align*}
where
$$
\psi(x,v)
    := -\,\frac{v^\top\,\nabla U(x)}{|\nabla U(x)|^2}v^\top\,\nabla^2 U(x)
    + 2\,\frac{v^\top\,\nabla U(x)}{|\nabla U(x)|^4}\nabla^2 U(x)\bigl[v,\nabla U(x)\bigr]\,\nabla U(x)
$$
Since we have uniform bounds (\ref{eq:local_uniform_bound}) for the gradient and Hessian on the band $B_\gamma$, we have
\begin{align*}
|\psi(x,v)|\le (\underline{\eta}^2~\overline{\kappa}~|v|+2\underline{\eta}^{-4}~\overline{\eta}^2~\overline{\kappa}~|v|)~|v^{\top}\nabla U(x)|. 
\end{align*}
Recall that the length of $v$ is preserved by the dynamics. 
Thus the contribution from $\Psi^\varepsilon$ is negligible by Proposition \ref{prop:bps_constraint}. 
On the other hand, on the band $B_\gamma$, we have a bound
\begin{align*}
    |\Xi(x,v)-\Xi(x',v')|\le C~|v|^2~|x-x'|+C~|v|~|v-v'|+C~|v'|~|v-v'|. 
\end{align*}
This implies that there exists a constant $C>0$ such that
\begin{align*}
\beta(t\wedge \tau^\varepsilon)-\beta(0)&\le \int_0^{t\wedge\tau^\varepsilon}|\dot{V}_1^\varepsilon(s)-\dot{v}(s)|\dif s\\
&\le \int_0^{t\wedge\tau^\varepsilon}|\Xi(X^\varepsilon(t),V^\varepsilon(t))-\Xi(x(t),v(t))|\dif s+\int_0^{t\wedge\tau^\varepsilon}|\Psi^\varepsilon(s)|\dif s\\
&\le C\int_0^{t\wedge\tau^\varepsilon}\alpha(s)\dif s+C\int_0^{t\wedge\tau^\varepsilon}\beta(s)\dif s+o_p(1). 
\end{align*}
Combining this with (\ref{eq:alpha_bound}) and applying Gronwall’s inequality, we obtain the convergence of $\beta(T\wedge\tau^\varepsilon)$.  
It then follows directly from (\ref{eq:alpha_bound}) that $\alpha(T\wedge \tau^\varepsilon)$ converges in probability.

Finally, we remove the stopping-time condition. In the limit process (\ref{eq:bpsflow}) the trajectory remains within $B_{\gamma}$ because the potential is preserved. Since $X^\varepsilon$ converges to this limit process, it also remains in the band, and hence the stopping-time condition is unnecessary.
\end{proof}

Next, we continue to assume $\rho = 0$, but no longer require $z$ or $z^*$ to lie in the tangency sets.
Under this change, the process undergoes a free-flight period until it reaches the tangency set.
Once it hits the tangency set, its dynamics are governed by Proposition \ref{prop:bps}.

\begin{corollary}
Assume $\rho =0$. 
Suppose that $z^\varepsilon=(x^\varepsilon, v^\varepsilon)$ converges to $z=(x,v)$  where $x^*\neq x$.  For $T>0$, the following convergence holds in probability: 
$$
    \sup_{0\le t\le T}|\Phi_t^\varepsilon(z^\varepsilon)-\Phi_t(z)|\longrightarrow 0. 
$$
\end{corollary}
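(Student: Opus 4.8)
The plan is to reduce the general case to Proposition~\ref{prop:bps} by treating the initial free-flight phase separately and then pasting it together with the tangency-set dynamics using the strong Markov property. Concretely, starting from $z^\varepsilon=(x^\varepsilon,v^\varepsilon)$ with $x\neq x^*$, I first dispose of the trivial degenerate case $v=0$ (the flow is stationary and the claim is immediate), so assume $v\neq0$. If $\lambda(x,v)=v^\top\nabla U(x)>0$ the process jumps essentially instantaneously to the reflected velocity $v-2n(x)n(x)^\top v$, which has $\lambda\le0$; since this single jump happens in time $O(\varepsilon)$ (using $\dot\lambda\ge\underline{\kappa}|v|^2$ on the band to control the crossing, and the rate $\varepsilon^{-1}\lambda_+$), it does not affect the uniform limit, and we may as well assume $\lambda(x,v)\le0$ after relabelling.

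Next, while $\lambda_t<0$ the intensity $\varepsilon^{-1}(\lambda_t)_+$ vanishes, so $\mathcal{L}_{0,\text{BPS}}$ contributes nothing and the perturbed process follows exactly the free-flight ODE $\dot X^\varepsilon=V^\varepsilon$, $\dot V^\varepsilon=0$ — the \emph{same} equation \eqref{eq:freeflight} as the limit — with only the $O(|z^\varepsilon-z|)$ discrepancy in initial conditions. Let $\sigma$ be the first time the limiting free flight from $z$ reaches the tangency set $\{\lambda=0\}$; under Assumption~\ref{assumption} and $x\neq x^*$, $\lambda$ is strictly increasing along the straight-line flight (by $\dot\lambda=v^\top\nabla^2U\,v>0$), so $\sigma$ is finite (or, if the flight escapes the band first, the statement on $[0,T]$ for that portion is trivial) and the crossing is transversal ($\dot\lambda(\sigma)>0$). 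Transversality gives continuity of the hitting time: $\sigma^\varepsilon\to\sigma$ in probability and $\Phi^\varepsilon_{\sigma^\varepsilon}(z^\varepsilon)\to\Phi_\sigma(z)=:\tilde z$ with $\tilde z$ in the tangency set. On $[0,\sigma^\varepsilon\wedge T]$ the uniform closeness $\sup_t|\Phi^\varepsilon_t(z^\varepsilon)-\Phi_t(z)|\to0$ is then immediate since both solve the same ODE from nearby data.

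Finally I restart the clock at $\sigma^\varepsilon$. By the strong Markov property, $(X^\varepsilon_{\sigma^\varepsilon+t},V^\varepsilon_{\sigma^\varepsilon+t})$ is a Bouncy Particle Sampler started from $\Phi^\varepsilon_{\sigma^\varepsilon}(z^\varepsilon)$, which is a point of the tangency set converging to $\tilde z$ (still with first coordinate $\neq x^*$, since the potential is preserved along the limiting flow and $\tilde z$ lies on the same level set as $z$, away from $x^*$). Proposition~\ref{prop:bps} applies verbatim to give $\sup_{0\le t\le T}|\phi^\varepsilon_t(\Phi^\varepsilon_{\sigma^\varepsilon}(z^\varepsilon))-\phi_t(\tilde z)|\to0$ in probability; combined with the convergence of $\sigma^\varepsilon$ and a short equicontinuity argument for the limiting flow $\phi$ near $\tilde z$ (to absorb the time shift $\sigma^\varepsilon$ versus $\sigma$), the two pieces splice to the claimed uniform convergence on $[0,T]$. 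The main obstacle is the splice at the random time $\sigma^\varepsilon$: one must ensure the hitting time converges \emph{and} that the limiting flow does not move appreciably over the $o_p(1)$ discrepancy $|\sigma^\varepsilon-\sigma|$, which is exactly where transversality of the tangency crossing (guaranteed by convexity away from $x^*$) is used; a secondary technical point is confirming the $O(\varepsilon)$ cost of the possible initial instantaneous bounce so that it is invisible in the uniform norm.
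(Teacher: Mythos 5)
Your proposal is correct and follows essentially the same route as the paper: split on the sign of $\lambda(x,v)=v^\top\nabla U(x)$, show the initial bounce (when $\lambda>0$) occurs in vanishing time via the exponential survival bound, observe that the free flight of the perturbed and limiting processes solve the same ODE until the tangency set is reached, use transversality ($\dot\lambda=v^\top\nabla^2U\,v>0$) for continuity of the hitting time, and then splice with Proposition~\ref{prop:bps} using the common Lipschitz constant of the two flows to absorb the $o_p(1)$ time shift. The only slip is the remark that the potential is preserved along the free-flight segment (it strictly decreases there; it is preserved only along the subsequent $\phi_t$ phase), but this does not affect the argument.
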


\begin{proof}
As in the preceding proposition, we again localise time; to keep the notational simplicity, we suppress the stopping time $\tau^{\varepsilon}(x^{\varepsilon};\gamma)$.  After localising time,  the two trajectories 
$$
t\mapsto \Phi_t^\varepsilon(z^\varepsilon),\quad t\mapsto \Phi_t(z)
$$ are both Lipschitz continuous with the same constant $L>0$. Set 
\begin{align*}
    t(z)=\inf\{t>0: \lambda_t(z)=0\}=\inf\{t>0: \lambda(\Phi_t(z))=0\}. 
\end{align*}
Since $\lambda(\Phi_z(t))$ is strictly increasing and continuous, the map $z\mapsto t(z)$ is continuous by the implicit function theorem. 
We split into three cases according to the initial angular momentum $\lambda=\lambda_0(z)=W_0^{\top}\nabla U(x)$. 

\begin{enumerate}
    \item[(i)] 
If $\lambda<0$, then both flows stay in free-flight until they hit the tangency set at times $t(z^\varepsilon)$ and $t(z)$, respectively. By continuity of $t(z)$, we have $t(z^\varepsilon)\rightarrow t(z)$. Also, we have convergence of $\Phi_t(z^\varepsilon)$ to $\Phi_t(z)$ during the free-flight period.  Hence, on the interval $[0, t(z^\varepsilon)]$, 
$$
\sup_{0\le t\le t(z^\varepsilon)}|\Phi_t^\varepsilon(z^\varepsilon)-\Phi_t(z)|\le 
\sup_{0\le t\le t(z)\wedge t(z^\varepsilon)}|\Phi_t(z^\varepsilon)-\Phi_t(z)|+|t(z)-t(z^\varepsilon)|~L\rightarrow 0. 
$$
The proof is done for $T\le t(z^\varepsilon)$ case. For $T>t(z^\varepsilon)$ case, 
for the remainder of the interval we shift time:
\begin{align*}
\sup_{t(z^\varepsilon)\le t\le T}|\Phi_t^\varepsilon(z^\varepsilon)-\Phi_t(z)|&\le
\sup_{0\le s\le T-t(z^\varepsilon)}|\Phi_{s+t(z^\varepsilon)}^\varepsilon(z^\varepsilon)-\Phi_{s+t(z)}(z)|\\
&\le 
\sup_{0\le s\le T-t(z^\varepsilon)}|\Phi^\varepsilon_{s+t(z^\varepsilon)}(z^\varepsilon)-\Phi_{s+t(z)}(z)|+|t(z)-t(z^\varepsilon)|~L\\
&\le 
\sup_{0\le s\le T-t(z^\varepsilon)}|\phi_s^\varepsilon(\Phi_{t(z^\varepsilon)}(z^\varepsilon))-\phi_s(\Phi_{t(z)}(z))|+|t(z)-t(z^\varepsilon)|~L.   
\end{align*}
Proposition \ref{prop:bps} now gives the desired convergence, as 
$\Phi_{t(z^\varepsilon)}(z^\varepsilon)\rightarrow \Phi_{t(z)}(z)$. 

\item[(ii)] If $\lambda>0$, a bounce occurs immediately at the stopping time $\sigma^\varepsilon$: 
\begin{align*}
    \mathbb{P}\left(\sigma^\varepsilon>\kappa\right)&=\exp\left(-\varepsilon^{-1}\int_0^\kappa\lambda_t(z^\varepsilon)\dif t\right)\le \exp(-\varepsilon^{-1}\kappa \lambda_0(z^\varepsilon))\longrightarrow 0
\end{align*}
where we used the fact that $t\mapsto\lambda_t(z^\varepsilon)$ is monotonically increasing. 
Shifting time and invoking the previous case yields
$$
\sup_{0\le t\le T}|\Phi_t^\varepsilon(z^\varepsilon)-\Phi_t(z)|\le 
\sup_{0\le t\le T-\sigma^\varepsilon}|\Phi_{t}^\varepsilon(\Phi_{\sigma^\varepsilon}(z^\varepsilon))-\Phi_t(z)|+\sigma^\varepsilon~L
$$
with $\Phi_\sigma(z^\varepsilon)\rightarrow (x, B(x)v)=z'$ and $\Phi_t(z)=\Phi_t(z')$, so the claim follows.

\item[(iii)] The last case $\lambda=0$ can be proved in the same way depending on the sign of $W_0^{\top}\nabla U(X^\varepsilon)$. 
\end{enumerate}
Finally, since the process never hits the boundary, the foregoing estimates remain valid without the localisation, completing the proof.
\end{proof}

\begin{proof}[Proof of Theorem \ref{thm:bps_refresh_trajectory}]
Denote $N_T$ by the total number of refreshment jumps up to time $T$. Then, for any $\delta>0$, we can choose $K$ such that $\mathbb{P}(N_T>K)\le \delta$. Fix refreshment jump times $0<T_1<T_2<\cdots$ in the following argument. 

Define the error over the $k$-th time block of length $T$ by 
$$
e_k^\varepsilon:=\sup_{T_{k-1}< t\le T_k}|\Phi^\varepsilon_t(z^\varepsilon)-\Phi_t(z)|. 
$$
Assume inductively that $e_k^{\varepsilon} \xrightarrow{\;\mathbb P\;} 0$. Then
$e_k^{\varepsilon} \to 0$ implies $\Phi_{T_k}^{\varepsilon}(z^{\varepsilon}) \to \Phi_{T_k}(z)$.
Applying Proposition \ref{prop:bps}, now with $T_{k+1}-T_k$ in place of $T$, yields $e_{k+1}^{\varepsilon} \xrightarrow{\;\mathbb P\;} 0$.
By induction, $e_k^{\varepsilon} \xrightarrow{\;\mathbb P\;} 0$ for every $k \ge 1$.

On the other hand,  we can bound the total deviation by 
\begin{align*}
\mathbb{P}\left(\sup_{0\le t\le T}|\Phi^\varepsilon_t(z^\varepsilon)-\Phi_t(z)|>\delta\right)&\le 
\delta + \mathbb{P}\left(\sup_{0\le t\le T}|\Phi^\varepsilon_t(z^\varepsilon)-\Phi_t(z)|>\delta,\ N_T\le K\right)\\
&\le \delta+\sum_{k=1}^{K+1}\mathbb{P}\left(e_k^\varepsilon>\delta\right). 
\end{align*}
Since each $e_k^\varepsilon$ converges to zero, the claim follows. 
\end{proof}

\subsection{Proof of Theorem \ref{thm:bps_high_refresh}}\label{app:subsec:bps_high_ref}

Rather than analysing $X^\varepsilon$ directly, we work with the pure-jump Markov process
$$
Y^\varepsilon(t)=X^\varepsilon(T_n)\quad\mathrm{for}\ T_n\le t<T_{n+1} 
$$
which allows us to use Theorem IX.4.21 of \cite{jacodshryaev}, convergence of characteristics for pure-jump semimartingales, on the convergence of such processes. 
As described in the proof below, both processes converge to the same limit. 
The generator of the pure-jump process $Y^\varepsilon$ is given by
$$
A^\varepsilon f(x)=\rho~\E[f(X^\varepsilon(T_1))-f(X^\varepsilon(0))\mid X^\varepsilon(0)=x]
$$
with initial velocity $W_0\sim\mathcal{N}(0, I_N)$. In the limit, the evolution becomes deterministic and satisfies the ODE:  $\dot{x}=b(x)$ with $b(x)=-n(x)/\sqrt{2\pi}$. The drift coefficient can also be written as
$$
b(x)=\E\left[\rho~T_1~B_+(x)W_1\right]. 
$$

\begin{proof}[Proof of Theorem \ref{thm:bps_high_refresh}]
First we show that it is sufficient to show that the gap  between $X^\varepsilon$ and $Y^\varepsilon$ is negligible. 
For $T_n\le t< T_{n+1}$, we have
$$
|Y^\varepsilon(t)-X^\varepsilon(t)|=
|X^\varepsilon(T_n)-X^\varepsilon(t)|\le |W_n|~|T_{n+1}-T_n|=:Z_n. 
$$
Since $W_n\sim\mathcal{N}(0, I_N)$ and $(T_{n+1}-T_n)$ follows the exponential distribution with mean $\rho^{-1}$, we have
\begin{equation}\nonumber
    \E[Z_n^2]=2N\rho^{-2}. 
\end{equation}
Consequently, 
\begin{equation}\nonumber
    \sup_{0\le t\le T}|Y^\varepsilon(t)-X^\varepsilon(t)|\le \max_{1\le n\le N_T}Z_n\rightarrow 0
\end{equation}
by Markov's inequality. 
Hence, once the convergence for $Y^\varepsilon(t)$ has been established, Lemma VI.3.31 of \cite{jacodshryaev} immediately yields the desired result for $X^\varepsilon(t)$.

We prove the convergence of $Y^\varepsilon(t)\rightarrow x(t)$ in probability. 
Since the incremental gap $|X^\varepsilon(T_1)-X^\varepsilon(0)|$ is bounded by $Z_1$, we have
$$
c^\varepsilon(x)=\rho~\E[(X^\varepsilon(T_1)-X^\varepsilon(0))^2\mid X^\varepsilon(0)=x]\le 2N\rho^{-1}\longrightarrow 0
$$
local uniformly. Hence it remains to prove the locally uniformly  convergence
$$
b^\varepsilon(x)=\rho~\E[X^\varepsilon(T_1)-X^\varepsilon(0)\mid X^\varepsilon(0)=x]\longrightarrow b(x). 
$$
By the dominated convergence theorem, it suffices to show that 
$$
\xi=\rho~(X^\varepsilon(T_1)-X^\varepsilon(0))-\rho~T_1~B_+(X^\varepsilon(0))W_1\rightarrow 0
$$
for each fixed $W_1$, Here, $X^\varepsilon(0)=x^\varepsilon\rightarrow x\neq x^*$. 
We have
$$
X^\varepsilon(T_1)=
\begin{cases}
    x^\varepsilon+T_1~W_1&\mathrm{if}\ W_1^\top n(x^\varepsilon)\le 0\\
    x^\varepsilon+\sigma~W_1+(T_1-\sigma)~B(x^\varepsilon+\sigma~W_1)W_1
    &\mathrm{if}\ W_1^\top n(x^\varepsilon)>0\\
\end{cases}
$$
where $\sigma$ denotes the first bounce jump. Here, we neglect 
the event that the process hits the tangency set during the time interval $[0, T_1)$, as this probability tends to zero. 
By the above expression, if $W_1^{\top}n(x^\varepsilon)<0$, then $\xi$ vanishes.
Otherwise, $\rho~\sigma\rightarrow 0$ by 
\begin{align*}
    \mathbb{P}\left(\rho~\sigma>\kappa\right)&=\exp\left(-\varepsilon^{-1}\int_0^{\rho^{-1}\kappa}\lambda_t(z^\varepsilon)\dif t\right)\le \exp(-(\varepsilon\rho)^{-1}\kappa \lambda_0(z^\varepsilon))\longrightarrow 0. 
\end{align*}
This proves
$$
\xi =\rho\sigma W_1+\rho T_1 (B(x^\varepsilon+\sigma W_1)-B(x^\varepsilon))W_1-\rho\sigma B(x^\varepsilon+\sigma W_1)W_1\rightarrow 0. 
$$
The theorem follows from Theorem IX.4.21 of \cite{jacodshryaev}. 
\end{proof}

\subsection{Proof of Theorem \ref{thm:zigzag_trajectory}}
\label{app:subsec:zigzag_trajectory}

Throughout the proof, let
$$
\lambda_i(z)=v_i~\partial_i U(x),\quad \lambda_{t,i}^\varepsilon=V_i^\varepsilon(t)~\partial_i U(X^\varepsilon(t)) 
$$
for all $i=1,\ldots, N$. 
For $t\ge 0$ define the coordinate-wise jump counter 
\[
   N_{i,t}^{\varepsilon}
   \;:=\;
   \#\bigl\{\,s\in[0,\,t\wedge\tau^{\varepsilon}(x;\gamma)]:\,
             \ V_i^{\varepsilon}(s)\neq V_i^{\varepsilon}(s-)\bigr\}.
\]
Thanks to the convexity, it holds that 
$$
\underline{\kappa}~I\le H_{KK}(x)\le \overline{\kappa}~I
$$
for constants $0<\underline{\kappa}\le \overline{\kappa}$, 
where $I$ is the identity matrix for the tangency coordinates $K$. 

\begin{corollary}\label{cor:zz_constraint}
For $\gamma\in (0,1)$, suppose that $\nabla_i U(x^\varepsilon)=0$ with  $U(x^*)+\gamma<U(x^\varepsilon)<U(x^*)+\gamma^{-1}$.  
For $T>0$, there exist constants $0<c<C$ such that 
$$c~\varepsilon^{-1/2}\le \E\left[N_{i, T}^\varepsilon\right]\le C~\varepsilon^{-1/2}
$$ 
for all sufficiently small $\varepsilon>0$. 
Moreover, 
$$
\E\left[\sup_{0\le t\le T\wedge\tau^\varepsilon(x; \gamma)}|\nabla_i U(X^\varepsilon(t))|\right]\le C~\varepsilon^{1/4}
$$
\end{corollary}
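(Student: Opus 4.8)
The strategy is to run the argument of Proposition~\ref{prop:bps_constraint} and Corollary~\ref{cor:bps_constraint} coordinate by coordinate, the velocity norm $|v^\varepsilon|$ being replaced by the constant $\sqrt N$ and the strict positivity of $\dot\lambda$, which was automatic from convexity for the Bouncy Particle Sampler, now being supplied by the diagonal-dominance condition~\eqref{DD}. Write $\lambda_{t,i}^\varepsilon=V_i^\varepsilon(t)\,\partial_i U(X^\varepsilon(t))$. Since the $i$-th velocity flips at intensity $\varepsilon^{-1}(\lambda_{t,i}^\varepsilon)_+$, the compensator identity reads $\E[N_{i,T}^\varepsilon]=\varepsilon^{-1}\,\E\bigl[\int_0^{T\wedge\tau^\varepsilon}(\lambda_{s,i}^\varepsilon)_+\,\dif s\bigr]$. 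Applying Dynkin's formula to $\lambda_i(x,v)=v_i\partial_i U(x)$ and using $(\mathcal F_n-I)\lambda_i=0$ for $n\neq i$ together with $(\mathcal F_i-I)\lambda_i=-2\lambda_i$ gives
\[
\mathcal L^\varepsilon\lambda_i(x,v)=v_i\bigl(\nabla^2 U(x)\,v\bigr)_i-2\varepsilon^{-1}(\lambda_i)_+^2 .
\]
Because $\lambda_{0,i}^\varepsilon=0$ — this is exactly the hypothesis $\nabla_i U(x^\varepsilon)=0$ — isolating the last term and using the band bounds~\eqref{eq:local_uniform_bound} with $|V^\varepsilon|=\sqrt N$ yields $\E[\int_0^{T\wedge\tau^\varepsilon}(\lambda_{s,i}^\varepsilon)_+^2\,\dif s]\le \tfrac{\varepsilon}{2}(\overline{\kappa}\sqrt N\,T+\overline{\eta})$. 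Cauchy--Schwarz together with $T\wedge\tau^\varepsilon\le T$ then gives $\E[\int_0^{T\wedge\tau^\varepsilon}(\lambda_{s,i}^\varepsilon)_+\,\dif s]\le C\,\varepsilon^{1/2}$, and dividing by $\varepsilon$ produces the upper bound $\E[N_{i,T}^\varepsilon]\le C\,\varepsilon^{-1/2}$.

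Next I would apply Lemma~\ref{lem:monotone} to the path $f(t):=\lambda_{t\wedge\tau^\varepsilon,i}^\varepsilon$. It vanishes at $t=0$, its only discontinuities are the $i$-coordinate flips — which occur when $\lambda_{t-,i}^\varepsilon>0$ and replace it by $-\lambda_{t-,i}^\varepsilon$ — and on the continuity pieces $\dot f=V_i^\varepsilon(\nabla^2 U(X^\varepsilon)V^\varepsilon)_i$; by~\eqref{DD} and compactness of $B_\gamma$ this lies in $[m,M]$ with $M=\overline{\kappa}\sqrt N$ and some $m>0$ (other coordinates' flips introduce kinks but leave $f$ continuous with $\dot f\in[m,M]$, so the lemma applies verbatim). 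Part~(i) of the lemma gives $\sup_{t\le T\wedge\tau^\varepsilon}|\lambda_{t,i}^\varepsilon|^2\le 2M\int_0^{T\wedge\tau^\varepsilon}(\lambda_{s,i}^\varepsilon)_+\,\dif s$; since $|\nabla_i U(X^\varepsilon(t))|=|\lambda_{t,i}^\varepsilon|$, taking expectations, using Jensen, and inserting the bound of the previous paragraph gives $\E[\sup_{t\le T\wedge\tau^\varepsilon}|\nabla_i U(X^\varepsilon(t))|]\le (2M\,C\varepsilon^{1/2})^{1/2}=C'\varepsilon^{1/4}$, which is the second assertion.

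For the lower bound on $\E[N_{i,T}^\varepsilon]$, part~(ii) of the same lemma gives $\int_0^{T\wedge\tau^\varepsilon}|\lambda_{s,i}^\varepsilon|\,\dif s\ge c_0\,(T\wedge\tau^\varepsilon)^2/(1+N_{i,T}^\varepsilon)$ with $c_0=\tfrac{m^2}{2M}(1+M/m)^{-1}$, while its right-hand inequality and the estimate above give $\E[\int_0^{T\wedge\tau^\varepsilon}|\lambda_{s,i}^\varepsilon|\,\dif s]\le(1+M/m)\,C\varepsilon^{1/2}$. I then need a deterministic lower bound $\tau^\varepsilon\ge T_0>0$: since $x^\varepsilon\to x$ with $U(x)$ strictly inside $(U(x^*)+\gamma,\,U(x^*)+\gamma^{-1})$, for small $\varepsilon$ the value $U(x^\varepsilon)$ lies a fixed distance $d>0$ from the band boundary, and $|\tfrac{\dif}{\dif t}U(X^\varepsilon(t))|\le\sqrt N\,\overline{\eta}$ on $B_\gamma$ forces $\tau^\varepsilon\ge T_0:=d/(\sqrt N\,\overline{\eta})$. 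Combining the two displays, $c_0(T\wedge T_0)^2\,\E[(1+N_{i,T}^\varepsilon)^{-1}]\le(1+M/m)\,C\varepsilon^{1/2}$; then convexity of $x\mapsto(1+x)^{-1}$ (Jensen) gives $(1+\E[N_{i,T}^\varepsilon])^{-1}\le C''\varepsilon^{1/2}$, hence $\E[N_{i,T}^\varepsilon]\ge c\,\varepsilon^{-1/2}$ for $\varepsilon$ small.

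The Dynkin computation and the band estimates are routine. The one place that genuinely uses this section's hypotheses rather than mere convexity is the strict positivity $\dot f\ge m>0$, where~\eqref{DD} is indispensable: for a non-diagonally-dominant Hessian $v_i(\nabla^2 U(x)v)_i$ may vanish or be negative — the \emph{clipped} regime — and the lower bound would fail. The only real friction I anticipate is justifying the deterministic lower bound on $\tau^\varepsilon$ under the precise form of the hypothesis on $U(x^\varepsilon)$: if one only assumes $U(x^*)+\gamma<U(x^\varepsilon)<U(x^*)+\gamma^{-1}$ without uniformity in $\varepsilon$, the constant $c$ degrades, so one should invoke $x^\varepsilon\to x$ with $U(x)$ strictly interior, or equivalently shrink the band slightly.
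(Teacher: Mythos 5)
Your proposal is correct and follows essentially the same route as the paper, whose proof of Corollary~\ref{cor:zz_constraint} simply invokes the arguments of Proposition~\ref{prop:bps_constraint}, Corollary~\ref{cor:bps_constraint} and Lemma~\ref{lem:monotone} applied coordinatewise to $\lambda_{t,i}^\varepsilon$, with diagonal dominance supplying the monotonicity $m\le\dot f\le M$ exactly as you describe. The only (cosmetic) deviations are that you extract the lower bound on $\E[N_{i,T}^\varepsilon]$ via Jensen applied to $x\mapsto(1+x)^{-1}$ rather than the Markov-inequality tail bound of Corollary~\ref{cor:bps_constraint}, and that you explicitly flag the need for a uniform-in-$\varepsilon$ lower bound on $\tau^\varepsilon$ --- a point the paper's statement indeed leaves implicit.
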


\begin{proof} We apply the same argument as in the proof for $\lambda^\varepsilon_t$ in Proposition \ref{prop:bps_constraint} and  Corollary \ref{cor:bps_constraint} to the expression $\lambda_{t,i}^\varepsilon$. Specifically, we observe that $\lambda_{t,i}^\varepsilon$ satisfies the conditions outlined in Lemma \ref{lem:monotone}. This follows from the diagonal-dominance assumption, which ensures that the necessary monotonicity condition holds for $\lambda_{t,i}^\varepsilon$. Therefore, by Lemma \ref{lem:monotone}, the claim is valid. 
\end{proof}

For tangency set $K$, let $\phi_t(x;K)$ be the flow described in Section \ref{sec:zigzag_diagonal_dominance}, that is, 
$$
\phi_t(x:K)=x+t~v.
$$
Denote $J=K^c$. Let 
$$
      v_J = (-\operatorname{sgn}\bigl(\partial_j U(x)\bigr))_{j\in J}, 
      \qquad
      v_K = -\,H_{KK}(x)^{-1}\,H_{KJ}(x)\,v_J.
$$

\begin{proposition}\label{prop:zigzag}
    Let $K\neq\emptyset$ be tangency coordinates of $x$. Suppose that $x^\varepsilon\rightarrow x$. Then the $x$-component  of the Zig-Zag Sampler converges to $\phi_t(x, K)$ until the hitting time $\tau(x;K,\gamma)$ where 
    $$
    \tau(x;K,\gamma)=\tau(x;\gamma)\wedge \min_{j\in J}\inf\{t: \partial_jU(\phi_t(x; K))=0\}. 
    $$
\end{proposition}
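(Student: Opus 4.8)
The plan is to follow the template of Proposition~\ref{prop:bps}: localise to the energy band, split the coordinates into the ``locked'' block $J$ and the ``fast'' block $K$, average the fast block with the help of Corollary~\ref{cor:zz_constraint}, and close the estimate with Gronwall's inequality. Throughout, $\phi_t=\phi_t(x;K)$ denotes the limit flow of Section~\ref{sec:zigzag_diagonal_dominance}, which by construction keeps $\partial_k U(\phi_t)=0$ for every $k\in K$.

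First I would localise time at $\tau^\varepsilon:=\tau^\varepsilon(x^\varepsilon;\gamma)$ and at the first time $\sigma^\varepsilon$ that some $j\in J$ satisfies $\partial_j U(X^\varepsilon)=0$; on $[0,T\wedge\tau^\varepsilon\wedge\sigma^\varepsilon]$ the uniform bounds \eqref{eq:local_uniform_bound} hold and, since $U\in C^3$ away from $x^*$, $\nabla^2 U$ is Lipschitz on the band. For $j\in J$ the argument is straightforward: because $\partial_j U(x)\neq0$ and $|X^\varepsilon(t)-x|\le|x^\varepsilon-x|+\sqrt N\,t$, the sign of $\partial_j U(X^\varepsilon(t))$ agrees with that of $\partial_j U(x)$ on the localised interval, so the switching rate $\varepsilon^{-1}(V_j^\varepsilon\partial_j U(X^\varepsilon))_+$ is identically zero when $V_j^\varepsilon=-\sgn(\partial_j U(X^\varepsilon))=v_j$ and at least $\varepsilon^{-1}\delta$ otherwise. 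Hence after at most one initial flip — which happens within $o(1)$ time with probability $1-o(1)$ — the velocity $V_j^\varepsilon$ is pinned at $v_j$ and cannot move again, so $\int_0^t|V_j^\varepsilon(s)-v_j|\,ds\to0$ uniformly in $t$.

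The heart of the proof is the fast block $K$. Since $X^\varepsilon$ is continuous and only $V^\varepsilon$ jumps, on continuity pieces $\tfrac{d}{dt}\partial_k U(X^\varepsilon(t))=(\nabla^2 U(X^\varepsilon(t))V^\varepsilon(t))_k$, so
\[
\partial_k U(X^\varepsilon(t))=\partial_k U(x^\varepsilon)+\int_0^t\bigl(\nabla^2 U(X^\varepsilon(s))V^\varepsilon(s)\bigr)_k\,ds,\qquad k\in K.
\]
By Corollary~\ref{cor:zz_constraint} — this is precisely where the diagonal-dominance condition, through Lemma~\ref{lem:monotone}, is used — the left-hand side is $O_p(\varepsilon^{1/4})$ uniformly in $t$, while $\partial_k U(x^\varepsilon)\to\partial_k U(x)=0$; hence $\sup_t\bigl|\int_0^t(\nabla^2 U(X^\varepsilon(s))V^\varepsilon(s))_k\,ds\bigr|=o_p(1)$. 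For the limit flow, $\partial_k U(\phi_s)\equiv0$ forces $(\nabla^2 U(\phi_s)\dot\phi_s)_k\equiv0$. Subtracting the two relations, using $V_J^\varepsilon\equiv v_J$ (so only the $K\times K$ Hessian block survives on coordinates $k\in K$) together with the Lipschitz bound on $\nabla^2 U$, and integrating by parts to pull the antiderivative $D^\varepsilon(t):=\int_0^t(V_K^\varepsilon(s)-\dot\phi_{s,K})\,ds$ out of the time integral (legitimate because $s\mapsto H_{KK}(\phi_s)$ has bounded derivative on the band), one arrives at
\[
H_{KK}(\phi_t)\,D^\varepsilon(t)=o_p(1)+O\!\left(\int_0^t\bigl(|D^\varepsilon(s)|+|X^\varepsilon(s)-\phi_s|\bigr)\,ds\right),
\]
where $|X^\varepsilon(s)-\phi_s|\le|D^\varepsilon(s)|+o_p(1)$ uniformly, the $J$-block of $X^\varepsilon-\phi$ being negligible by the previous step. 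Since $H_{KK}(\phi_t)\ge\underline\kappa I$ on the band by convexity, Gronwall's inequality gives $\sup_{t\le T}|D^\varepsilon(t)|=o_p(1)$, and therefore $\sup_{t\le T}|X^\varepsilon(t)-\phi_t(x;K)|\to0$ in probability for every fixed $T<\tau(x;K,\gamma)$.

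It remains to let $T\uparrow\tau(x;K,\gamma)$ and to discharge the localisation. On $[0,\tau(x;K,\gamma))$ the flow $\phi$ stays in the interior of $B_\gamma$ and $|\partial_j U(\phi_\cdot)|$ is bounded below for each $j\in J$; the uniform convergence just obtained then rules out $\tau^\varepsilon\wedge\sigma^\varepsilon<T$ with probability tending to one (otherwise $X^\varepsilon$ would have to leave the band, or have $\partial_j U(X^\varepsilon)=0$ for some $j\in J$, at a time where $\phi$ does neither), so the stopping times may be removed. The main obstacle is the block-$K$ step: one must upgrade the weak, time-integrated fact ``$\partial_k U(X^\varepsilon)\approx0$'' into uniform control of the \emph{integrated} fast velocity $\int_0^t V_K^\varepsilon$ — that is, solve the tangency system $H_{KK}v_K=-H_{KJ}v_J$ only in an averaged sense while the instantaneous $V_K^\varepsilon$ keeps snapping between vertices of $\{-1,1\}^{|K|}$ — with error terms small enough for Gronwall to absorb. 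Uniform invertibility of $H_{KK}$ on the band (convexity) and the $C^1$-regularity of $\nabla^2 U$ are exactly what carry this step, and are also where the argument would break down if convexity or the diagonal-dominance condition were dropped.
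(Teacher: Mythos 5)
Your proposal is correct and follows essentially the same route as the paper: lock the $J$-coordinates after an initial $o(1)$ transient, use Corollary~\ref{cor:zz_constraint} to control $\sup_t|\partial_k U(X^\varepsilon(t))|$ and its time integral for $k\in K$, convert that into control of the integrated fast velocity via integration by parts against the (uniformly invertible, $C^1$) Hessian block, and close with Gronwall before removing the localisation. The only difference is bookkeeping — you subtract the limit relation and integrate by parts on $H_{KK}(\phi_s)$, whereas the paper integrates by parts on $H_{KK}^{-1}(X^\varepsilon(u))\nabla_K U(X^\varepsilon(u))$ directly — which is immaterial.
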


\begin{proof}
First, note that without loss of generality, we may suppose that the initial velocity $v^\varepsilon=(v^\varepsilon_K, v^\varepsilon_J)$ satisfies
$$
v_j^\varepsilon = -\operatorname{sgn}\!\bigl(\partial_j U(x^\varepsilon)\bigr), \qquad j\in J.
$$
This implies that if $\varepsilon>0$ is small enough, then $v_j^\varepsilon=v_j$. 
If this condition is not met, there will be an immediate jump so that the equation becomes valid at time $T^\varepsilon$, with $T^\varepsilon\to 0$ as $\varepsilon\to 0$. We can therefore restart the analysis at $T^\varepsilon$.  From that moment on the same argument applies, while the displacement between $X_t^\varepsilon$ and $X_{t+T^\varepsilon}^\varepsilon$ is bounded by $\sqrt{N}\,T^\varepsilon$, which converges to $0$.

Let 
$$
\tau^\varepsilon(x^\varepsilon;K,\gamma)=\tau^\varepsilon(x^\varepsilon;\gamma)\wedge \min_{j\in J}\inf\{t: \partial_jU(X^\varepsilon(t))=0\}. 
    $$
For $0\le t\le T\wedge\tau^{\varepsilon}(x^\varepsilon; K, \gamma)$, there are no jumps for  $X^\varepsilon(t)$ for coordinates $j\in J$. Therefore, for this time interval, 
 $V_j^\varepsilon(t)=v_j$, and 
 $X^\varepsilon_j(t)=x^\varepsilon_j+t~v_j$. 
 In the following, we consider arguments under this time interval. 
By construction,  the time derivative of $\nabla U(X^\varepsilon(t))$ for the coordinate set $K$ is given by 
\[
\frac{\dif }{\dif  t}\bigl[\nabla_K U\bigl(X^\varepsilon(t)\bigr)\bigr]
 = H_{KK}(X^\varepsilon(t))~V_K^\varepsilon(t)+H_{KJ}(X^\varepsilon(t))V_J^\varepsilon(t)=H_{KK}\bigl(X^\varepsilon(t)\bigr)\,W^\varepsilon(t),
\]
where
\[
W^\varepsilon(t)
 := V_K^\varepsilon(t) 
   + H_{KK}^{-1}\bigl(X^\varepsilon(t)\bigr)
         \,H_{KJ}\bigl(X^\varepsilon(t)\bigr)\,V_J^\varepsilon(t).
\]
Let $\Psi^\varepsilon(t)$ be a matrix-valued function defined by
\[
\Psi^\varepsilon(t)[a,b]
 = \frac{\dif }{\dif t}\bigl[H^{-1}_{KK}(X^\varepsilon(t))[a,b]\bigr]
 = -T_K(X^\varepsilon(t))[H_{KK}^{-2}(X^\varepsilon(t))
       V^\varepsilon(t), a, b]
\]
where $T_K$ is the third derivative of $U$ with respect to the coordinate set $K$. On the band $B_\gamma$, we have a uniform bound
$$
\sup_{0\le t\le T\wedge\tau^{\varepsilon}(x; K, \gamma)}|\Psi^\varepsilon(t)|<\infty. 
$$
By the integration by parts formula, we obtain
\[
\int_s^t W^\varepsilon(u)\,\dif u
 = 
   \Bigl[
     H_{KK}^{-1}(X^\varepsilon(u))\,
     \nabla_K U(X^\varepsilon(u))
   \Bigr]_s^t
   - 
   \int_s^t 
     \nabla_K U\bigl(X^\varepsilon(u)\bigr)^{\top}\Psi^\varepsilon(u)
   \,\dif u.
\]
Let $I_1$ and $I_2$ be the first and second term in the right-hand side for $s=0$ and $t= T\wedge \tau^{\varepsilon}(x^\varepsilon; K, \gamma)$. 
We have bounds
$$
|I_1|\le 2~\underline{\kappa}^{-1}~\sum_{k\in K}\sup_{0\le t\le T\wedge \tau^{\varepsilon}(x^\varepsilon; K, \gamma)}|\partial_kU(X^\varepsilon(t))|
$$
and
$$
|I_2|\le C_{\Psi^\varepsilon}~\sum_{k\in K}\int_0^{T\wedge\tau^{\varepsilon}(x^\varepsilon; K, \gamma)}|\partial_kU(X^\varepsilon(t))|\dif t
$$
and both of which converge to $0$ in probability,  
by Corollary \ref{cor:zz_constraint}. 
Next we express $X_K^\varepsilon=(X_i^\varepsilon)_{i\in K}$ as 
\[
\begin{aligned}
X_K^\varepsilon(t)
 &= x_K(0) + \int_0^t V_K^\varepsilon(u)\,\dif u \\[6pt]
 &= 
   x_K(0)
   - 
   \int_0^t 
     H_{KK}^{-1}(X^\varepsilon(u))\,
     H_{KJ}\bigl(X^\varepsilon(u)\bigr)\,V_J^\varepsilon(u)
   \,\dif u
   +
   \int_0^t W^\varepsilon(u)\,\dif u.
\end{aligned}
\]
Since $V_J^\varepsilon(u)=v_J$, we have the following representation: 
\begin{align*}
X_K^\varepsilon(t)-x_K(t)
 = 
   \int_0^t 
     (\psi(X^\varepsilon(u))-\psi(x(u)))
   \,\dif u~v_J
   +
   \int_0^t W^\varepsilon(u)\,\dif u,
\end{align*}
where $\psi(x)=H_{KK}^{-1}(x)H_{KJ}(x)$. 
From this representation, one obtains
\[
\sup_{0\le t\le T\wedge\tau^{\varepsilon}(x^\varepsilon; K, \gamma)}|X_K^\varepsilon(t) - x_K(t)|
 \le 
   C\int_0^{T\wedge\tau^{\varepsilon}(x^\varepsilon; K, \gamma)}|X_K^\varepsilon(t) - x_K(t)|\dif t
   + C~|x_K^\varepsilon-x_K|+
   |I_1|+|I_2|. 
\]
Applying Gronwall’s inequality to the above estimate completes the proof of the claim. Finally, $\tau^{\varepsilon}(x^\varepsilon; K, \gamma)$ converges to $\tau(x; K, \gamma)$ and so we can replace the former by the latter. The claim holds by taking $T\rightarrow\infty$. 
\end{proof}

\begin{proof}[Proof of Theorem \ref{thm:zigzag_trajectory}]
Let 
\[
0 =T_0< T_1 < T_2 < \cdots <T_N
\]
be the hitting times of the limit process to one of the boundaries $\{x: \partial_iU(x)=0\}$. 
Applying Proposition \ref{prop:zigzag} to each segment up to $\tau^\varepsilon(x^\varepsilon;\gamma)$, the claim follows. 
\end{proof}

\section{Analysis for the jump chains}\label{app:sec:fec}

Throughout this section, assume  Assumption \ref{assumption}.
Define $\lambda_t = \lambda_t(x, v) := v^\top \nabla U(x + vt)$, and let $\Lambda_t := U(x + vt) - U(x)$, and write $n(x)=\nabla U(x)/|\nabla U(x)|$. Define the tangency time as 
\begin{equation}\label{eq:tangencytime}
t(x,v)=\inf\{t> 0: \lambda_t=0\}.
\end{equation}
The tangency time $t(x,v)$ will be used as a reference against the PDMP event time with survival function given by: 
\begin{equation}\label{eq:jump_time}
\mathbb{P}(t^\varepsilon(x,v)> t)=\exp\left(-\varepsilon^{-1}\Lambda_t^+(x,v)\right). 
\end{equation}
where
\begin{align*}
    \Lambda_t^+(x,v)=\int_0^t\lambda_s^+(x,v)\dif s,\quad \mathrm{and}\quad \lambda_t^+(x,v)=\max\{\lambda_t(x,v),0\}. 
\end{align*}
By Assumption \ref{assumption},  $\Lambda_\infty^+(x,v)=\lim_{t\rightarrow\infty}\Lambda_t^+(x,v)=+\infty$ as 
$$
\Lambda_t^+(x,v)\ge \Lambda_t(x,v)= U(x+vt)-U(x)\rightarrow +\infty\quad (t\rightarrow +\infty). 
$$
For $\varepsilon>0$, consider the following Markov kernel 
\begin{equation}\label{eq:kernel}
P^\varepsilon(x, A)=\int_{\mathbb{R}^N}\int_0^\infty \mathbf 1_A(x+vt)~\varepsilon^{-1}~\lambda_t^+(x,v)\exp(-\varepsilon^{-1}~\Lambda_t^+(x,v))\dif t~Q_x(\dif v)
\end{equation}
where $Q_x(\dif v)$ is a probability measure on $\mathbb{R}^N\backslash\{0\}$ such that $\lambda_0(x,v)<0$ with probability $1$. 

\begin{proposition}\label{prop:drift}
For $V(x)=\exp(U(x))$ and for $\varepsilon\in (0,1)$, we have
$$
\frac{P^\varepsilon V(x)-V(x)}{V(x)}\le \frac{1}{1-\varepsilon}\int_{\mathbb{R}^N}\exp(\Lambda_{t(x,v)})Q_x(\dif v)-1\quad (x\neq x^*).  
$$
Also, when $\gamma>0$, for $A_\gamma(x)=\{y: U(y)\ge U(x)+\gamma\}$ we have
\begin{align*}
    P^\varepsilon(x, A_\gamma(x))\le 2~\exp(-\varepsilon^{-1}\gamma/2)\quad (x\neq x^*). 
\end{align*}
\end{proposition}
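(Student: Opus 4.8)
The plan is to reduce both inequalities to a one–dimensional computation along the ray $t\mapsto x+vt$. Under Assumption~\ref{assumption} the map $t\mapsto U(x+vt)$ is convex, so $\lambda_t=v^\top\nabla U(x+vt)$ is nondecreasing; since $Q_x$ is supported on $\{v:\lambda_0(x,v)<0\}$ and $U$ is coercive, $\lambda_t$ starts strictly negative and reaches $0$ at the \emph{finite} tangency time $t(x,v)$ of \eqref{eq:tangencytime}. Consequently $\lambda_t^+\equiv0$ and $\Lambda_t^+\equiv0$ on $[0,t(x,v)]$, while for $t>t(x,v)$ one has $\Lambda_t^+(x,v)=\Lambda_t(x,v)-\Lambda_{t(x,v)}(x,v)$ and $t\mapsto\Lambda_t^+$ increases strictly from $0$ to $+\infty$ (the latter by the identity $\Lambda_\infty^+=+\infty$ recorded in the text). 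Hence the change of variables $u=\Lambda_t^+(x,v)$ transports the event-time measure $\varepsilon^{-1}\lambda_t^+e^{-\varepsilon^{-1}\Lambda_t^+}\,\dif t$ on $(t(x,v),\infty)$ to the exponential measure $\varepsilon^{-1}e^{-\varepsilon^{-1}u}\,\dif u$ on $(0,\infty)$; this is the single computational device used throughout.

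For the drift estimate I would use $V(x+vt)=e^{U(x+vt)}=V(x)\,e^{\Lambda_t(x,v)}$ and, on $(t(x,v),\infty)$, substitute $\Lambda_t=u+\Lambda_{t(x,v)}(x,v)$. The inner $t$–integral in \eqref{eq:kernel} becomes $V(x)\,e^{\Lambda_{t(x,v)}}\int_0^\infty e^{u}\,\varepsilon^{-1}e^{-\varepsilon^{-1}u}\,\dif u$, and since $\varepsilon\in(0,1)$ makes $\varepsilon^{-1}-1>0$ this integral converges and equals $1/(1-\varepsilon)$. Integrating against $Q_x$, subtracting $V(x)$, and dividing by $V(x)$ yields the claimed bound, in fact with equality; one may further note $\Lambda_{t(x,v)}\le0$, so the right-hand side is at most $\varepsilon/(1-\varepsilon)$, but it is the integral form that the subsequent drift argument requires.

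For the hitting bound I would write $A_\gamma(x)=\{y:U(y)\ge U(x)+\gamma\}$ along the ray as $\{t:\Lambda_t(x,v)\ge\gamma\}$ and use $\Lambda_t^+\ge\Lambda_t$ pointwise, so $\mathbf 1_{\{\Lambda_t\ge\gamma\}}\le\mathbf 1_{\{\Lambda_t^+\ge\gamma\}}$. The same change of variables then gives $P^\varepsilon(x,A_\gamma(x))\le\int_{\mathbb{R}^N}\int_\gamma^\infty\varepsilon^{-1}e^{-\varepsilon^{-1}u}\,\dif u\,Q_x(\dif v)=e^{-\varepsilon^{-1}\gamma}$, which is dominated by $2e^{-\varepsilon^{-1}\gamma/2}$; equivalently one may locate the crossing time $t_\gamma>t(x,v)$ explicitly and observe $\Lambda_{t_\gamma}^+=\gamma-\Lambda_{t(x,v)}\ge\gamma$.

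The one step deserving care is the validity of the change of variables near the minimiser: I need $\lambda_t$ to be \emph{strictly} increasing on $(t(x,v),\infty)$ so that $t\mapsto\Lambda_t^+$ is a bijection onto $(0,\infty)$, which could be in doubt if the ray met $x^*$, where $\nabla^2U$ is not assumed to exist. This is resolved by global convexity of $U$ (a consequence of Assumption~\ref{assumption}): the convex function $t\mapsto U(x+vt)$ attains its minimum value $U(x^*)$ at most at the single point $t(x,v)$ (a nondegenerate minimising interval would force $x+vt\equiv x^*$ with $v\neq0$), so for $t>t(x,v)$ the ray stays away from $x^*$, $\nabla^2U$ is positive definite there, and $\lambda_t$ is strictly positive and increasing. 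Beyond this, one records that $t(x,v)<\infty$, that $v\mapsto\Lambda_{t(x,v)}(x,v)$ is measurable, and that the first integral converges precisely because $\varepsilon<1$; the remaining manipulations are routine applications of Tonelli's theorem.
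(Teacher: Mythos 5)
Your proof is correct and follows essentially the same route as the paper: both rest on the decomposition $\Lambda_t=\Lambda_{t(x,v)}+\Lambda_t^+$ for $t>t(x,v)$ and then evaluate the resulting exponential integral, your change of variables $u=\Lambda_t^+$ being the same computation as the paper's normalisation by $(\varepsilon^{-1}-1)$. For the hitting bound you obtain the slightly sharper $e^{-\varepsilon^{-1}\gamma}$ before relaxing to $2e^{-\varepsilon^{-1}\gamma/2}$, whereas the paper splits $\Lambda_t^+\ge\gamma/2+\Lambda_t^+/2$; the conclusions agree.
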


\begin{proof}
Since $U(x+vt)-U(x)=\Lambda_t(x,v)$, we have
\begin{align*}
    \frac{P^\varepsilon V(x)-V(x)}{V(x)}=\int_{\mathbb{R}^N}\int_0^\infty\exp\left(\Lambda_t\right)\varepsilon^{-1}\lambda_t^+\exp\left(-\varepsilon^{-1}\Lambda_t^+\right)\dif t~Q_x(\dif v)-1. 
\end{align*}
By strong convexity, 
$\dot{\lambda}_t=\nabla^2U(x+vt)[v^{\otimes 2}]>0$. Therefore, 
$\lambda_t$ is a strictly increasing function, and $\lambda_t^+=0$ for  $t\le t(x,v)$. By the fact,  for $t>t(x,v)$, we have a simple decomposition
$$
\Lambda_t=\Lambda_{t(x,v)}+\Lambda_t^+. 
$$ 
Also, we have 
\begin{align*}
\int_0^\infty\exp\left(\Lambda_t^+\right)\varepsilon^{-1}\lambda_t^+\exp\left(-\varepsilon^{-1}\Lambda_t^+\right)\dif t
=
\frac{1}{1-\varepsilon}\int_0^\infty(\varepsilon^{-1}-1)\lambda_t^+\exp\left(-(\varepsilon^{-1}-1)\Lambda_t^+\right)\dif t
\le \frac{1}{1-\varepsilon}, 
\end{align*}
where the last inequality is equality if $\Lambda_\infty^+:=\lim_{t\rightarrow\infty}\Lambda_t=\infty$. 
By the decomposition of $\Lambda_t$, we have
\begin{align*}
\int_0^\infty\exp\left(\Lambda_t\right)\varepsilon^{-1}\lambda_t^+\exp\left(-\varepsilon^{-1}\Lambda_t^+\right)\dif t
&=
\exp(\Lambda_{t(x,v)})~\int_0^\infty\exp\left(\Lambda_t^+\right)\varepsilon^{-1}\lambda_t^+\exp\left(-\varepsilon^{-1}\Lambda_t^+\right)\dif t\\
&\le \frac{1}{1-\varepsilon}\exp(\Lambda_{t(x,v)}). 
\end{align*}
The second claim can be proved in the same way. For $x+vt\in A_\gamma(x)$, $\Lambda_t\ge \gamma$ and hence $\Lambda_t^+\ge\gamma$. In particular, 
$\Lambda_t^+\ge \gamma/2+\Lambda_t^+/2$. Therefore, 
\begin{align*}
    P^\varepsilon(x, A_\gamma(x))&=
\int_{\mathbb{R}^N}\int_0^\infty \mathbf 1_{A_\gamma(x)}(x+vt)\varepsilon^{-1}\lambda_t^+\exp\left(-\varepsilon^{-1}\Lambda_t^+\right)\dif t~Q_x(\dif v)\\
&\le 2~\exp(-\varepsilon^{-1}\gamma/2)~
\int_{\mathbb{R}^N}\int_0^\infty~(\varepsilon^{-1}/2)\lambda_t^+\exp\left(-\varepsilon^{-1}/2~\Lambda_t^+\right)\dif t~Q_x(\dif v)\\
&\le 2~\exp(-\varepsilon^{-1}\gamma/2). 
\end{align*}
\end{proof}

Let $0<\underline{\kappa}\le\overline{\kappa}$ and $0<\underline{\eta}\le\overline{\eta}$ be such that the bounds from Appendix~\ref{app:sec:proof_PDMP} hold uniformly on the band $B_{\gamma/2}$ (in place of $B_{\gamma}$).

\begin{lemma}\label{lem:drift_bound}
For $x\in B_{\gamma}$, if $\lambda_0(x,v)<0$, then for $0\le t\le t(x,v)$ we have
$$
\Lambda_{t}\le -\min\left\{\frac{\underline{\kappa}}{2}|v|^2~t^2,\frac{\gamma}{2}\right\}. 
$$
In particular, we have
\begin{align*}
\Lambda_{t(x,v)}\le -\min\left\{\frac{\underline{\kappa}}{2\overline{\kappa}^2}~\underline{\eta}^2\frac{(v^{\top}n(x))^2}{|v|^2}, \frac{\gamma}{2}\right\}.
\end{align*}
\end{lemma}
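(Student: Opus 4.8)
The plan is to reduce everything to one-dimensional calculus along the ray $s\mapsto x+vs$, working with $t\mapsto\Lambda_t$ and $t\mapsto\lambda_t=\dot\Lambda_t$ (so that $\dot\lambda_t=v^\top\nabla^2U(x+vt)\,v$). Since $\nabla^2U\succ0$ away from $x^*$ (Assumption~\ref{assumption}), $\lambda$ is non-decreasing; combined with $\lambda_0<0$ and $\lambda_{t(x,v)}=0$ this shows $t(x,v)<\infty$ and $\lambda_t\le0$ on $[0,t(x,v)]$, hence $\Lambda_t\le0$ there (non-increasing from $\Lambda_0=0$). The key ingredient is the containment: whenever $\Lambda_u\ge-\gamma/2$ the point $x+vu$ lies in $B_{\gamma/2}$. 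Indeed the upper constraint is immediate from $U(x+vu)=U(x)+\Lambda_u\le U(x)\le U(x^*)+\gamma^{-1}$, and the lower constraint from $U(x+vu)=U(x)+\Lambda_u\ge U(x^*)+\gamma-\gamma/2=U(x^*)+\gamma/2$; consequently $\underline{\kappa}|v|^2\le\dot\lambda_u\le\overline{\kappa}|v|^2$ and $|\nabla U(x+vu)|\ge\underline{\eta}$ hold at such times (the case $u=0$ in particular gives $x\in B_{\gamma/2}$).

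Next I would introduce $\sigma:=\inf\{u\ge0:\Lambda_u\le-\gamma/2\}\wedge t(x,v)$. By the containment and continuity of $\Lambda$, the trajectory stays in $B_{\gamma/2}$ (hence avoids $x^*$) on all of $[0,\sigma]$, so the lower Hessian bound is available there. For $t\le\sigma$: from $\lambda_\sigma\le0$ and $\lambda_\sigma-\lambda_t=\int_t^\sigma\dot\lambda_u\,\dif u\ge\underline{\kappa}|v|^2(\sigma-t)$ we get $\lambda_t\le-\underline{\kappa}|v|^2(\sigma-t)$, and integrating once more, $\Lambda_t\le-\underline{\kappa}|v|^2\,t\,(\sigma-t/2)\le-\frac{\underline{\kappa}}{2}|v|^2t^2$ since $t\le\sigma$. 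For $\sigma<t\le t(x,v)$ — which can occur only when $\sigma<t(x,v)$, in which case $\Lambda_\sigma=-\gamma/2$ by continuity — monotonicity of $\Lambda$ gives $\Lambda_t\le\Lambda_\sigma=-\gamma/2$. Since both $-\frac{\underline{\kappa}}{2}|v|^2t^2$ and $-\gamma/2$ are $\le-\min\{\frac{\underline{\kappa}}{2}|v|^2t^2,\gamma/2\}$, this establishes the first inequality.

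For the displayed consequence I would put $t=t(x,v)$ in the first inequality and lower bound $t(x,v)$. If $\sigma<t(x,v)$ then already $\Lambda_{t(x,v)}\le-\gamma/2$, which is stronger than required. Otherwise $\sigma=t(x,v)$, so $x+vu\in B_{\gamma/2}$ on all of $[0,t(x,v)]$ and the upper Hessian bound applies there, giving $-\lambda_0=\int_0^{t(x,v)}\dot\lambda_u\,\dif u\le\overline{\kappa}|v|^2\,t(x,v)$; since $-\lambda_0=-v^\top\nabla U(x)=|\nabla U(x)|\,|v^\top n(x)|\ge\underline{\eta}\,|v^\top n(x)|$, this yields $t(x,v)\ge\underline{\eta}|v^\top n(x)|/(\overline{\kappa}|v|^2)$, and substituting into $\frac{\underline{\kappa}}{2}|v|^2t(x,v)^2$ produces the stated constant $\frac{\underline{\kappa}\,\underline{\eta}^2}{2\overline{\kappa}^2}(v^\top n(x))^2/|v|^2$.

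The one genuinely delicate point is the boundary-layer bookkeeping behind the containment: the constants $\underline{\kappa},\overline{\kappa},\underline{\eta}$ are uniform only on $B_{\gamma/2}$, whereas the trajectory starts in the larger band $B_\gamma$ and its potential strictly decreases along $[0,t(x,v)]$, so one must verify that it can leave $B_{\gamma/2}$ only ``from below'' and only once $\Lambda$ has already fallen to $-\gamma/2$ — exactly the threshold at which the $\min$ in the statement saturates. Everything else reduces to the quadratic lower and upper envelopes for $\lambda$ and $\Lambda$.
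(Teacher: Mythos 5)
Your proof is correct and follows the same route as the paper: upper and lower envelopes for $\dot\lambda$ on the (larger) band $B_{\gamma/2}$, the quadratic lower bound for $-\Lambda_t$ when the trajectory remains inside, and the $-\gamma/2$ fallback once it has exited. Your stopping-time $\sigma$ makes explicit the case split that the paper phrases more informally (``the process may go out of $B_{\gamma/2}$ but in that case $\Lambda\le-\gamma/2$''), which is a mild clarification rather than a different argument; in particular your containment observation is exactly the reason the paper chooses constants calibrated on $B_{\gamma/2}$ rather than $B_\gamma$.
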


\begin{proof}
If $\lambda_0(x,v)<0$, then the value of $U(x)$ is monotonically decreasing until the hitting time $t(x,v)$. The process may go out of $B_{\gamma/2}$ but in that case, $U$ has already decreased by at least $\gamma/2$, so $\Lambda\le -\gamma/2$. 
We consider the other cases, that is, the process is still on $B_{\gamma/2}$. In this case,  we have
$$
0=\lambda_{t(x,v)}=\lambda_0+\int_0^{t(x,v)}\dot{\lambda}_t\dif t 
$$
we have
\begin{equation}\label{eq:stoppingtimebound}
\frac{-\lambda_0}{\overline{\kappa}|v|^2}\le t(x,v)
\end{equation}
since $\dot{\lambda}_t=\nabla^2U(x+vt)[v^{\otimes 2}]$ is bounded below by $\underline{\kappa}|v|^2$. On the other hand, for $t\in [0, t(x,v)]$, we have 
$$
\lambda_t=\lambda_t-\lambda_{t(x,v)}=-\int_t^{t(x,v)}\dot{\lambda}_s\dif s\le -\underline{\kappa}|v|^2(t(x,v)-t). 
$$
Combining these fact, we obtain that
$$
\Lambda_t=\int_0^t\lambda_s\dif s
\le -\int_0^t \underline{\kappa}|v|^2(t(x,v)-s)~\dif s
\le -\frac{\underline{\kappa}}{2}~|v|^2~t^2\quad\Longrightarrow\quad \Lambda_{t(x,v)}\le -\frac{\underline{\kappa}}{2\overline{\kappa}^2}\frac{\lambda_0^2}{|v|^2}, 
$$
where the right-hand side comes from (\ref{eq:stoppingtimebound}). 
Then the last claim follows as $(-\lambda_0)=|v^{\top}\nabla U(x)|\ge \underline{\eta}~|v^{\top}n(x)|$. 
\end{proof}

\subsection{Proof of Theorem \ref{thm:bps_refresh}}\label{app:subsec:bps_jump_chain}
We now define the one-step refresh kernel $P$ of the ODE-with-jumps limit (see Subsection~\ref{subsec:lowrefreshbps}). Let $\rho>0$ be the refresh rate. Fix $x\in\mathbb{R}^N\backslash\{x^*\}$ and set $X_0=x$. Draw
\[
W_1\sim \mathcal{N}(0,I_N),\qquad T_1\sim \mathcal{E}(\rho)=:\Gamma,
\]
independently, where $\mathcal{E}(\rho)$ is the exponential distribution with mean $\rho^{-1}$. Define the reflected initial velocity as  $V_0=B_+(x)W_1$. Let $(X_t,V_t)$ denote the deterministic trajectory that follows the free-flight (\ref{eq:freeflight})
up to the tangency time $t(x,v)$ defined in (\ref{eq:tangencytime}), and then evolves according to the normalised gradient flow (\ref{eq:bpsflow}) after $t(x,v)$. We then set
\[
P(x,A)\;:=\;\mathbb{P}\bigl(X_{T_1}\in A\bigr),\qquad A\in\mathcal{B}(\mathbb{R}^N).
\]
Note that $X_t$ is $U$-non-increasing.

\begin{proposition}\label{prop:bpsdrift}
    Let $V(x)=\exp(U(x))$, and let $\gamma\in (0,1)$. For the one-step refresh kernel $P$, there exists $\beta >0$ and $c >0$ such that, 
    $$
    PV(x)-V(x)\le -\beta  V(x)\quad\mathrm{for}\ x\in B_{\gamma}. 
    $$
\end{proposition}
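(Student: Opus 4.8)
The plan is to prove that a single refresh step contracts $V=e^{U}$ by a fixed multiplicative factor, uniformly over $B_\gamma$. The first step is an exact formula for $PV$. Started at $x$ with refreshed velocity $V_0:=B_+(x)W_1$, the limiting trajectory runs the free flight \eqref{eq:freeflight} up to the tangency time $t(x,V_0)$ of \eqref{eq:tangencytime} and then follows \eqref{eq:bpsflow}, which is a geodesic motion on the level set $\{U=\text{const}\}$ and hence leaves $U$ unchanged. Consequently $U(X_{T_1})=U(x)+\Lambda_{\,T_1\wedge t(x,V_0)}(x,V_0)$, so, taking expectations over the independent pair $(W_1,T_1)$ with $W_1\sim\mathcal N(0,I_N)$, $T_1\sim\mathcal E(\rho)$,
\[
PV(x)=V(x)\,\mathbb E\!\left[\exp\!\big(\Lambda_{\,T_1\wedge t(x,V_0)}(x,V_0)\big)\right].
\]
Since $B_+(x)$ forces $V_0^{\top}n(x)=-|W_1^{\top}n(x)|\le 0$, we have $\lambda_0(x,V_0)=-|\nabla U(x)|\,|W_1^{\top}n(x)|<0$ a.s., $|V_0|=|W_1|$, and $t\mapsto\Lambda_t(x,V_0)$ is non-increasing on $[0,t(x,V_0)]$ with $\Lambda_0=0$ (the Hessian is positive definite, so $\lambda_t$ strictly increases to $0$ at $t(x,V_0)$). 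In particular the exponent above is $\le 0$.

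It therefore suffices to produce constants $p_0,\mu>0$ independent of $x\in B_\gamma$ and an event $E$ with $\mathbb P(E)\ge p_0$ on which $\Lambda_{\,T_1\wedge t(x,V_0)}\le-2\mu$: splitting the expectation over $E$ and $E^{c}$ and using $\Lambda\le 0$ off $E$ then gives $\mathbb E[\exp(\Lambda_{\,T_1\wedge t(x,V_0)})]\le 1-p_0(1-e^{-2\mu})$, i.e.\ $PV(x)-V(x)\le-\beta V(x)$ with $\beta:=p_0(1-e^{-2\mu})$. I would take
\[
E:=\{|W_1|\le 2\sqrt N\}\cap\{|W_1^{\top}n(x)|\ge\delta\}\cap\{T_1\ge s_0\},\qquad s_0:=\frac{\underline\eta\,\delta}{4N\overline\kappa},
\]
with $\delta>0$ small enough that $\sqrt{2/\pi}\,\delta\le\tfrac14$. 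On $E$, inequality \eqref{eq:stoppingtimebound} from the proof of Lemma \ref{lem:drift_bound} gives $t(x,V_0)\ge |\nabla U(x)|\,|W_1^{\top}n(x)|/(\overline\kappa|W_1|^2)\ge \underline\eta\delta/(4N\overline\kappa)=s_0$, so $T_1\wedge t(x,V_0)\ge s_0$; then Lemma \ref{lem:drift_bound}, applied at the time $s_0\le t(x,V_0)$ with velocity $V_0$ and $|V_0|\ge\delta$, yields $\Lambda_{\,T_1\wedge t(x,V_0)}\le\Lambda_{s_0}(x,V_0)\le-\min\{\tfrac{\underline\kappa}{2}\delta^2 s_0^2,\tfrac\gamma2\}=:-2\mu$. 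For the probability, $W_1^{\top}n(x)\sim\mathcal N(0,1)$ by rotational invariance (so $\mathbb P(|W_1^{\top}n(x)|<\delta)\le\sqrt{2/\pi}\,\delta\le\tfrac14$, uniformly in $x$), Markov's inequality gives $\mathbb P(|W_1|^2>4N)\le\tfrac14$, and $T_1$ is independent, whence $\mathbb P(E)\ge\tfrac12 e^{-\rho s_0}=:p_0>0$. (The constant $c$ in the statement is only required when this bound is combined with a trivial estimate for $PV$ outside $B_\gamma$, argued as in Proposition \ref{prop:drift}, to form the full drift inequality behind Theorem \ref{thm:bps_refresh}.)

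The hard part is the near-tangent regime $W_1^{\top}n(x)\approx 0$: there the reflected velocity is almost parallel to the level set, $\lambda_0(x,V_0)\approx 0$, and $t(x,V_0)\to 0$, so no uniform descent of $\Lambda$ is available and the naive bound $\Lambda\le 0$ cannot be improved. The two truncations $|W_1^{\top}n(x)|\ge\delta$ and $|W_1|\le 2\sqrt N$ are exactly what keeps $\min(T_1,t(x,V_0))$ — hence, via Lemma \ref{lem:drift_bound}, the guaranteed drop of $\Lambda$ — bounded below by constants that do not depend on $x$. The remaining items are routine: checking that the free-flight segment stays in $B_{\gamma/2}$ (or has already lost at least $\gamma/2$ of potential, the dichotomy already handled in Lemma \ref{lem:drift_bound}) so that the uniform gradient/Hessian bounds apply, and confirming that the post-tangency geodesic phase leaves $U$ unchanged.
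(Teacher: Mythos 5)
Your proof is correct and follows essentially the same route as the paper's: you use the same reduction to $PV(x)=V(x)\,\E\bigl[\exp(\Lambda_{T_1\wedge t(x,V_0)})\bigr]$ (the paper phrases this via an auxiliary kernel $P'$ that freezes at the tangency time), invoke the same Lemma~\ref{lem:drift_bound} descent estimate, and exploit the same fact that the law of $(W_1^{\top}n(x),|W_1|,T_1)$ is $x$-free. The only difference is cosmetic: the paper bounds the expectation directly by integrating the Lemma~\ref{lem:drift_bound} envelope against the full Gaussian/exponential law, whereas you truncate to an explicit event $E$ of uniformly positive probability and use $\Lambda\le 0$ on $E^{c}$ — a slightly more hands-on version of the same argument.
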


\begin{proof}
    Let $\Gamma$ be the exponential distribution with parameter $\rho>0$ and let 
\begin{equation}\nonumber
P'(x, A)=\int_{\mathbb{R}^N}\int_0^\infty \mathbf 1_A(x+v~(t(x,v)\wedge T))~~\Gamma(\dif T)~Q_x(\dif v). 
\end{equation}
   The kernel $P$ differs from $P'$ because, under $P$, the process continues to evolve after the tangency time $t(x,v)$, whilst $P'$  freezes the dynamics at that stopping time. As the potential function remains unchanged beyond the hitting time, $P$ and $P'$ agree on $V(x)$. We have
   $$
   \frac{P'V(x)-V(x)}{V(x)}=\int_{\mathbb{R}^N}\int_0^\infty\exp(\Lambda_{t(x,v)\wedge T})\Gamma(\dif T)Q_x(\dif v)-1. 
   $$
    From Lemma \ref{lem:drift_bound}, we have
    $$
    \Lambda_{t(x,v)\wedge s}
    \le -\min\left\{\frac{\underline{\kappa}}{2}|v|^2~s^2,~\frac{\underline{\kappa}}{2\overline{\kappa}^2}~\underline{\eta}^2\frac{(v^{\top}n(x))^2}{|v|^2}\right\}
    $$
    with constants calibrated on $B_{\gamma/2}$. 
    In particular, the right-hand side is negative and the distribution does not depend on $x$, as the joint law of $((v^{\top}n(x), |v|)$ is $x$-free. 
    Thus 
    \begin{equation}\nonumber
    \frac{P'V(x)-V(x)}{V(x)}\le  -\beta \quad (x\in B_{\gamma/2})
    \end{equation}
    where 
    $$
    \beta=1-\E\left[\exp\left(\min\left\{\frac{\underline{\kappa}}{2}|v|^2~T^2,~\frac{\underline{\kappa}}{2\overline{\kappa}^2}~\underline{\eta}^2\frac{(v^{\top}n(x))^2}{|v|^2}\right\}\right)\right]>0
    $$
    with $v\sim Q_x(\dif v)$ and $T\sim\Gamma(\dif T)$ independently. 
 \end{proof}

\begin{proof}[Proof of Theorem \ref{thm:bps_refresh}]
By assumption $x\in L_\gamma^c$. Shrinking $\gamma$ if necessary, assume $x\in B_\gamma\subset L_\gamma^{\,c}$. Let
\[
m(x;\gamma):=\inf\{n\ge 0:\ X_{T_n}\in B_\gamma^c\}.
\]
Because the dynamics are $U$-non-increasing, $m(x;\gamma)$ coincides with the hitting time of the chain to $L_\gamma$  denoted by $n(x;\gamma)$. 

By the drift condition in Proposition~\ref{prop:bpsdrift} there exist $r>1$ and $C<\infty$ such that
\[
\sup_{x\in B_\gamma}\ \E_x\!\left[\sum_{n=0}^{n(x:\gamma)-1} r^n\,V(X_{T_n})\right]=\sup_{x\in B_\gamma}\ \E_x\!\left[\sum_{n=0}^{m(x:\gamma)-1} r^n\,V(X_{T_n})\right]\le\ C.
\]
Since $x^*$ is the minimiser of $U(x)$, we have $V(x)\ge V(x_*)$. Thus 
\[
\sum_{n=0}^{M-1} r^n\,V(X_{T_n})\ \ge\ V(x_*)\,\frac{r^{M}-1}{r-1}. 
\]
Therefore, 
\[
\P_x(n(x;\gamma)\ge M)\ \le\ \frac{C~(r-1)}{V(x^*)}~(r^M-1)^{-1}. 
\]
Given $c>0$, choose $M\ge \lceil \log(1+C(r-1)/cV(x^*))/\log r\rceil$ to obtain $\P_x(\tau_\gamma\ge M)\le c$.

Finally, by Proposition~\ref{prop:bps} the $\varepsilon$-BPS process converges to the limit process $X_t$ as $\varepsilon\to0$, and therefore
\[
\limsup_{\varepsilon\to0}\ \P_x\big(n^\varepsilon(x;\gamma)\ge M\big)
\ \le\ \P_x\big(n(x;\gamma)\ge M\big)\ \le\ c.
\]

\end{proof}

\subsection{Proof of Theorem \ref{thm:fec}}
\label{app:subsec:proof_fec}

Consider the jump chain of the Forward Event-Chain. 
The corresponding Markov kernel $P^\varepsilon$ has the form (\ref{eq:kernel}) where $Q_x(v)$ is the law of 
$$
-\xi n(x)+n^\perp(x)v,
$$
where $\xi\sim\mathrm{Rayleigh}(1)$   and $v\sim\mathcal{N}(0, I_N)$ independently. 

\begin{proposition}\label{prop:fec_dc}
    There exists a constant $\beta >0$ and $\gamma >0$ such that all  sufficiently small $\varepsilon>0$, 
    $$
    P^\varepsilon V(x)-V(x)\le -\beta  V(x)\quad (x\in B_{\gamma}).
    $$
\end{proposition}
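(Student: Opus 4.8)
The plan is to feed the Forward Event-Chain velocity law into the generic one-step drift estimate of Proposition~\ref{prop:drift} and then bound the resulting integral with the tangency estimate of Lemma~\ref{lem:drift_bound}; the argument runs in parallel to the proof of Proposition~\ref{prop:bpsdrift}. First I would fix $\gamma\in(0,1)$ and recall that the Forward Event-Chain kernel $P^\varepsilon$ is of the form~\eqref{eq:kernel} with $Q_x$ equal to the law of $v=-\xi\,n(x)+n^\perp(x)w$, where $\xi\sim\mathrm{Rayleigh}(1)$ and $w\sim\mathcal N(0,I_N)$ are independent. Since $n^\perp(x)w$ is orthogonal to $n(x)$ and $|n(x)|=1$, we have $v^\top n(x)=-\xi$, so $\lambda_0(x,v)=v^\top\nabla U(x)=-\xi\,|\nabla U(x)|<0$ and $v\neq0$ almost surely (as $\xi>0$ a.s.); hence $Q_x$ satisfies the hypotheses required by Proposition~\ref{prop:drift} and Lemma~\ref{lem:drift_bound}. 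Also $|v|^2=\xi^2+|n^\perp(x)w|^2$, and by rotational invariance $|n^\perp(x)w|^2\sim\chi^2_{N-1}$, independent of $\xi$ and of $x$.

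Next I would apply Proposition~\ref{prop:drift} and, inside the $Q_x$-integral, insert the bound $\Lambda_{t(x,v)}\le-\min\{C_0\,(v^\top n(x))^2/|v|^2,\ \gamma/2\}$ of Lemma~\ref{lem:drift_bound}, where $C_0:=\underline\kappa\,\underline\eta^2/(2\overline\kappa^2)>0$ comes from the band-uniform bounds on $B_{\gamma/2}$. Writing $(v^\top n(x))^2/|v|^2=\xi^2/(\xi^2+|n^\perp(x)w|^2)$ and setting
\[
Z:=\min\!\Bigl\{\,C_0\,\frac{\xi^2}{\xi^2+|n^\perp(x)w|^2},\ \frac{\gamma}{2}\Bigr\},
\]
monotonicity of the exponential gives $\int\exp(\Lambda_{t(x,v)})\,Q_x(\dif v)\le\E[e^{-Z}]$. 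The law of $Z$ does not depend on $x$; moreover $0<Z\le\gamma/2$ almost surely (positivity because $\xi>0$ a.s.), so $\delta:=1-\E[e^{-Z}]$ is a strictly positive constant independent of $x$. Proposition~\ref{prop:drift} then gives, for every $x\in B_\gamma$ and $\varepsilon\in(0,1)$,
\[
\frac{P^\varepsilon V(x)-V(x)}{V(x)}\ \le\ \frac{1-\delta}{1-\varepsilon}-1\ =\ \frac{\varepsilon-\delta}{1-\varepsilon},
\]
and restricting to $\varepsilon\le\delta/2$ makes the right side at most $-\delta/2$, so the claim follows with $\beta:=\delta/2$.

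The proof is essentially a substitution into two results already available, so no single step is hard; the point that deserves care is the \emph{uniformity} of $\beta$ over $x\in B_\gamma$. This rests on two facts: the Hessian and gradient constants $\underline\kappa,\overline\kappa,\underline\eta$ entering $C_0$ are band-uniform by construction, and the ``angular'' ratio $(v^\top n(x))^2/|v|^2=\xi^2/(\xi^2+|n^\perp(x)w|^2)$ has an $x$-free distribution under $Q_x$ (a $\mathrm{Beta}(1,(N-1)/2)$ variable, since $\xi^2\sim\chi^2_2$). A minor bookkeeping remark is that the exceptional situations — $\xi$ close to $0$, or the free flight leaving $B_{\gamma/2}$ before the tangency time — require no separate treatment, since they are already absorbed into the $\gamma/2$ clamp appearing in $Z$ through Lemma~\ref{lem:drift_bound}.
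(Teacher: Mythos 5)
Your proof is correct and follows essentially the same route as the paper: substitute the Forward Event-Chain velocity law into Proposition~\ref{prop:drift}, control the resulting integral with Lemma~\ref{lem:drift_bound}, and observe that the law of the angular ratio $(v^\top n(x))^2/|v|^2$ under $Q_x$ is $x$-free, so the bound is uniform on the band. You are in fact slightly more careful than the paper's write-up in two small respects: you retain the $\gamma/2$ clamp from Lemma~\ref{lem:drift_bound} inside the expectation (the paper's displayed formula for $\beta$ drops the $\min$, which is an abbreviation rather than what the lemma strictly yields, though harmless since the clamp is bounded away from zero), and you make the choice of the $\varepsilon$-threshold and the final $\beta=\delta/2$ explicit rather than leaving ``$\varepsilon$ sufficiently small'' implicit.
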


\begin{proof}
By Proposition \ref{prop:drift} and Lemma \ref{lem:drift_bound}, if $\varepsilon$ is sufficiently small, we can conclude the inequality holds on $B_\gamma$ with
$$
\beta =1-\frac{1}{1-\varepsilon}~\E\left[\exp\left(-\frac{\underline{\kappa}}{2\overline{\kappa}^2}~\underline{\eta}^2\frac{(v^{\top}n(x))^2}{|v|^2}\right)\right]>0
$$
where the expectation is $x$-free. 
\end{proof}

\begin{proof}[Proof of Theorem \ref{thm:fec}]
Let $X^\varepsilon_t$ be the Forward Event-Chain process, and $T^\varepsilon_n$ be the event times. 
Let
\[
m^\varepsilon(x;\gamma):=\inf\{n\ge 0:\ X^\varepsilon_{T_n^\varepsilon}\in B_\gamma^c\}.
\]

We first prove the claim for $m^\varepsilon(x;\gamma)$ in place of $n^\varepsilon(x;\gamma)$. 
By the exponential drift inequality, we can find a constant $r>1$ such that 
$$
\sup_{x\in B_{\gamma}}\E_x\left[\sum_{n=0}^{m^\varepsilon(x;\gamma)-1}V(X_{T_n^\varepsilon}^\varepsilon)~r^n\right]<C
$$
for some constant $C>0$ which does not depend on $\varepsilon\in (0,\varepsilon')$ for some $\varepsilon'>0$.  
As in the proof of Theorem \ref{thm:bps_refresh}, we obtain 
$$
\mathbb{P}(m^\varepsilon(x;\gamma)\ge M)\le C (r^M-1)^{-1}
$$
for some $C>0$. The right-hand side is smaller than $c>0$
if $M\ge \lceil \log(1+C/c)/\log r\rceil$. 
Thus the probability converges to $0$ as $M\rightarrow\infty$, and the proof for $m^\varepsilon(x;\gamma)$ is completed. 

Now we consider $n^\varepsilon(x;\gamma)$. 
Suppose that $x$ is slightly inside of $B_\gamma$, i.e., 
$x\in B_{\gamma'}$ for some $\gamma'>\gamma$. 
Let 
$$
\delta=\frac{\gamma^{-1}-(\gamma')^{-1}}{M}>0. 
$$
By  the second statement of Proposition \ref{prop:drift}, 
$$
P^\varepsilon(x, A_\delta(x))\le 2~\exp(-\varepsilon^{-1}\delta/2)
$$
for $A_\delta(x)=\{y: U(y)> U(x)+\delta\}$. Thus 
\begin{align*}
\mathbb{P}(X^\varepsilon_{T_n^\varepsilon}\notin L_{\gamma^{-1}}, \exists n=1,\ldots, M-1)&=\mathbb{P}(U(X^\varepsilon_{T_n^\varepsilon})>\gamma^{-1}, \exists n=1,\ldots, M-1)\\
&\le 
\sum_{n=1}^{M-1}\mathbb{P}(X^\varepsilon_{T_n^\varepsilon}\in A_\delta(X^\varepsilon_{T_{n-1}^\varepsilon}))\\
&\le 2(M-1)\exp(-\varepsilon^{-1}\delta/2). 
\end{align*}
Therefore, 
\begin{align*}
\limsup_{\varepsilon\rightarrow 0}\mathbb{P}(m^\varepsilon(x;\gamma)\ge M)&=
\lim\sup_{\varepsilon\rightarrow 0}\mathbb{P}(X^\varepsilon_{T_n^\varepsilon}\in L_\gamma^c \cap L_{\gamma^{-1}}, n=1,\ldots, M-1)\\
&=
\lim\sup_{\varepsilon\rightarrow 0}\mathbb{P}(X^\varepsilon_{T_n^\varepsilon}\in L_\gamma^c, n=1,\ldots, M-1)\\
&\le\limsup_{\varepsilon\rightarrow 0}\mathbb{P}(n^\varepsilon(x;\gamma)\ge M)\le c. 
\end{align*}
\end{proof}

\subsection{Proof of Theorem \ref{thm:cs}}
\label{app:subsec:proof_cs}

Let  $p_n=|\partial_nU(x)|/\sum_{m=1}^N|\partial_m U(x)|$, and let $e_n$ be the $n$-th standard basis vector in $\mathbb{R}^N$
Let $Q_x(\dif v)$ be the law defined by 
picking up index $n$ with probability $p_n$, and then set $v=-s_ne_n$ with $s_n=\operatorname{sgn}(\partial_n U(x))$, i.e., 
$$
Q_x(\dif v)=\sum_{n=1}^Np_n\delta_{-s_ne_n}(\dif v). 
$$
The Markov kernel of the Coordinate Sampler has the form (\ref{eq:kernel}). 

\begin{proposition}\label{prop:cs_dc}
    For $V(x)=\exp(U(x))$,  there exists a constant $\beta >0$ and all sufficiently small $\varepsilon>0$, 
    $$
    P^\varepsilon V(x)-V(x)\le -\beta  V(x)\quad (x\in B_{\gamma})
    $$
   
\end{proposition}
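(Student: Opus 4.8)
The plan is to reduce to the general one-step estimate of Proposition~\ref{prop:drift} and then control the resulting average over the coordinate-selection law $Q_x$ by the directional decay bound of Lemma~\ref{lem:drift_bound}, exploiting the fact that $Q_x$ is forced to place a fixed amount of mass on a coordinate along which $U$ decreases quickly.

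First I would check that $Q_x$ is an admissible velocity law for the kernel form (\ref{eq:kernel}): for $x\in B_\gamma$ one has $x\neq x^*$, so $\nabla U(x)\neq 0$, and whenever an atom $v=-s_n e_n$ carries positive weight $p_n$ it satisfies $\lambda_0(x,v)=v^\top\nabla U(x)=-|\partial_n U(x)|<0$, while coordinates with $\partial_n U(x)=0$ carry zero mass (so the degenerate atom $v=0$ never arises). Hence Proposition~\ref{prop:drift} applies and yields
\[
\frac{P^\varepsilon V(x)-V(x)}{V(x)}\ \le\ \frac{1}{1-\varepsilon}\int_{\mathbb R^N}\exp\bigl(\Lambda_{t(x,v)}\bigr)\,Q_x(\dif v)\ -\ 1,
\]
so it remains to bound the integral by $1-\delta_0$ for a constant $\delta_0>0$ uniform over $x\in B_\gamma$ and independent of $\varepsilon$. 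For each atom $v=-s_n e_n$ we have $|v|=1$ and $(v^\top n(x))^2=q_n:=(\partial_n U(x))^2/|\nabla U(x)|^2$, with $\sum_n q_n=1$, so Lemma~\ref{lem:drift_bound} (with constants calibrated on $B_{\gamma/2}$) bounds the integrand atom-wise by $\exp(-\min\{a q_n,\gamma/2\})$ where $a=\underline\kappa\,\underline\eta^2/(2\overline\kappa^2)$.

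The step I expect to be the real content is the uniform-in-$x$ contraction. The point is that $Q_x$ cannot avoid a fast-decay coordinate: by Cauchy--Schwarz $\sum_m|\partial_m U(x)|\le\sqrt N\,|\nabla U(x)|$, hence $p_n\ge\sqrt{q_n/N}$; and since the $q_n$ sum to one there is an index $n^*$ with $q_{n^*}\ge 1/N$, giving $p_{n^*}\ge 1/N$. Bounding the remaining exponentials by $1$ then gives
\[
\int_{\mathbb R^N}\exp\bigl(\Lambda_{t(x,v)}\bigr)\,Q_x(\dif v)\ \le\ 1-p_{n^*}\bigl(1-e^{-\min\{a/N,\gamma/2\}}\bigr)\ \le\ 1-\delta_0,\qquad \delta_0:=\tfrac1N\bigl(1-e^{-\min\{a/N,\gamma/2\}}\bigr)>0 ,
\]
with $\delta_0$ depending only on $\gamma$, $N$, and the band constants. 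Substituting back, $\dfrac{P^\varepsilon V(x)-V(x)}{V(x)}\le\dfrac{\varepsilon-\delta_0}{1-\varepsilon}\le\varepsilon-\delta_0$, so taking $\varepsilon<\delta_0/2$ and $\beta:=\delta_0/2$ closes the argument (for any fixed $\gamma>0$). The main obstacle, relative to the Forward Event-Chain case treated in Proposition~\ref{prop:fec_dc}, is precisely this uniform lower bound on the weight $p_{n^*}$: there, rotational symmetry of the Gaussian refresh makes the relevant one-dimensional law $x$-free, whereas here the selection probabilities genuinely depend on $x$ and could a priori concentrate on a coordinate with tiny $q_n$; the Cauchy--Schwarz coupling $p_n\ge\sqrt{q_n/N}$ is what ties the two weightings together and prevents the bound from degenerating.
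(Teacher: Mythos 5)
Your proof is correct and follows essentially the same route as the paper's: apply Proposition~\ref{prop:drift}, bound each atom via Lemma~\ref{lem:drift_bound} using $(v^\top n(x))^2=q_n$, and use Cauchy--Schwarz to get $p_n\ge\sqrt{q_n/N}$ together with $q_{n^*}\ge 1/N$ to secure a uniform contraction. Your explicit check that every atom of $Q_x$ with positive weight satisfies $\lambda_0(x,v)<0$, and the explicit final bookkeeping with $\varepsilon<\delta_0/2$, are small refinements of details the paper leaves implicit.
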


\begin{proof}
Let 
$$
q_n=\frac{|\partial_nU(x)|^2}{\sum_{m=1}^N|\partial_m U(x)|^2}. 
$$
By 
Cauchy--Schwartz inequality, we have
$$
\sum_{n=1}^N|\partial_n U(x)|\le N^{1/2}~\left(\sum_{n=1}^N|\partial_n U(x)|^2\right)^{1/2}. 
$$
Thus 
$$
p_n\ge N^{-1/2}\sqrt{q_n}. 
$$
Pick $n^*=\arg\max q_n$. Since $\sum_{n=1}^Nq_n=1$, we have $q_{n^*}\ge 1/N$, and $p_{n^*}\ge 1/N$. 
By  Lemma \ref{lem:drift_bound}, for $c=(\underline{\kappa}/2\overline{\kappa}^2)~\underline{\eta}^2$, we have
\begin{align*}
\int\exp(\Lambda_{t(x,v)} )Q_x(\dif v)&
       \le 
       \sum_{n=1}^Np_n~\exp\left(-\min\left\{c~q_n,\frac{\gamma}{2}\right\}\right)
\end{align*}
 Thus the above sum is bounded above by 
\begin{align*}
        p_{n^*}\exp\left(-\min\left\{c~q_{n^*},\frac{\gamma}{2}\right\}\right)
        +(1-p_{n^*})\le 1-\frac{1}{N}\left(1-\exp\left(-\min\left\{\frac{c}{N},\frac{\gamma}{2}\right\}\right)\right)<1. 
\end{align*}
By Proposition \ref{prop:drift}, the claim follows. 
\end{proof}

\begin{proof}[Proof of Theorem \ref{thm:cs}]
We omit the proof as it is essentially the same as that of Theorem \ref{thm:fec}.     
\end{proof}


\end{document}